\let\@fnsymbol\@arabic 
\newcommand{\di}{\textrm{d}}
\newcommand{\hc}{\mbox{h.c.}}
\let\a=\alpha     \let\g=\gamma     \let\d=\delta     \let\e=\varepsilon
\let\z=\zeta        \let\k=\kappa     \let\l=\lambda
\let\m=\mu    \let\n=\nu                      \let\r=\rho
\let\s=\sigma \let\t=\tau         \let\ph=\varphi   
   \let\o=\omega     
 \let\D=\Delta       \let\L=\Lambda
\def\cE{{\cal E}}\def\cV{{\cal V}}
\def\cC{{\cal C}}\def\cF{{\cal F}}\def\cH{{\cal H}}
\def\cN{{\cal N}}\def\cZ{{\cal Z}}
\def\cR{{\cal R}}\def\cL{{\cal L}}\def\cQ{{\cal Q}}
\def\cD{{\cal D}}\def\cG{{\cal G}}
\def\cO{{\cal O}}\def\cK{{\cal K}}
  \def\v0{{\vec 0}}
\def\bal{{\bar \l}}
\def\bR{\mathbb{R}}
\def\cV{\mathcal{V}}
\def\cF{\mathcal{F}}
\def\cG{\mathcal{G}}
\def\cL{\mathcal{L}}
\def\cN{\mathcal{N}}
\def\cE{\mathcal{E}}
\def\cK{\mathcal{K}}
\def\cH{\mathcal{H}}
\def\eps{\varepsilon}
\def\ph{\varphi}
\def\NNN{\mathbb{N}}  
\def\ZZZ{\mathbb{Z}}
\def\bN{\mathbb{N}}  
\def\bZ{\mathbb{Z}} 
\def\bR{\mathbb{R}}
\def\indic{\hbox{\raise-2pt \hbox{\indbf 1}}}
\def\*{{\hfill\break\null\hfill\break}}
\def\bmedia#1{{\bigl\langle#1\bigr\rangle}}
\def\tende#1{\,\vtop{\ialign{##\crcr\rightarrowfill\crcr
			\noalign{\kern-1pt\nointerlineskip}
			\hskip3.pt${\scriptstyle #1}$\hskip3.pt\crcr}}\,}
\def\otto{\,{\kern-1.truept\leftarrow\kern-5.truept\to\kern-1.truept}\,}
\def\fra#1#2{{#1\over#2}}
\def\wt#1{\widetilde{#1}}
\def\sqt[#1]#2{\root #1\of {#2}}
\def\hc{{\rm h.c.}\,}
\def\wt{\widetilde}
\def\be{\begin{equation}}
	\def\ee{\end{equation}}
\def\bea{\begin{eqnarray}}\def\eea{\end{eqnarray}}
\def\bean{\begin{eqnarray*}}\def\eean{\end{eqnarray*}}
\def\bfr{\begin{flushright}}\def\efr{\end{flushright}}
\def\bc{\begin{center}}\def\ec{\end{center}}
\def\bal{\begin{align}} 
	\def\eal{\end{align}}
\def\spl#1\spl{\[ \begin{split}#1\end{split} \]}
\def\bd{\begin{description}}\def\ed{\end{description}}
\def\Halmos{\hfill\vrule height10pt width4pt depth2pt \par\hbox to \hsize{}}
\newtheorem{theorem}{Theorem}[section]
\newtheorem{prop}{Proposition}[section]
\newtheorem{lemma}[prop]{Lemma} 
\theoremstyle{remark}
\numberwithin{equation}{section}
\def \aa{{\mathfrak a}}
\title{The excitation spectrum of two dimensional Bose gases \\ in the Gross-Pitaevskii regime}
\author{
Cristina Caraci$^{*,}$\footnote{Electronic mail: cristina.caraci@math.uzh.ch}\;, 
Serena Cenatiempo$^{\circ,}$\footnote{Electronic mail: serena.cenatiempo@gssi.it}\;, and Benjamin Schlein$^{*,}$\footnote{Electronic mail: benjamin.schlein@math.uzh.ch} \\[0.2cm]
{\footnotesize $^{*}$Institute of Mathematics, University of Zurich, Winterthurerstrasse 190, 8057 Zurich.}\\
{\footnotesize $^{\circ}$Gran Sasso Science Institute, Viale Francesco Crispi 7, 67100 L'Aquila, Italy
}}
\date{\today}
\begin{document}
\maketitle

\begin{abstract}
We consider a system of $N$ bosons, in the two-dimensional unit torus. We assume particles to interact through a repulsive two-body potential, with a scattering length that is exponentially small in $N$ (Gross-Pitaevskii regime). In this setting, we establish the validity of the predictions of Bogoliubov theory, determining the ground state energy of the Hamilton operator and its low-energy excitation spectrum, up to errors that vanish in the limit $N \to \infty$. 
\end{abstract}

\section{Introduction}

In the past decades, Bose-Einstein condensates (BEC) have emerged as important quantum systems, in view of the precision and flexibility with which they can be manipulated. Experiments on thin films \cite{Bishop} or in highly elongated magnetic and pancake-shaped optical traps (see e.g. \cite[Sect.\,1.6]{B-PhD}) have also pushed forward the study of BEC in low dimensional systems. As a matter of fact, dimensionality plays a crucial role in situations where spontaneous symmetry breaking of continuous symmetries occurs \cite{MW, H}. Hence, it is not surprising that equilibrium properties of the two dimensional Bose gas exhibit significant differences compared with the three dimensional case 
(see e.g. \cite[Chapter 3]{LSSY}, \cite[Chapter 23]{SP}, \cite{CG}). 

In this paper, we are interested in the low energy spectrum of  two dimensional dilute Bose gases, describing systems where both the quantum and thermal motions are frozen in one direction (see \cite{SY,Bossmann} for a discussion of regimes where the confined system has rather a three dimensional character). 
In particular, we consider $N$ bosons moving in the two-dimensional box $\Lambda = [-1/2 ; 1/2]^{ 2}$, with periodic boundary conditions (the two-dimensional unit torus) and described by the Hamilton operator 
\begin{equation}\label{eq:Ham0} 
H_N = \sum_{j=1}^N -\Delta_{x_j} +  \sum_{i<j}^N e^{2N} V (e^N (x_i -x_j))\,,\end{equation}
acting on the Hilbert space $L^2_s (\Lambda^N)$, the subspace of $L^2 (\Lambda^N)$ consisting of functions that are symmetric with respect to permutations of the $N$ particles. Here $V$ is a non-negative, compactly supported and spherically symmetric two body potential. 
The form of the scaled interaction $V_N(x)=e^{2N} V (e^N x)$ is chosen so that the scattering length of $V_N$ is equal to $e^{-N}\aa$, with $\aa$ the scattering length of $V$. 
Indeed in two dimensions and for a potential $V$ with finite range $R_0$ the scattering length is defined by
\be \label{eq:defa0}
\frac{2\pi}{\log(R/\aa)} =\inf_{\phi} \int_{B_R}  \left[ |\nabla \phi|^2 + \frac 1 2 V |\phi|^2   \right] dx 
\ee
where $R > R_0$, $B_R$ is the disk of radius $R$ centered at the origin and the infimum is taken over functions $\phi \in H^1(B_R)$ with $\phi (x)=1$ for all $x$ with $|x|=R$ (see for example \cite[Sect.\,6.2]{LSSY}). 
In the scaling limit defined by \eqref{eq:Ham0}, known as the two dimensional Gross-Pitaevskii regime, we provide an expression for the ground state energy and the low-energy excitation spectrum,  up to errors vanishing as $N \to \infty$, validating the predictions of Bogoliubov theory \cite{B}. In particular we exhibit a proof of the linear dependence of the dispersion of low-lying excitation at low-momenta, a fact which is interpreted in the physics literature as a signature for superfluidity.

Remark that, rescaling lengths, the two dimensional Gross-Pitaevskii regime can be interpreted as describing an extended Bose gas (of particles interacting through the unscaled potential $V$) at a density that is exponentially small in $N$. While the exponential smallness of the density (or, equivalently, of the scattering length) makes it difficult to directly apply our results to physically relevant situations, it 
should be stressed that the Gross-Pitaevskii regime provides a first example of scaling limit in which peculiarities of two-dimensional systems 
can be observed. 

The following theorem is our main result.  
\begin{theorem} \label{thm:main}
	Let $V \in L^3(\bR^2)$ be non-negative, compactly supported and spherically symmetric. The ground state energy $E_N$ of the Hamiltonian $H_N$ defined in \eqref{eq:Ham0} is such that, as $N \to \infty$,  
\be \begin{split} \label{eq:EN}
E_N=\; & 2 \pi (N-1) + \pi^2 \frak{a}^2 \\
& + \fra 12 \sum_{p\in \L^*_+} \Big[\sqrt{|p|^4+8\pi p^2}-p^2-4\pi +\frac{(4\pi)^2}{2p^2} \big(1- J_0(|p|\frak{a}) \big)\Big]+ \cO(N^{-\frac{1}{10} +\delta})
\end{split}\ee
for any $\delta > 0$. Here, we use the notation $\L^*_+=2\pi \ZZZ^2\setminus\{0\}$ and $J_0$ indicates the zero-th order Bessel function of the first kind (taking into account that $J_0 (r) \sim r^{-1/2}$ as $r \to \infty$ and expanding $\sqrt{|p|^4 + 8\pi p^2}$ for large $|p|$, it is easy to check that the sum on the second line converges). 

Moreover, the spectrum of $H_N - E_N$ below a threshold $\zeta > 0$ consists of eigenvalues having the form 
\begin{equation}\label{eq:exc-main}
\sum_{p \in \L^*_+}	n_p \sqrt{|p|^4+8\pi p^2} + \cO(N^{-\frac{1}{10}+\delta} (1+ \zeta^{17}))
\end{equation} 
for any $\delta > 0$. Here $n_p \in \NNN$ for all $p\in \L^*_+$, and $n_p \neq 0$ for finitely many $p\in \L^*_+$ only. 
\end{theorem}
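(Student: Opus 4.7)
The plan is to conjugate $H_N$ with a sequence of unitary transformations in Fock space, following the strategy developed for the three-dimensional Gross--Pitaevskii gas and adapting it to the logarithmic nature of two-dimensional scattering theory. First I would factor out the condensate through a unitary $U_N : L^2_s(\L^N) \to \cF_+^{\le N}$ onto the truncated Fock space of excitations orthogonal to the constant mode $\varphi_0 \equiv 1$, yielding the excitation Hamiltonian $\cL_N = U_N H_N U_N^*$. The quadratic part of $\cL_N$ contains the singular quantities $V_N(0) = \cO(e^{2N})$ and $\hat V_N(0) = \cO(1)$, which must be renormalized to the effective two-dimensional coupling of order $1/N$ through a careful resummation of pair correlations.

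The renormalization is implemented by a generalized Bogoliubov transformation $T_\eta = \exp\!\big(\tfrac{1}{2} \sum_{p\in\L^*_+} \eta_p\, (b_p^* b_{-p}^* - b_p b_{-p})\big)$, where $b_p, b_p^*$ are the modified creation/annihilation operators (bounded by $\sqrt{N}$ and preserving the truncated Fock space) and $\eta_p$ are the Fourier coefficients of $1 - f_\ell$, with $f_\ell$ the solution of the zero-energy scattering equation associated to $V_N$ on a disk $B_\ell$ of intermediate radius $\ell \in (e^{-N}, 1)$. On the image of $T_\eta$ the short-scale pair correlations are absorbed into the reference state and the leading growth of the interaction cancels. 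What remains of $\cG_N := T_\eta^* \cL_N T_\eta$ is a scalar $C_N$ -- whose asymptotic expansion yields $2\pi(N-1) + \pi^2 \aa^2$ up to $o(1)$ -- plus a quadratic part in the $b_p, b_p^*$, residual cubic terms, and errors controlled by moments of the number of excitations $\cN_+$. A further conjugation $e^A$ by an anti-Hermitian cubic operator $A$ is then applied to cancel the residual cubic part against the kinetic energy.

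At this stage the transformed operator reduces, up to controlled errors, to $C_N + \sum_{p\in\L^*_+} \big[F_p\, b_p^* b_p + \tfrac{1}{2} G_p (b_p^* b_{-p}^* + b_p b_{-p})\big]$ with $F_p = p^2 + 8\pi + o(1)$ and $G_p = 8\pi + o(1)$. A standard Bogoliubov rotation with hyperbolic angle $\tanh(2\tau_p) = G_p/F_p$ diagonalizes this form, producing $\sum_p \sqrt{|p|^4 + 8\pi p^2}\, c_p^* c_p$ plus the scalar $\tfrac{1}{2} \sum_p [\sqrt{|p|^4 + 8\pi p^2} - F_p]$. The Bessel correction $(4\pi)^2 (1 - J_0(|p|\aa))/(2 p^2)$ in \eqref{eq:EN} encodes the fact that $\hat V_N(p) = \hat V(p/e^N)$ deviates from its zero-momentum value at momenta $|p| \sim e^N$; expanding to second order in this deviation and taking $N \to \infty$ converts the finite-$N$ corrections into the Bessel integral. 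The spectral statement \eqref{eq:exc-main} then follows from min--max applied to the diagonal form, using that low-energy eigenstates inherit a bound on $\cN_+$ (and higher moments) through the chain of conjugations.

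The main obstacle is establishing sharp a priori moment bounds $\langle \cN_+^k \rangle \lesssim N^{k(1-\varepsilon)}$ on low-energy states and tracking them through each unitary transformation. In two dimensions the scattering solution $f_\ell$ is only logarithmically small, so $\eta_p$ has prefactor of order $1/\log(\ell/\aa) \sim 1/N$ but only $\sim 1/p^2$ decay, and many commutators that are harmless in three dimensions become critical here. Each transformation generates a cascade of error terms that must be absorbed into the kinetic energy or the number operator without blowing up; balancing these bounds, together with an optimal choice of the intermediate cutoff $\ell$, dictates the error size $N^{-1/10+\delta}$ in \eqref{eq:EN} and the large exponent $\zeta^{17}$ in \eqref{eq:exc-main}, the latter reflecting the many iterations of moment bounds required to close the estimate on states of energy up to $\zeta$.
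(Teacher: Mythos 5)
Your sketch correctly identifies the skeleton of the argument (unitary $U_N$, quadratic renormalization $e^B$, cubic conjugation $e^A$, Bogoliubov diagonalization, min--max), but it is missing the two ingredients that the paper singles out as the essential new difficulties of the two-dimensional problem, and without them the proof does not close.

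First, you do not address the quartic term $\cV_N$ after the cubic conjugation. In two dimensions $\cV_N$ is of order one on uncorrelated states even after the $e^B$ and $e^A$ renormalizations, so the diagonalized quadratic form alone cannot furnish a correct \emph{upper} bound: plugging a quasi-free trial state into $\cR_N$ leaves an $\cO(1)$ contribution from $\cV_N$ that wrecks the $N^{-1/10+\delta}$ error. The paper's fix (which it explicitly calls the main novelty) is an additional conjugation $e^D$ by a \emph{quartic} generator, designed so that $e^{-D}\cV_N e^D$ kills the low-momentum block of $\cV_N$; one then uses trial states supported on low momenta, on which the remaining high-momentum quartic piece has vanishing expectation. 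Your proposal, which stops at the cubic stage, has no mechanism to dispose of $\cV_N$.

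Second, you write that the main obstacle is ``establishing sharp a priori moment bounds $\langle \cN_+^k\rangle$ on low-energy states and tracking them through each unitary.'' In fact the paper emphasizes that precisely such higher moment bounds are \emph{not} available here: cubic terms in $\cL_N$ carry energy growing with $N$, which obstructs the Grönwall arguments that yield $\langle(\cN_++1)^k\rangle$ bounds in three dimensions. This is why the lower bound in the paper proceeds through a number-of-particles localization (Lemma \ref{GN-localization}), splitting $\cG_N - E_N^\text{Bog}$ into $f_M(\cG_N-E_N^\text{Bog})f_M + g_M(\cG_N-E_N^\text{Bog})g_M - \cE_M$ with $M\sim N^\epsilon$, and only exploiting a first-moment bound together with the a priori energy bound (Prop.~\ref{lm:hpN}). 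Your route, which presumes propagating high moments, would not go through.

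Two smaller inaccuracies: the renormalized quadratic coefficients satisfy $F_p = p^2 + \widehat\omega_N(p)$, $G_p = \widehat\omega_N(p)$ with $\widehat\omega_N(0)\to 4\pi$ (not $8\pi$); the dispersion $\sqrt{p^4+8\pi p^2}$ comes from $F_p^2 - G_p^2 = p^4 + 2p^2\widehat\omega_N(p)$. And the Bessel factor $J_0(|p|\aa)$ does not arise from the deviation of $\widehat V(p/e^N)$ from its value at $p=0$; it arises from the characteristic function $\widehat\chi_\ell$ entering $\widehat\omega_N$ via the scattering equation, combined with the $\ell$-independence of the quantity $I_\ell = -2\pi\log(\ell/\aa)+\pi^2\ell^2 - 4\pi^2\sum_{p\in\L^*_+} J_0(\ell|p|)/p^2$, which allows one to evaluate the constant at $\ell=\aa$.
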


\medskip

{\it Remarks.} 
\begin{enumerate}[i)]
\item To keep our analysis as simple as possible, we restrict our attention to bosons moving in the two-dimensional unit torus. Our results could be extended to more general trapping potentials, combining the proof of Theorem \ref{thm:main} with ideas from \cite{NNRT,BSS,NT,BSS1}, recently developed in the three dimensional setting. 

\item  To leading order, the first rigorous computation of the ground state energy of a dilute two-dimensional Bose gas has been obtained in \cite{LY2d}. In this paper, the authors considered a system of $N$ particles, moving in a box with side length $L$ and interacting through a two-body potential with scattering length $\frak{a}$. In the thermodynamic limit $N, L \to \infty$ at fixed density $\rho = N / L^3$, they considered the ground state energy per particle, $e (\rho)$, and they proved that 
\[ \Big| e(\rho) - \frac{4\pi \rho}{| \log (\rho \frak{a}^2)|} \Big| \leq  \frac{C \rho}{\left| \log (\rho \frak{a}^2) \right|^{6/5}} \, . \]
Translating to the Gross-Pitaevskii regime (where $\rho = N$ and the scattering length is given by $e^{-N} \frak{a}$), this bound implies that 
\begin{equation}\label{eq:EN-LY} E_N = \frac{4 \pi N^2}{ | \log (N e^{-2N} \frak{a}^2)|} \Big[1 + \cO (N^{-1/5}) \Big] \end{equation} 
which is consistent with the leading order term in (\ref{eq:EN}). The estimate (\ref{eq:EN-LY}) has been extended to general trapping potentials in \cite{LSeY2d}. Recently, also the free energy of a two-dimensional dilute Bose gas at positive temperature has been computed to leading order in \cite{DMS-2d,MS-2d} (thermodynamic limit) and in \cite{DS} (Gross-Pitaevskii regime).

\item It is interesting to compare our bound (\ref{eq:EN}) with the 
second order approximation of the energy per particle in the thermodynamic limit, given by
\be \label{eq:e0rho}
e (\r) = 4 \pi \r \, b \Big( 1 - b  | \log b | +  \Big( \frac 12 + 2\g + \log \pi \Big)  b + o(b)\Big)
\ee
with $b= |\log (\r \aa^2)|^{-1}$ and where $\g = 0.577..$ is Euler's constant.  
This expression, first predicted in  \cite{GroundState2d-1,GroundState2d-2,GroundState2d-3}, has been recently proved, for all positive potentials with finite scattering length, in \cite{FGJMO} (partial results have been previously obtained in \cite{FNRS-2d}, restricting the analysis to quasi-free states). In the Gross-Pitaevskii limit, where $\rho = N$ and $b = (2N - \log N - \log \frak{a}^2)^{-1}$, one can check that (\ref{eq:e0rho}) is consistent with (\ref{eq:EN}) (in the thermodynamic limit, the lattice spacing in $\Lambda^*$ tends to zero, and the sum over $p \in \Lambda^*$ is replaced by an integral, which is convergent because $\lim_{r \to 0} J_0(r)=1$).

\item \label{rmk4} It is interesting to observe that \eqref{eq:EN} and \eqref{eq:exc-main} only depend on the interaction potential through  the term $\pi^2 \frak{a}^2$ and the argument of the Bessel function $J_0$ in the expression for the ground state energy. Observing that the quantity 
\[ - 2\pi \log (\ell / \frak{a}) + \pi^2 \ell^2 - 4 \pi^2 \sum_{p \in \L^*_+} J_0(\ell |p|)/p^2 \]
is independent of the choice of $\ell > 0$ (see the end of the proof of Prop. \ref{prop:gsandspectrum}), we could also rewrite 
\[ E_N = 2\pi (N-1)  + 2\pi \log \frak{a}   + \pi^2 + \frac{1}{2} \sum_{p \in \Lambda^*_+} \Big[ \sqrt{|p|^4+8\pi p^2}-p^2-4\pi +\frac{(4\pi)^2}{2p^2} \big(1- J_0(|p|) \big)\Big]  \]
up to errors $\cO (N^{-1/10+ \delta})$, making the logarithmic form of the dependence of $E_N$ on $\frak{a}$ explicit. 
If we replace the potential $e^{2N}V(e^N .)$ in \eqref{eq:Ham0} with an interaction having scattering length $\aa^{(R)}_{N} = e^{-N/R} \aa$, for some $R > 0$, the estimates in Theorem \ref{thm:main} would depend on $R$; in particular \eqref{eq:EN} would read
\[ \begin{split} \label{eq:EN-R} 
E_N^{(R)} =\; & 2 \pi R (N-1) +  \pi R^2 \log R + \pi^2 \frak{a}^2 R  \\ 
& + \fra 12 \sum_{p\in 2\pi \ZZZ^2 \setminus\{0\}} \Big(\sqrt{p^4+8\pi R p^2}-p^2-4\pi R +\frac{(4\pi R)^2}{2p^2} \big(1- J_0(|p|\frak{a}/\sqrt{R}) \big)\Big)
\end{split}\]
and the dispersion of the low-energy excitations would be given by $\eps^{(R)}(p)= \sqrt{p^4 +8 \pi R \,p^2}$. Approximating the sum with an integral (in the limit of large $R$, after replacing the variable $p$ with $p/\sqrt{R}$), this leads to 
\[ E_N^{(R)} \simeq 2\pi R N + \pi R^2 \Big[ \frac{1}{2} + 2\gamma + \log \big(\pi R \frak{a}^2/2 \big) \Big] \]
which is perfectly consistent with the formula (\ref{eq:e0rho}) obtained in the thermodynamic limit.

\item The assumptions on $V$ are technical; the result is expected to hold true for any positive interaction with finite scattering length 
(in particular bounds compatible with \eqref{eq:EN} and upper bounds matching \eqref{eq:exc-main} for hard core interactions can be obtained  following \cite{FGJMO, BCOPS})
and also, more generally, for a certain class of (not necessarily non-negative) potentials having positive scattering length. The condition $V \in L^3 (\bR^2)$ is used to show some properties of the solution of the scattering equation, in Lemma \ref{lm:propomega}. The restriction to $V \geq 0$ is used to discard certain error terms, when proving lower bounds for the eigenvalues of (\ref{eq:Ham0}). 
\end{enumerate}


The proof of Theorem \ref{thm:main} is based on Fock space methods, recently developed in the three-dimensional setting, to study the dynamics of Bose-Einstein condensates \cite{BDS,BS}  and to investigate the equilibrium properties of dilute gases in the Gross-Pitaevskii regime. In particular, these techniques led to the verification of the predictions of Bogoliubov theory for the ground state energy and the excitation spectrum of three dimensional Bose gas in the Gross-Pitaevskii regime, confined on the unit torus \cite{BBCS4,HST} or by more general trapping potentials \cite{BSS1,NT}.

The starting observation is that, in order to investigate the low-energy properties of Bose gases, 
it is convenient to factor out the Bose-Einstein condensate and to focus on its orthogonal excitations. This suggests to introduce a unitary transformation $U_N$, mapping the $N$-particle 
Hilbert space $L^2_s (\Lambda^{N})$ into the truncated bosonic Fock space 
\be \label{eq;defcF} 
\cF^{\leq N}_+ = \bigoplus_{n\geq0}^N L^2_\perp (\Lambda^n)=  \bigoplus_{n\geq0}^N L^2_\perp (\Lambda)^{\otimes_sn} 
\ee
constructed over the orthogonal complement $L^2_\perp (\L)$ of the condensate wave function $\ph_0$ (defined by $\ph_0 (x) = 1$, for all $x \in \Lambda$). On the Hilbert space $\cF_+^{\leq N}$ we introduce the excitation Hamilton $\cL_N = U_N H_N U_N^*$, given by the sum of a constant and of terms that are quadratic, cubic and quartic in (appropriately defined) modified creation and annihilation operators (see \eqref{eq:calL} below). In the very spirit of the Bogoliubov approximation, we aim at reducing $\cL_N$ to a quadratic (and therefore diagonalizable) Hamiltonian, up to error terms vanishing in the limit of large $N$. To achieve this goal, we conjugate $\cL_N$ with suitable unitary operators, modelling the correlation structure created by the singular two-body interaction. 

The main input for our analysis are the recent results of \cite{CCS}, proving a bound of the form 
\begin{equation}\label{eq:rough} 2\pi N -C \le E_N \le 2\pi N + C \log N \end{equation} 
for the ground state energy and, most importantly, showing that the ground state and low-energy states of (\ref{eq:Ham0}) exhibit complete Bose-Einstein condensation, with at most order $\log N$ excitations. 
This estimate is used here to show that several error terms, emerging from the unitary conjugations can be neglected. 

While this strategy is similar to the one used in the three-dimensional setting (see, for example, \cite{BBCS2,BBCS4,HST,NT,BSS1,BCaS}), the choice of the appropriate unitary transformations and their action strongly depend on the specific problem under consideration. 

Compared with the three-dimensional setting, a first important difference we have to face to prove Theorem \ref{thm:main} is the fact that, in the two-dimensional Gross-Pitaevskii regime, correlations among particles are much stronger. This can already be seen by noticing that the expectation of (\ref{eq:Ham0}) on factorized states is of the order $N^2$, in the limit of large $N$. Hence, correlations among particles are responsible for reducing the ground state energy of (\ref{eq:Ham0}) to a quantity of order $N$. As a consequence, some additional care is required when studying the action of quadratic and cubic transformations that generate the correlation structure characterizing low-energy states. In particular, since cubic terms in the Hamilton operator carry large contributions to the energy (growing with $N$, as $N \to \infty$) we are not able to prove a-priori bounds on moments of the number of excitations (nor on products of the energy with moments of the number of excitations operator), which were important in the three dimensional setting \cite{BBCS4}. To overcome this problem, we are going to apply a localization on the number of particle argument (similarly to the one recently exploited in \cite{NT,HST}), combined with a-priori bounds on the energy of the excitations. A second important difference, compared with the three-dimensional setting, is that even after quadratic and cubic conjugations, the quartic part $\cV_N$ of the (renormalized) excitation Hamiltonian is not negligible on uncorrelated states. While this is not a problem for the derivation of lower bounds ($\cV_N$ is the restriction of the potential energy on the orthogonal complement of $\ph_0$; therefore, it is non-negative), it affects the proof of upper bounds for the eigenvalues of $H_N$. To circumvent this problem, we need to implement an additional unitary transformation, defined by the exponential of a quartic expression in creation and annihilation operators. Through this quartic conjugation, we eliminate the low-momentum part of $\cV_N$. This allows us to show upper bounds for the ground state energy and for low-energy excited eigenvalues of $H_N$ using uncorrelated states with low-momenta. This part is the main novelty of our work. We remark that unitary operators given by the exponential of quartic expressions in creation and annihilation operators have already been used in three dimensions in \cite{ABS}. The action of the quartic operators used here, however, is quite different. In particular, they renormalize the interaction up to contributions which are only negligible on suitable low-momentum states (we will use such low-momentum states as trial states, to prove upper bounds on the eigenvalues of (\ref{eq:Ham0})).

The plan of the paper is as follows. In the next section we introduce the formalism of second quantization and the map $U_N$, factoring out the condensate. Moreover, we define the quadratic transformation $e^B$ and the cubic transformation $e^A$ that allow us to approximate the renormalized excitation Hamiltonian $\cR_N = e^{-A} e^{-B} \cL_N e^B e^A$ by the sum of a quadratic Hamiltonian and of the quartic term $\cV_N$. The action of the unitary operators $e^B, e^A$, the properties of $\cR_N$ and their implications for Bose-Einstein condensation in low-energy states of (\ref{eq:Ham0}) are discussed in Section \ref{sec:renormalized}. Up to this point the analysis is similar to \cite{CCS} (some adaptation is still required, because we need here slightly stronger bounds, compared with those established in \cite{CCS}; for example we need an estimate for the energy of excitations, not only for their number). The real novelty of the present paper is in the Sects. \ref{sec:UB}-\ref{sec:proof}, where we show how to extract order one contributions to the ground state energy (to go from (\ref{eq:rough}) to the much more precise estimate (\ref{eq:EN})) and to compute low-energy excitations. In Sect.\,\ref{sec:UB} we introduce the quartic conjugation $e^D$ and we show how it can be used to get rid of the low-momentum part of $\cV_N$. In Sect.\,\ref{sec:diag}, we diagonalize quadratic Hamiltonians that have been derived in Sect.\,\ref{sec:renormalized} and in Sect.\,\ref{sec:UB} (we will work with two different quadratic Hamiltonians, one for the upper bounds, one for the lower bounds). The results from Sect.\,\ref{sec:renormalized}--\ref{sec:diag} are combined in Sect.\,\ref{sec:proof} to complete the proof of Theorem \ref{thm:main}; for the proof of the lower bounds, we apply here a localization argument.

\medskip

{\it Acknowledgements.} B.S. gratefully acknowledges partial support from the NCCR SwissMAP, from the Swiss National Science Foundation through the Grant ``Bogoliubov theory for bosonic systems'' and from the European Research Council through the ERC-AdG CLaQS. C.C. and S.C. warmly acknowledge  the GNFM Gruppo Nazionale per la Fisica Matematica - INDAM.

\section{The Renormalized Excitation Hamiltonian} \label{sec:renormalized}

We are going to describe excitations of the Bose-Einstein condensate on the truncated Fock space $\cF_+^{\leq N} = \bigoplus_{n=0}^N L^2_{\perp \ph_0} (\Lambda)^{\otimes_s n}$ constructed on the orthogonal complement of the zero-momentum orbital $\ph_0 (x) = 1$ for all $x \in \Lambda$. As first observed in \cite{LNSS}, we can define a unitary map $U_N : L^2_s (\Lambda^N) \to \cF_+^{\leq N}$ by requiring that $U_N \psi_N = \{ \alpha_0, \alpha_1, \dots , \alpha_N \}$, with $\alpha_j \in L^2_\perp (\Lambda)^{\otimes_s j}$ for all $j=0 ,1, \dots , N$, if 
\[ \psi_N= \alpha_0 \ph_0^{\otimes N} + \alpha_1 \otimes_s \ph_0^{\otimes (N-1)} + \dots + \alpha_N\, .\]
By definition, $U_N \psi_N \in \cF_+^{\leq N}$ describes the orthogonal excitations of the condensate, in the many-body state $\psi_N$. 

For any $p,q \in \Lambda^*_+ = 2\pi \bZ^2 \backslash \{ 0 \}$, we find (see \cite[Prop. 4.2]{LNSS}) 
\begin{equation}\label{eq:U-rules}
\begin{split} 
U_N \, a^*_0 a_0 \, U_N^* &= N - \cN_+   \\  
U_N \, a^*_p a_0 \, U_N^* &= a^*_p \sqrt{N-\cN_+ } =: \sqrt{N} b_p^* \\		
U_N \, a^*_0 a_p \, U_N^* &= \sqrt{N-\cN_+ } \, a_p =: \sqrt{N} b_p \\
U_N \, a^*_p a_q \, U_N^* &= a^*_p a_q    \end{split}
\end{equation} 
where $\cN_+ = \sum_{p\in \Lambda^*_+} a_p^* a_p$ is the number of particles operator on $\cF_+^{\leq N}$ and where we introduced modified creation and annihilation operators $b_p^*, b_p$ on $\cF_+^{\leq N}$, satisfying  
\begin{equation}\label{eq:bcomm} \begin{split} [ b_p, b_q^* ] &= \left( 1 - \frac{\cN_+}{N} \right) \delta_{p,q} - \frac{1}{N} a_q^* a_p 
		\\ [ b_p, b_q ] &= [b_p^* , b_q^*] = 0 \,\;\\
		[b_p, a^*_q a_r] &= \d_{p,q} b_r \,\qquad [b^*_p, a^*_q a_r] = - \d_{p,r} b^*_q
\end{split} \end{equation}
for all $p,q \in \Lambda^*$.

With $U_N$, we define the excitation Hamiltonian $\cL_N := U_N H_N U_N^*$, acting on a dense subspace of $\cF_+^{\leq N}$. Expressing \eqref{eq:Ham0} in second quantized form and using (\ref{eq:U-rules}), we find 
\begin{equation}\label{eq:calL}
	\cL_N 
=  \cL^{(0)}_{N} + \cL^{(2)}_{N} + \cL^{(3)}_{N} + \cL^{(4)}_{N}\end{equation}
where
\begin{equation}\label{eq:cLNj} \begin{split} 
		\cL_{N}^{(0)} =\;& \frac 12 \widehat{V} (0) (N-1)(N-\cN_+ ) + \frac 12 \widehat{V} (0) \cN_+  (N-\cN_+ ) \\
		\cL^{(2)}_{N} =\; &\cK + N\sum_{p \in \Lambda_+^*} \widehat{V} (p/e^N) \left[ b_p^* b_p - \frac{1}{N} a_p^* a_p \right] \\ &+ \frac N2 \sum_{p \in \Lambda^*_+} \widehat{V} (p/e^N) \left[ b_p^* b_{-p}^* + b_p b_{-p} \right] \\
		\cL^{(3)}_{N} =\; & \sqrt{N} \sum_{p,q \in \Lambda_+^* :\, p+q \not = 0} \widehat{V} (p/e^N) \left[ b^*_{p+q} a^*_{-p} a_q  + a_q^* a_{-p} b_{p+q} \right] \\
		\cL^{(4)}_{N} =\; & \cV_N  \, .
\end{split} \end{equation}
Here, we defined the Fourier transform of $V$ by  
\[ \widehat{V} (k) = \int_{\bR^2} V (x) e^{-i k \cdot x} dx \] 
for all $k \in \bR^2$, and we introduced the notation 
\be \label{eq:cKVN}
\cK = \sum_{p \in \Lambda^*_+} p^2 a_p^* a_p , \qquad \cV_N = \frac 12\sum_{\substack{p,q \in \Lambda_+^*, r \in \Lambda^*: \\ r \not = -p,-q}} \widehat{V} (r/e^N) a^*_{p+r} a^*_q a_p a_{q+r} \ee
for the kinetic and potential energy operators, restricted to the orthogonal complement of the condensate wave function. In the rest of the paper we are going to use the notation $ \cH_N = \cK + \cV_N$. 

The Hamilton operator $\cL_N$ is the starting point for our analysis. As discussed in the introduction, we are going to conjugate $\cL_N$ by suitable unitary operators to extract large contributions to the energy that are still hidden in $\cL_N^{(3)}, \cL_N^{(4)}$. To construct these unitary operators, we consider the ground state solution $f_\ell$ of the eigenvalue problem    
\be \label{fell-1}
\Big( -\D + \frac 12 V(x) \Big)  f_{\ell}(x) =  \l_{\ell}\, f_{\ell}(x)
\ee
on the ball $|x| \leq e^N \ell$, satisfying Neumann boundary conditions and normalized so that $f_{\ell}(x) = 1$ for $|x|= e^N\ell$ (for simplicity we omit here the $N$-dependence in the notation for $f_\ell$ and for $\lambda_\ell$). We will later choose $\ell = N^{-\alpha}$ with $\a >0$ so that $e^{-N} \ll \ell \ll 1$. The next Lemma (proven in Appendix \ref{App:omega})  collects properties of $f_\ell, \lambda_\ell$ that will be important for our analysis. 
\begin{lemma} \label{lm:propomega}
	Let $V\in L^3(\bR^2)$ be non-negative, compactly supported (with range $R_0$) and spherically symmetric, and denote its scattering length by $\aa$. For any $0<\ell<1/2$, $N$ sufficiently large, let $f_{\ell}$ denote the solution of  \eqref{fell-1}.  Then
	\begin{enumerate}[i)]
		\item 
		\be\label{eq:fell01} 
		0 \leq f_{\ell}(x) \leq 1   \qquad \forall \, |x| \leq e^N \ell\,.
		\ee 
		\item We have
		\be \label{eq:eigenvalue}
		\left |\l_{\ell} - \frac{2}{(e^N\ell)^2 \log(e^N\ell/\aa)}\left( 1 + \frac{3}{4}\fra{1}{\log(e^N\ell/\aa)} \right)  \right| \leq  \frac{C}{(e^N\ell)^2} \frac{1}{\log^3(e^N\ell/\aa)}\,. 
		\ee
		\item There exist a constant $C>0$ such that
		\be \label{eq:intpotf}  
		\left| 
		\int \di x\, V(x) f_{\ell}(x) - \frac{4\pi}{\log(e^N\ell/\aa)}\left( 1 + \fra{1}{2\log(e^N\ell/\aa)} \right)   \right| \leq   \frac{C}{\log^3 (e^N\ell/\aa)}\,. 
		\ee
		\item Let $w_\ell = 1- f_\ell$. Then there exists a constant $C>0$ such that 
		\[ \begin{split} \label{eq:boundsomega}
			|w_{\ell}(x)| &\leq  
                 \left\{ \begin{array}{ll} C \qquad &\text{if } |x| \leq R_0 \\ 
				C  \, \frac{\log(e^N\ell/|x|) }{\log (e^N\ell/\aa)} \hskip 1cm & \text{if } R_0 \leq |x|\leq e^N \ell    \end{array} \right. \\
			|\nabla w_{\ell}(x)| &\leq \frac{C}{\log (e^N\ell /\aa)} \frac 1 {|x| + 1}  \qquad \text{for all } \, |x| \leq e^N \ell\,.
		\end{split}\]
\end{enumerate}
\end{lemma}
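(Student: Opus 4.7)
My plan is to compare $f_\ell$ with the zero-energy scattering solution $f_0$ on $B_R$ (where $R := e^N \ell$), whose properties are essentially explicit. Precisely, let $f_0$ be the unique minimizer of the Dirichlet variational problem defining $\aa$; then $f_0$ solves $(-\Delta + V/2) f_0 = 0$ on $B_R$ with $f_0|_{\partial B_R} = 1$, is radial and monotone, and for $|x| \ge R_0$ takes the explicit form $f_0(x) = \log(|x|/\aa)/\log(R/\aa)$. Integrating the equation with the divergence theorem, together with the explicit radial derivative $\partial_r f_0(R) = 1/(R\log(R/\aa))$, yields the identity $\int V f_0 = 4\pi/\log(R/\aa)$, and sharp pointwise bounds on $w_0 = 1-f_0$ follow directly from the explicit formula.

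For (i), positivity of $f_\ell$ is a standard property of the ground state of a self-adjoint Schrödinger operator (with $V \ge 0$) on a bounded domain with Neumann BC. The bound $f_\ell \le 1$ follows from the radial form of the eigenvalue equation: integrating $(r f_\ell')' = r (V/2 - \lambda_\ell) f_\ell$ from $r$ to $R$ and using $f_\ell'(R) = 0$ yields
\[ r f_\ell'(r) = \int_r^R s\,(\lambda_\ell - V(s)/2)\, f_\ell(s)\, ds, \]
which is strictly positive for $r > R_0$ (where $V$ vanishes and $\lambda_\ell > 0$). Hence $f_\ell$ is monotone increasing outside $\mathrm{supp}(V)$, so its maximum is attained at the boundary and equals $1$.

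To prove (ii) and (iii) I would exploit the identity obtained by multiplying $(-\Delta + V/2)f_\ell = \lambda_\ell f_\ell$ by $f_0$, integrating over $B_R$, and applying Green's identity: the $V$-terms cancel against $(-\Delta + V/2) f_0 = 0$, while the boundary contribution collapses, via $f_0 = f_\ell = 1$ and $\partial_\nu f_\ell = 0$ on $\partial B_R$, to $\int_{\partial B_R} \partial_\nu f_0 = \int V f_0 /2 = 2\pi/\log(R/\aa)$. This yields the exact identity
\[ \int V f_0 \;=\; 2\lambda_\ell \int f_0 f_\ell, \qquad \text{hence} \qquad \lambda_\ell = \frac{2\pi/\log(R/\aa)}{\int f_0 f_\ell}. \]
Expanding $\int f_0 f_\ell = \pi R^2 - \int w_0 - \int w_\ell + \int w_0 w_\ell$, with $\int w_0$ computed explicitly and $\int w_\ell$ controlled via (iv), gives (ii). Part (iii) then follows by integrating the eigenvalue equation directly: the divergence theorem combined with the Neumann BC produces $\int V f_\ell = 2 \lambda_\ell \int f_\ell = 2\lambda_\ell (\pi R^2 - \int w_\ell)$, and inserting the estimates from (ii) and (iv) yields the claimed asymptotic.

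The technical heart of the proof is therefore (iv). The function $w_\ell$ solves $(-\Delta + V/2 - \lambda_\ell) w_\ell = V/2 - \lambda_\ell$ on $B_R$ with $w_\ell(R) = \partial_\nu w_\ell(R) = 0$. Outside $\mathrm{supp}(V)$, the equation reduces to $(-\Delta - \lambda_\ell) w_\ell = -\lambda_\ell$; its radial solutions are combinations of the constant $1$, $J_0(\sqrt{\lambda_\ell}\, r)$ and $Y_0(\sqrt{\lambda_\ell}\, r)$, and the two boundary conditions fix the coefficients. Since $\sqrt{\lambda_\ell}\, R = O(1/\sqrt{N})$, small-argument expansions of the Bessel functions produce the sharp logarithmic pointwise bound asserted in (iv) for $|x| \ge R_0$. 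Inside $\mathrm{supp}(V)$ the bound $|w_\ell| \le C$ follows by the maximum principle once the value at $|x|=R_0$ is controlled, and the hypothesis $V \in L^3(\bR^2)$ enters through the Sobolev embedding ($H^1(B_R) \hookrightarrow L^p$ for every $p < \infty$ in dimension two) needed to handle the term $V w_\ell$ and to extract the pointwise gradient bound through standard elliptic regularity. The main obstacle is the self-referential structure of these estimates: the sharp asymptotic for $\lambda_\ell$ required by (ii) demands precise pointwise control on $w_\ell$, which in (iv) itself depends on $\lambda_\ell$. I would resolve this by a bootstrap: start from a rough bound $\lambda_\ell = O(1/(R^2 \log N))$ obtained by inserting $f_0$ into the Rayleigh quotient, feed it into the Bessel analysis to sharpen the pointwise control of $w_\ell$, and iterate the identity $\int V f_0 = 2\lambda_\ell \int f_0 f_\ell$ to extract $\lambda_\ell$ at the precision $O(1/(R^2 \log^3 N))$ required by the lemma.
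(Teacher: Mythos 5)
Your strategy for (iii)---integrating the eigenvalue equation over $B_R$ with the Neumann boundary condition to obtain $\int V f_\ell = 2\lambda_\ell \int_{B_R} f_\ell$, then expanding $\int f_\ell$---is exactly the paper's (the paper proves only (iii) in this appendix, citing a companion reference for (i), (ii), (iv)). Your identity $\lambda_\ell = 2\pi/\bigl(\log(R/\aa)\int_{B_R} f_0 f_\ell\bigr)$ for (ii), with $R := e^N\ell$, is a clean self-contained alternative. But there is a genuine gap in the step ``with $\int w_\ell$ controlled via (iv).'' The pointwise bound asserted in (iv), $|w_\ell(x)| \le C\log(R/|x|)/\log(R/\aa)$, compares $w_\ell$ to $w_0 = 1-f_0$ only up to a multiplicative constant; that gives the upper bound $\int w_\ell \le C R^2/\log(R/\aa)$, but both (ii) and (iii) require the leading-order \emph{value} of $\int w_\ell$ with relative error $O(1/\log)$, and $w_\ell$ is \emph{not} asymptotic to $w_0$. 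The Neumann condition $\partial_r w_\ell(R)=0$ forces an affine correction to the profile: carrying your own Bessel-function reduction to first nontrivial order gives, for $R_0 \le r \le R$,
\[
w_\ell(r) = \frac{\epsilon_R^2}{4}\Bigl(2\log\frac{R}{r} - 1 + \frac{r^2}{R^2}\Bigr) + O\bigl(\epsilon_R^4\log^2\epsilon_R\bigr), \qquad \epsilon_R^2 := \lambda_\ell R^2 \approx \frac{2}{\log(R/\aa)},
\]
so $w_\ell \approx w_0 - \tfrac{1 - r^2/R^2}{2\log(R/\aa)}$ and $\int_{B_R} w_\ell \approx \tfrac{\pi R^2}{4\log(R/\aa)}$, which is \emph{half} of $\int_{B_R} w_0 \approx \tfrac{\pi R^2}{2\log(R/\aa)}$. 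If you feed only the comparison bound (iv) and the explicit $\int w_0$ into either of your two identities, you will not recover the subleading coefficients $\tfrac34$ and $\tfrac12$. The repair is available within your own plan: promote the Bessel-function analysis from the one-sided bound (iv) to the asymptotic expansion displayed above (this is precisely the explicit expansion that the paper imports from its companion reference and integrates to evaluate $\int f_\ell$). A smaller gap in (i): your monotonicity argument from $(rf_\ell')' = r(V/2-\lambda_\ell)f_\ell$ controls $f_\ell$ only on $(R_0,R)$; for $|x|\le R_0$, where $V/2-\lambda_\ell$ has no definite sign, a separate argument (e.g., comparison with $f_0$) is still needed to conclude $f_\ell\le1$.
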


We rescale the solution of (\ref{fell-1}), setting $f_{N,\ell} (x) := f_\ell (e^N x)$ for $|x| \leq \ell$, and $f_{N,\ell} (x) = 1$ for $x \in \Lambda$, with $|x| > \ell$. Then 
\begin{equation}\label{eq:scat2} \left( -\Delta + \frac{e^{2N}}{2} V (e^N x) \right) f_{N,\ell} (x) = e^{2N} \lambda_\ell f_{N,\ell} (x) \chi_\ell (x) \end{equation} 
with $\chi_\ell$ denoting the characteristic function of the ball $|x| \leq \ell$. Setting $w_{N,\ell} = 1 - f_{N,\ell}$, we find $w_{N,\ell} (x) = w_\ell (e^N x)$, if $|x| \leq \ell$, and $w_{N,\ell} (x) = 0$, if $x \in \Lambda$ and $|x| \geq \ell$ (recall, from Lemma \ref{lm:propomega}, that $w_\ell = 1-f_\ell$). We can then define $\check{\eta} : \L \to \bR$ as
\be \label{eq:defchecketa}
\check{\eta}(x)= - N w_{N,\ell}(x) = -N w_\ell(e^N x)  \,,
\ee
with Fourier coefficients \be \label{eq:defeta}
\eta_p = -N \widehat{w}_{N,\ell} (p) = - N e^{-2N} \widehat{w}_\ell(p/e^N)\,.
\ee
Notice that $\eta_p \in \bR$ (from the radial symmetry of $f_{\ell}$). To express the scattering equation (\ref{eq:scat2}) in terms of the coefficients $\eta_p$, it is useful to introduce the function $\omega_N \in L^\infty (\Lambda)$, defined through the Fourier coefficients 
\begin{equation}  \label{eq:defomegaN}
 \widehat \o_N(p)=2 N e^{2N} \l_\ell \widehat \chi_\ell(p) =  g_N \,  \widehat{\chi}(\ell p)\,, \qquad  g_N =2  N e^{2N}\ell^2 \l_\ell
\end{equation}
for all $p \in \L^*_+$ (here $\widehat \chi_\ell(p)$ and $\widehat \chi(p)$ denote the Fourier coefficients of the characteristic functions of the ball of radius $\ell$ and one respectively, and we used that $\widehat{\chi}_\ell(p) =\ell^2\widehat{\chi}(\ell p)$). Again, we find $\widehat{\o}_N (p) \in \bR$ (by radial symmetry of $\chi_\ell$). In the next lemma, we list some properties of $\check{\eta}$ 
and of $\o_N$. 
\begin{lemma} \label{lm:eta}
Let $V\in L^3(\bR^2)$ be non-negative, compactly supported  and spherically symmetric, and denote its scattering length by $\aa$. For any $0<\ell<1/2$, $N$ sufficiently large, let $\check{\eta}$ and $\o_N$ be defined as in  \eqref{eq:defchecketa} and \eqref{eq:defomegaN}, respectively. Then, we have $|\eta_0| \leq C \ell^2$ and 
$\widehat{\omega}_N (0) = \pi g_N$ with $|g_N| \leq C$, uniformly in $N$. More precisely, we find
\begin{equation}\label{eq:omegahat0} \big| \widehat{\omega}_N (0) - N \| V f_\ell \|_1 \big| \leq C N^{-1}\,. \end{equation} 
Moreover, we have $\widehat{\o}_N (p) \geq 0$ for all $p \in \L^*_+$ with $\ell |p| \leq 1$ and the pointwise bounds 
\[  |\eta_p| \leq \frac{C}{p^2}, \qquad |\widehat{\o}_N (p)| \leq C \min \Big\{ 1 , \frac{1}{(\ell |p| )^{3/2}} \Big\} \]
for all $p \in \L_+^*$. We also have the estimates   
\[
\| \eta \|_2^2 = \| \check{\eta} \|^2 \leq C \ell^2, \qquad  \| \check{\eta}\|^2_{H_1} \leq C N\,.
\]
Finally, for every  $p \in \L^*_+$, we can write (\ref{eq:scat2}) as  
\[ p^2 \eta_p + \frac{N}{2} \big( \widehat{V} (. /e^N) * \widehat{f}_{N,\ell} \big) = \frac{1}{2} \big( \widehat{\o}_N * \widehat{f}_{N,\ell} \big)  \]
or, equivalently, as 
\begin{equation}\label{eq:eta-scat}
	\begin{split} 
		p^2 \eta_p + \frac{N}{2} \widehat{V} (p/e^N) & + \frac{1}{2} \sum_{q \in \Lambda^*} \widehat{V} ((p-q)/e^N) \eta_q \\ &\hspace{2cm} = \frac{1}{2} \widehat{\o}_N (p) + \frac{1}{2N} \sum_{q \in \L^*} \widehat{\o}_N (p-q) \eta_q \,.
\end{split} \end{equation}
\end{lemma}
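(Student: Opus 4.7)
\textbf{Proof plan for Lemma \ref{lm:eta}.} The overall strategy is to transport the pointwise and $L^p$ information on $f_\ell$, $w_\ell$, $\nabla w_\ell$, $\lambda_\ell$ from Lemma \ref{lm:propomega} through the rescaling $x \mapsto e^N x$ and the Fourier definitions $\eta_p = -Ne^{-2N}\widehat{w}_\ell(p/e^N)$ and $\widehat{\omega}_N(p) = g_N\widehat{\chi}(\ell p)$. The only genuinely new input is a double integration by parts (justified by the Neumann condition on $f_\ell$) which turns the scattering equation into the desired relations.

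Starting at $p=0$, both $\eta_0$ and the $L^1$ estimates follow from $|w_\ell(y)| \leq C\log(e^N\ell/|y|)/\log(e^N\ell/\aa)$ together with $\int_0^R r\log(R/r)\,dr = R^2/4$ and $\log(e^N\ell/\aa)\geq cN$, giving $\|w_\ell\|_1 \leq C e^{2N}\ell^2/N$ and hence $|\eta_0| \leq C\ell^2$. The identity $\widehat{\omega}_N(0) = \pi g_N$ is immediate from $\widehat{\chi}(0)=\pi$, and the Neumann eigenvalue estimate \eqref{eq:eigenvalue} yields $|g_N|=2Ne^{2N}\ell^2|\lambda_\ell| \leq C$. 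For the comparison with $N\|Vf_\ell\|_1$, I would integrate \eqref{fell-1} over the disk $B_{e^N\ell}$: by Neumann the Laplacian contribution vanishes, so
\[
\tfrac12\|Vf_\ell\|_1 = \lambda_\ell \int_{B_{e^N\ell}} f_\ell = \lambda_\ell\Big(\pi(e^N\ell)^2 - \int_{B_{e^N\ell}} w_\ell\Big).
\]
Multiplying by $2N$ and using $|\lambda_\ell|\leq C/(e^{2N}\ell^2 N)$ together with the $L^1$ bound on $w_\ell$ gives the claimed $\cO(N^{-1})$ error. Positivity of $\widehat{\omega}_N(p)$ when $\ell|p|\leq 1$ follows from $g_N>0$ and the Bessel representation $\widehat{\chi}(q) = 2\pi J_1(|q|)/|q|$, which is strictly positive on $[0,j_{1,1})$ with $j_{1,1}\approx 3.83$.

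The pointwise decay $|\widehat{\omega}_N(p)|\leq C\min\{1,(\ell|p|)^{-3/2}\}$ is then the standard Bessel bound $J_1(r) = O(r^{-1/2})$ as $r\to\infty$. The estimate $|\eta_p|\leq C/p^2$ is the step that needs most care: since $w_\ell$ extended by zero is $C^1$ on $\bR^2$ (because $w_\ell$ and its normal derivative both vanish on $\partial B_{e^N\ell}$ by Neumann), one can integrate by parts twice to obtain $|k|^2\widehat{w}_\ell(k) = \widehat{-\Delta w_\ell}(k)$ and use $-\Delta w_\ell = -\lambda_\ell f_\ell + \tfrac12 V f_\ell$, whose $L^1$ norm is bounded by $\lambda_\ell \pi(e^N\ell)^2 + \tfrac12\|Vf_\ell\|_1 \leq C/N$ thanks to \eqref{eq:eigenvalue} and \eqref{eq:intpotf}. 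With $k = p/e^N$ this yields $|\widehat{w}_\ell(p/e^N)| \leq Ce^{2N}/(Np^2)$ and hence $|\eta_p|\leq C/p^2$. For the $L^2$ and $H^1$ norms I would substitute $y=e^N x$:
\[
\|\check{\eta}\|^2 = N^2 e^{-2N}\!\int_{B_{e^N\ell}}\!|w_\ell(y)|^2\,dy, \qquad \|\check{\eta}\|_{H^1}^2 = \|\check{\eta}\|^2 + N^2\!\int_{B_{e^N\ell}}\!|\nabla w_\ell(y)|^2\,dy,
\]
and use $\int_0^R r\log^2(R/r)\,dr = R^2/4$ to get $\|\check{\eta}\|^2\leq C\ell^2$, while the gradient bound $|\nabla w_\ell(y)|\leq C/(N(|y|+1))$ combined with $\int_0^{e^N\ell} r/(r+1)^2\,dr \leq C\log(e^N\ell) \leq CN$ gives $\|\check\eta\|_{H^1}^2\leq CN$.

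Finally, the scattering equation in Fourier space comes from Fourier-transforming \eqref{eq:scat2} term by term on the torus. Writing $f_{N,\ell} = 1 - w_{N,\ell}$, using $\widehat{f_{N,\ell}}(p) = \delta_{p,0} + \eta_p/N$ for $p\in\Lambda^*$, turning products into convolutions, and identifying $\widehat{e^{2N}V(e^N\cdot)}(p) = \widehat{V}(p/e^N)$ and $\widehat{\omega}_N = 2Ne^{2N}\lambda_\ell \widehat{\chi}_\ell$, one directly obtains both displayed forms of the scattering equation. The main technical obstacle is the sharp $|\eta_p|\leq C/p^2$ bound: the naive $\|w_\ell\|_\infty\leq C$ is too crude, and the $C^1$-regularity obtained from the Neumann condition together with the cancellation between $\lambda_\ell\|f_\ell\|_1$ and $\tfrac12\|Vf_\ell\|_1$ (both of size $1/N$) is essential.
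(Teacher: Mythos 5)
Your proposal is correct and reconstructs (in more detail) the arguments that the paper compresses into a citation to \cite{CCS} plus a short Bessel computation. All the key steps check out: the substitution $y=e^Nx$ and the logarithmic bounds on $w_\ell,\nabla w_\ell$ for $\|\check\eta\|_2$ and $\|\check\eta\|_{H^1}$; the Bessel representation $\widehat\chi_\ell(p)=2\pi\ell J_1(\ell|p|)/|p|$ together with $J_1\geq 0$ on $[0,2]$ and $|J_1(r)|\leq Cr^{-1/2}$ for the positivity and decay of $\widehat\omega_N$; and the twice-integrated-by-parts bound $|p|^2|\widehat w_\ell(p/e^N)|\leq\|\Delta w_\ell\|_{L^1}\leq C/N$ for $|\eta_p|\leq C/p^2$ (the $C^1$ extension of $w_\ell$ by zero, using Neumann plus the constant boundary value, is the right justification).

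One place where you are arguably slicker than the paper: for (\ref{eq:omegahat0}) the paper invokes the asymptotics in Lemma \ref{lm:propomega} parts (ii) and (iii), which requires the precise second-order coefficients $\tfrac34$ and $\tfrac12$ to produce the $\cO(1/N)$ cancellation. Your alternative — integrating (\ref{fell-1}) over the ball, dropping the Laplacian by Neumann, and obtaining the exact identity $\widehat\omega_N(0)-N\|Vf_\ell\|_1=2N\lambda_\ell\int w_\ell$ — gets the $\cO(1/N)$ bound directly from the crude size estimates $|\lambda_\ell|\leq C/(e^{2N}\ell^2N)$ and $\|w_\ell\|_1\leq Ce^{2N}\ell^2/N$, without ever needing the sub-leading terms in (ii)–(iii). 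That is a genuine simplification. One minor quibble: your closing remark about a "cancellation between $\lambda_\ell\|f_\ell\|_1$ and $\tfrac12\|Vf_\ell\|_1$" being essential for the $|\eta_p|$ bound is misleading — no cancellation is needed there; the triangle inequality already gives $\|\Delta w_\ell\|_{L^1}\leq\tfrac12\|Vf_\ell\|_1+\lambda_\ell\|f_\ell\|_1\leq C/N$ since each summand is individually $\cO(1/N)$. (The exact equality $\lambda_\ell\|f_\ell\|_1=\tfrac12\|Vf_\ell\|_1$ does hold, but it is not used.)
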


\begin{proof}  The bounds for $|\eta_0|$, $|\eta_p|$, $\| \eta \|_2$, $\| \check{\eta} \|_{H^1}$ have been established in \cite[Sect.\,3]{CCS}. The bounds for $\widehat{\o}_N (0)$ are a direct consequence of  Lemma \ref{lm:propomega} (in particular, of parts (ii) and (iii)). To prove that $\widehat{\o}_N (p) \geq 0$ for $p \in \L^*_+$ with $\ell |p| \leq 1$ and to show the estimate for $|\widehat{\o}_N (p)|$, we observe that, denoting by $J_1$ the Bessel function of the first kind of order $1$,  
\be \label{eq:chiellp}
 \widehat{\chi}_\ell (p) = \ell^2 \widehat{\chi} (\ell p) = 2\pi \ell \frac{J_1 (\ell |p|)}{|p|} \,,
\ee
From $0 \leq J_1 (r) \leq C r$ for all $0 \leq r \leq 2$, $|J_1 (r)| \leq C r^{-1/2}$ for all $r \geq 1$, we obtain the claim. 
\end{proof} 

As mentioned above, we choose $\ell = N^{-\alpha}$ so that $\| \eta \|^2, |\eta_0| \leq C N^{-2\a}$ will be small factors. With the coefficients $\eta_p$, introduced in (\ref{eq:defeta})  we define, following \cite{CCS}, the antisymmetric operators 
\be\label{eq:defB} 
	B = \frac{1}{2} \sum_{p\in \L^*_+}  \eta_p \left( b_p^* b_{-p}^* - \hc \right) \, 
\ee
and
\begin{equation}\label{eq:defA} A = \frac1{\sqrt N} \sum_{p, v \in \L^*_+} 
	 \eta_p \big( b^*_{p+v}a^*_{-p}a_v - \text{h.c.}\big)\, . 
\end{equation}
We will consider the unitary operators $e^B$ and $e^A$. For our analysis, it will be important to control the growth of number of particles and energy with respect to the action of $e^B, e^A$; the following lemma is proven in \cite[Sect.\,3-4]{CCS}. 
\begin{lemma} \label{lm:ANgrow}
	Suppose that $B, A$ are defined as in (\ref{eq:defB}) and (\ref{eq:defA}). Then, for any $k\in \bN$ there exists a constant $C >0$ (depending on $k$) such that   
\[ e^{-B} (\cN_+ + 1)^k e^B , \; e^{-A} (\cN_++1)^k e^{A} \leq C (\cN_+ +1)^k\,.   \] 
Moreover, we also have the following bound for the growth of the energy w.r.t. $e^A$ (a similar estimate also holds for the action of $e^B$, but we will not need it in the sequel): 
\[ e^{-sA} \cH_N e^{sA} \leq C \cH_N + C N (\cN_+ +1) \]
holds true on $\cF_+^{\leq N}$, for any $\alpha > 0$ (recall the choice $\ell = N^{-\alpha}$ in the definition (\ref{eq:defeta}) of the coefficients $\eta_p$), for all $\a \geq 1$,  $s \in [0;1]$ and $N \in \bN$ large enough. 
\end{lemma}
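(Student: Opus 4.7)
The plan is a Grönwall argument combined with commutator identities; the key inputs are the bounds $\|\eta\|_2^2 \leq C\ell^2$, $|\eta_p| \leq C/p^2$, and the scattering equation (2.14) of Lemma \ref{lm:eta}, together with the canonical commutation relations (2.6).

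For the number operator estimate, I set $\psi(s) = \langle \xi, e^{-sB} (\cN_+ + 1)^k e^{sB} \xi \rangle$ and differentiate to obtain
\[ \psi'(s) = \langle \xi, e^{-sB} [(\cN_+ + 1)^k, B] e^{sB} \xi \rangle. \]
Since each generator $b_p^* b_{-p}^*$ shifts $\cN_+$ by two units, a telescoping identity yields $[(\cN_+ + 1)^k, b_p^* b_{-p}^*] = b_p^* b_{-p}^* [(\cN_+ + 3)^k - (\cN_+ + 1)^k]$, which is bounded by $C (\cN_+ + 1)^{k-1} b_p^* b_{-p}^*$. Cauchy-Schwarz and summation over $p$ using $\|\eta\|_2^2 \leq C \ell^2$ give $|\psi'(s)| \leq C\psi(s)$, and Grönwall closes the bound. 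The same approach applies to $e^A$, noting that each term $b_{p+v}^* a_{-p}^* a_v$ shifts $\cN_+$ by one unit; the pointwise bound $|\eta_p| \leq C p^{-2}$ is used to control the double sum in $p, v$, and the prefactor $N^{-1/2}$ absorbs the additional $a^*a$ factor via $a^* a \leq \cN_+ \leq N$ on $\cF_+^{\leq N}$.

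For the energy growth, I again differentiate $\phi(s) = \langle \xi, e^{-sA} \cH_N e^{sA} \xi \rangle$ and reduce the problem to controlling the commutator $[\cH_N, A]$. Writing $A = A_1 - A_1^*$ with $A_1 = N^{-1/2} \sum_{p,v} \eta_p b^*_{p+v} a^*_{-p} a_v$, the commutator $[\cK, A_1]$ produces a factor $(p+v)^2 + p^2 - v^2 = 2p^2 + 2 p\cdot v$, whose $p^2$-part combined with $[\cV_N, A_1]$ is designed to telescope via the scattering equation (2.14). Indeed, multiplying the scattering equation by $b^*_{p+v} a^*_{-p} a_v / \sqrt N$ and summing shows that the leading $\cO(N)$ contributions from $[\cK, A] + [\cV_N, A]$ combine into a bounded expression, with remainders of the form $N^{-1/2} \widehat{\omega}_N \ast \widehat f_{N,\ell}$ that are controllable by the $L^\infty$ bounds on $\widehat\omega_N$ in Lemma \ref{lm:eta}. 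The surviving commutator pieces split into: a term proportional to $\cV_N$, a term proportional to $\cK$, and residual cubic/quartic expressions bounded through Cauchy-Schwarz by $\cV_N^{1/2} (\cN_+ + 1)$ or $\cK^{1/2} \sqrt{N} (\cN_+ + 1)^{1/2}$, all of which fit under $C\cH_N + CN(\cN_+ + 1)$ for $\alpha \geq 1$. Grönwall on $[0, s]$ then yields the stated operator inequality.

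The main obstacle is controlling $[\cV_N, A]$, which produces quintic expressions in creation/annihilation operators with kernels $\widehat V((p-q)/e^N)\eta_r$. Without the scattering equation these quintic terms would carry factors growing like $N$, since $\widehat V(0) = \int V$ is of order one but the sums over high momenta are large. The cancellation of these $\cO(N)$ contributions is precisely engineered by choosing the coefficients $\eta_p$ to solve (2.14): the piece $N \widehat V(p/e^N)/2$ of the scattering equation is matched by $[\cV_N, A_1]$, the piece $p^2 \eta_p$ is matched by $[\cK, A_1]$, and what survives is of size $\widehat\omega_N(p) \leq C g_N$, bounded uniformly in $N$. The condition $\alpha \geq 1$ enters in ensuring that $\ell = N^{-\alpha}$ is small enough that the residual sums involving $\widehat\omega_N$ and $\widehat V(\cdot/e^N)$ can be absorbed into $\cH_N + N(\cN_+ + 1)$ without accumulating logarithmic losses. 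Once this cancellation is made explicit, the remaining estimates are routine applications of Cauchy-Schwarz using the bounds from Lemmas \ref{lm:propomega} and \ref{lm:eta}.
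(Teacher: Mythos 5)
Your high-level strategy — Gronwall for the number-operator estimates, followed by Gronwall plus commutator computations and the scattering equation for the energy growth — is the right one and matches the approach in [CCS] to which the paper defers. The two number-operator bounds are fine as you describe them, and the observation that you do need the scattering equation to rewrite $p^2\eta_p$ (because a direct Cauchy--Schwarz on $[\cK,A]$ fails: $\sum_p |p^2\eta_p|^2$ diverges, $\eta$ has no extra decay beyond $|p|^{-2}$) is correct.

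However, your description of the cancellation mechanism in the energy estimate is wrong in a way that matters. You claim that ``the piece $N\widehat V(p/e^N)/2$ of the scattering equation is matched by $[\cV_N,A_1]$'' and that ``what survives is of size $\widehat\omega_N(p)\leq Cg_N$, bounded uniformly in $N$.'' This is not what happens. The cubic terms produced by double contractions in $[\cV_N,A_1]$ carry the convolution coefficient $\tfrac{1}{2\sqrt N}(\widehat V(\cdot/e^N)\ast\eta)(p)$, i.e.\ they match the \emph{convolution} piece $\tfrac12\sum_q\widehat V((p-q)/e^N)\eta_q$ of (2.14), not the $\tfrac{N}{2}\widehat V(p/e^N)$ piece. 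After combining with $[\cK,A_1]$ via the scattering equation, the $\tfrac{N}{2}\widehat V(p/e^N)$ term does \emph{not} cancel against anything — it has nothing to cancel against, precisely because here we are conjugating only $\cH_N=\cK+\cV_N$ and not the full $\cL_N$ (in Prop.\,2.4 it is the cubic $\cL_N^{(3)}=\sqrt N\sum\widehat V(p/e^N)(b^*a^*a+\mathrm{h.c.})$ that absorbs it). What survives is therefore a cubic with coefficient of order $\sqrt N\,\widehat V(p/e^N)$, which one bounds in position space by $\sqrt N\,\|\cV_N^{1/2}\xi\|\,\|\cN_+^{1/2}\xi\|\leq C\cV_N+CN(\cN_++1)$. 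This uncancelled term is precisely the origin of the $CN(\cN_++1)$ error in the stated inequality; if only an $\widehat\omega_N$-sized cubic survived, as you assert, the bound could be improved to $C\cH_N+C(\cN_++1)$, which is false here. You should make explicit that the scattering equation cancels the $(\widehat V\ast\eta)$ contribution between $[\cK,A]$ and $[\cV_N,A]$, while the $\sqrt N\,\widehat V$ cubic remains and is absorbed into the $N(\cN_++1)$ term.

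Your remark that ``without the scattering equation these quintic terms would carry factors growing like $N$'' is also misleading: the quintic (no-contraction) terms in $[\cV_N,A]$ are estimated directly in position space by $\cV_N^{1/2}$-type bounds and do not require the scattering equation; it is the cubic (double-contraction) terms, together with $[\cK,A]$, that need it.
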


With $A,B$, we define the renormalized excitation Hamiltonian 
\be \label{eq:calR}
\cR_{N}= e^{-A} e^{-B} U_N H_N U^*_N e^{B} e^A \,.
\ee
In the next proposition, we collect important properties of $\cR_N$. Part a) isolates the important contributions to $\cR_N$; its proof follows closely the proof of Prop. 4 in \cite{CCS} and is deferred to Appendix \ref{App:propRN}. Part b) and c), on the other hand, are consequences of part a) and will be used to show upper and, respectively, lower bounds on the eigenvalues. 
\begin{prop} \label{prop:RN} 
Let $V\in L^3(\bR^2)$ be compactly supported, pointwise non-negative and spherically symmetric. Let $\cR_{N}$ and $\widehat{\o}_N$ be defined in \eqref{eq:calR} and \eqref{eq:defomegaN}, respectively. Let $\ell =N^{-\a}$ and $\a \geq 5/2$. 
\begin{itemize}
\item[a)] There exists a constant $C>0$ such that 
	\begin{equation}\label{eq:cReff}
		\begin{split} 
			\cR_{N}= &\;  \frac N2 \big(\widehat{V}(\cdot/e^N)*\widehat{f}_{N,\ell}\big)(0) (N-1)\left(1-\frac{\cN_+}{N}\right)  + \frac 12\sum_{p\in \L^*_+} \widehat{\o}_N(p)\eta_p\\
		& + \frac N 2  \big(\widehat{V}(\cdot/e^N)*\widehat{f}_{N,\ell}\big)(0) \,\cN_+ \left( 1 - \frac{\cN_+}N \right) \\
			& +  \widehat \o_N(0) \sum_{p\in \L^*_+}a^*_pa_p \Big(1-\frac{\cN_+}{N} \Big)+  \frac 12 \sum_{p\in \L^*_+}  \widehat{\o}_N(p)\big[ b^*_p b^*_{-p} + b_p b_{-p} \big]  \\
			& +\frac 1 {\sqrt N} \sum_{\substack{r,v\in \L^*_+:\\ r\neq-v} } \widehat{\o}_N(r)\big[ b^*_{r+v}a^*_{-r} a_v + \text{h.c.}\big]  + \cH_N  + \cE_\cR
		\end{split}
	\end{equation} 
where 
	\be \label{eq:ReffE}
	\pm  \cE_\cR \leq C  N^{-1/2} (\log N)^{1/2} (\cH_N +1)  \, , 
	\ee
	for $N \in \bN$ sufficiently large.
\item[b)] Let
\be \label{eq:defC-cR}
C_{\cR}  =\frac N2 \big(\widehat{V}(\cdot/e^N)*\widehat{f}_{N,\ell}\big)(0) (N-1) + \frac 12\sum_{p\in \L^*_+} \widehat{\o}_N(p)\eta_p\,.
\ee
Let $P_L$ be the low-momenta set
\be \label{eq:defPL}
P_L=\{p\in \L^*_+\; : |p| \leq N^{\a+\nu}\}\,,
\ee 
with $\nu \in (0; 1/2)$.  Let 	
\begin{equation}
		\label{eq:defQNL}
		\begin{split}
			Q_{\cR}^{(L)} & := \sum_{p\in P_L} (p^2 + \widehat{\omega}_N (p)) b^*_p b_p   +  \frac 12 \sum_{p\in P_L} \widehat{\omega}_N (p) \big[ b^*_p b^*_{-p} + b_p b_{-p} \big]\,.  
		\end{split}
	\end{equation}
	Then
\begin{equation}\label{eq:prRN-2}
\cR_{N}= C_{\cR} + Q^{(L)}_{\cR} + \sum_{p \in P_L^c} p^2 a_p^* a_p + \cV_N + \cE'_{\cR},\\
\end{equation} 
for an error term $\cE'_{\cR}$ satisfying 
\be \label{eq:cE-cRprimo}
\pm \cE_{\cR}' \leq C \left[ N^{-3\nu/2} + N^{- 1/2} (\log N)^{1/2} \right] (\cN_++1)(\cH_N+1) 
\ee
on $\cF_+^{\leq N}$.
\item[c)] Finally, let $\nu \in (1/6 ; 1/2)$ and $P_L$ as above; then there exists a constant $C$ such that for any $\g\in (0 ; 1/4)$ we have
\be \begin{split} \label{eq:prRN-3} 
\cR_N \geq & \;  C_{\cR} +  \sum_{p \in P_L} \Big((1 - C N^{-\g}) p^2 + \widehat \o_N(p)\Big) b^*_p b_p + \tfrac 1 2 N^{\g}\sum_{p \in \L^*_+\setminus P_L}a^*_pa_p \\
&+ \frac 12 \sum_{p\in P_L} \widehat{\omega}_N (p) \big[ b^*_p b^*_{-p} + b_p b_{-p} \big]  
- C (\log N)N^{\g-1}  (\cN_++1)^2  -  C  N^{-\g} \,.
\end{split}
\ee
\end{itemize} 
\end{prop}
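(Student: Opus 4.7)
Part (a) is deferred to Appendix~\ref{App:propRN} (its proof adapts \cite[Prop.~4]{CCS}). Taking \eqref{eq:cReff} as the starting point, my plan for (b) and (c) is to rearrange the explicit terms in \eqref{eq:cReff} to match the forms \eqref{eq:prRN-2} and \eqref{eq:prRN-3}, and to bound all discrepancies using Lemmas~\ref{lm:propomega}--\ref{lm:eta} together with standard Cauchy--Schwarz arguments.

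For part (b), the constant-like contributions in the first two lines of \eqref{eq:cReff} combine into $C_\cR$ of \eqref{eq:defC-cR} plus a remainder of order $N^{-1}(\cN_++1)^2$, using $N\bigl(\widehat V(\cdot/e^N)*\widehat f_{N,\ell}\bigr)(0) = \widehat\omega_N(0) + \cO(N^{-1})$ from Lemma~\ref{lm:eta}. The diagonal quadratic $\widehat\omega_N(0)\sum_{p\in\L^*_+}a^*_pa_p(1-\cN_+/N)$ is matched with $\sum_{p\in P_L}\widehat\omega_N(p)b^*_pb_p$ by (i) converting $a^*_pa_p(1-\cN_+/N)$ to $b^*_pb_p$ (creating a commutator of size $N^{-1}a^*_pa_p$), (ii) restricting the sum to $p\in P_L$ (the discarded piece is bounded by $N^{-2(\alpha+\nu)}\cK$ since $|p|^2\ge N^{2(\alpha+\nu)}$ on $P_L^c$), and (iii) replacing $\widehat\omega_N(0)$ by $\widehat\omega_N(p)$ via the uniform bounds of Lemma~\ref{lm:eta}. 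The off-diagonal quadratic on $P_L^c$ is controlled by $CN^{-3\nu/2}(\cN_++1)$ using $|\widehat\omega_N(p)|\le C(\ell|p|)^{-3/2}\le CN^{-3\nu/2}$ there. The most delicate task is the cubic term, bounded by a weighted Cauchy--Schwarz argument whose two factors are dominated respectively by $\sum_r|\widehat\omega_N(r)|^2/r^2\le C\log N$ (from the pointwise decay in Lemma~\ref{lm:eta}) times $\cN_+$, and by $\sum_{r,v}r^2 a^*_v a^*_{-r}a_{-r}a_v \le \cN_+\cH_N$; an appropriate choice of the splitting parameter yields the claimed $N^{-1/2}(\log N)^{1/2}(\cN_++1)(\cH_N+1)$ contribution.

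For part (c), I again start from \eqref{eq:cReff} and discard the non-negative $\cV_N$ (using $V\ge 0$) whenever convenient. The cubic term is handled via the Cauchy--Schwarz inequality
\[
\pm\tfrac{1}{\sqrt N}\widehat\omega_N(r)\big(b^*_{r+v}a^*_{-r}a_v + \text{h.c.}\big) \le N^{-\gamma}(r+v)^2 b^*_{r+v}b_{r+v} + \tfrac{N^{\gamma}|\widehat\omega_N(r)|^2}{N(r+v)^2}\, a^*_v a^*_{-r}a_{-r}a_v \,,
\]
whose first term is absorbed into the diagonal kinetic energy on $P_L$, producing exactly the $(1-CN^{-\gamma})p^2$ prefactor of \eqref{eq:prRN-3}, while the second term, summed via $\sum_r|\widehat\omega_N(r)|^2/|r+v|^2\le C\log N$, contributes the $C(\log N)N^{\gamma-1}(\cN_++1)^2$ error. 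The kinetic energy on $P_L^c$ is split as
\[
\sum_{p\in P_L^c}p^2 a^*_pa_p \ge \tfrac 1 2 N^\gamma\sum_{p\in P_L^c}a^*_pa_p + \tfrac 1 2\sum_{p\in P_L^c}p^2 a^*_pa_p,
\]
using $p^2\ge N^{2(\alpha+\nu)}\gg N^\gamma$ on $P_L^c$: the first half provides the high-momentum term in \eqref{eq:prRN-3}, while the second half is used to absorb the off-diagonal quadratic on $P_L^c$ (of size $CN^{-3\nu/2}(\cN_++1)$, admissible because $\nu>1/6$) and any residual cubic contribution with external momenta outside $P_L$, at a total cost absorbed into $CN^{-\gamma}$. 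The matching of $\widehat\omega_N(0)a^*_pa_p(1-\cN_+/N)$ with $\widehat\omega_N(p)b^*_pb_p$ on $P_L$ is carried out as in (b), but in its one-sided (lower bound) form.

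The main technical obstacle I anticipate is the sharp control of the cubic term in (c). The logarithmic bound $\sum_r|\widehat\omega_N(r)|^2/|r+v|^2\le C\log N$ requires exploiting both the pointwise decay $|\widehat\omega_N(r)|\le C(\ell|r|)^{-3/2}$ from Lemma~\ref{lm:eta} beyond the scale $|r|\sim 1/\ell$ and the lattice gap $|r+v|\ge 2\pi$; balancing this logarithmic loss, together with the $N^{-3\nu/2}$ gain on $P_L^c$ and the $N^{-\gamma}$ kinetic fraction used to absorb the cubic term, is precisely what forces the constraints $\alpha\ge 5/2$, $\nu\in(1/6,1/2)$ and $\gamma\in(0,1/4)$ appearing in the statement.
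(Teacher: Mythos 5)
Your part (a) and part (b) outlines match the paper's approach essentially step for step: part (a) is deferred to the appendix, and part (b) is obtained from (a) by (i) replacing $a_p^*a_p$ by $b_p^*b_p$, (ii) restricting to $P_L$, (iii) swapping $\widehat\omega_N(0) \leftrightarrow \widehat\omega_N(p)$, (iv) estimating the off-diagonal quadratic on $P_L^c$ via the $N^{-3\nu/2}$ decay, and (v) bounding the cubic term by a Cauchy--Schwarz whose second factor uses $\sum_r |\widehat\omega_N(r)|/|r|^2 \le C\log N$. That is exactly the route the paper takes.

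Part (c) is where your proposal breaks down, and the gap is concrete. You propose the operator Cauchy--Schwarz
\[
\pm\tfrac{1}{\sqrt N}\widehat\omega_N(r)\bigl(b^*_{r+v}a^*_{-r}a_v + \text{h.c.}\bigr) \le N^{-\gamma}(r+v)^2 b^*_{r+v}b_{r+v} + \tfrac{N^{\gamma}|\widehat\omega_N(r)|^2}{N(r+v)^2}\, a^*_v a^*_{-r}a_{-r}a_v
\]
and assert that (i) summing the first term over $(r,v)$ produces the $(1-CN^{-\gamma})p^2$ prefactor, and (ii) $\sum_r|\widehat\omega_N(r)|^2/|r+v|^2 \le C\log N$ controls the second. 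Neither holds. For (i), $\sum_{r,v}(r+v)^2 b^*_{r+v}b_{r+v}$ is \emph{not} a bounded multiple of $\cK$: for each fixed $p=r+v$ the sum over $r$ multiplies $p^2 b_p^*b_p$ by the number of admissible $r$, which is at least of order $N^{2\alpha}$ (roughly the number of modes where $\widehat\omega_N$ is $O(1)$). You need the kinetic weight attached to a momentum you sum over once, not to $r+v$. For (ii), the logarithmic bound $\sum_p |\widehat\omega_N(p)|/|p|^2 \le C\log N$ of \eqref{eq:omegasup2} works only because the $1/|p|^2$ singularity sits at the same place where $\widehat\omega_N(p)$ has support and the decay $|\widehat\omega_N(p)|\le C(\ell|p|)^{-3/2}$ makes the tail integrable against $|p|^{-2}$; once the denominator is shifted to $|r+v|^2$ the small-$|r+v|$ region is cut off by the lattice gap but there is no cancellation against $\widehat\omega_N(r)$, and the sum is of order $\sum_r|\widehat\omega_N(r)|^2 \sim N^{2\alpha}$, not $\log N$.

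The paper avoids both problems by associating the kinetic weight with $a_{-r}$ rather than $b_{r+v}$ in the Cauchy--Schwarz of \eqref{eq:RN-LB3}: writing $|\widehat\omega_N(r)| = |r|\cdot|\widehat\omega_N(r)|/|r|$, the first factor gives $\sum_r |r|^2\sum_v\|(\cN_++1)^{-1/2}b_{r+v}a_{-r}\xi\|^2 \le \|\cK^{1/2}\xi\|^2$ (the $v$-sum is absorbed by $(\cN_++1)^{-1/2}$) and the second factor uses precisely \eqref{eq:omegasup2}. This yields the single bound $C N^{-1/2}(\log N)^{1/2}\|\cK^{1/2}\xi\|\|(\cN_++1)\xi\|$, which is then split via Young's inequality with parameter $N^{-\gamma}$, producing the kinetic error $CN^{-\gamma}\cK$ and the $(\log N)N^{\gamma-1}(\cN_++1)^2$ error. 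The $(1-CN^{-\gamma})p^2$ coefficient thus comes from absorbing \emph{several} small kinetic errors (the cubic piece, the $\widehat\omega_N(p)\!\to\!\widehat\omega_N(0)$ swap, and the $a^*a\!\to\!b^*b$ conversion), not from a single weighted Cauchy--Schwarz on the cubic term.
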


{\it Remark.} Conjugating with the unitary operators $e^B$ and $e^A$, we effectively replace the interaction $\widehat{V} (p/e^N)$ appearing in (\ref{eq:cLNj}) with the renormalized potentials $(\widehat{V} (./e^N) * \widehat{f}_{N,\ell})$ and $N^{-1} \widehat{\omega}_N$. More precisely, conjugation with $e^B$ renormalizes the off-diagonal quadratic term (second term on the third line of (\ref{eq:cReff})), while the cubic conjugation renormalizes the diagonal quadratic and the cubic terms. Renormalization arises when combining terms in $\cL_N$ with contributions from the commutators $[B,\cL_N]$ and $[A,\cL_N]$. At the same time, this procedure produces new constant terms, reducing $\cL_N^{(0)}$ in (\ref{eq:cLNj}) (a term of order $N^2$) to the first line in (\ref{eq:calR}) (order $N$). After renormalization with $e^B$ and $e^A$, the only term in (\ref{eq:cReff}) still depending on the original potential $\widehat{V} (p/e^N)$ is the quartic term $\cV_N$ (contained in $\cH_N = \cK + \cV_N$). In contrast with the three-dimensional setting, $\cV_N$ is here of order one (on uncorrelated trial states); this is the reason why, to show upper bounds on the eigenvalues of $\cR_N$, we will need an additional conjugation, with a quartic phase.

\begin{proof}[Proof of Prop. \ref{prop:RN}]
As explained above, the proof of part a) is sketched in App. \ref{App:propRN}. 

Part b) follows from part a). In fact, the cubic term appearing on the r.h.s. of \eqref{eq:cReff} can be estimated by 
\be \begin{split} \label{eq:RN-LB3}
		\Big|  \frac 1 {\sqrt N} & \sum_{\substack{r,v\in \L^*_+\\ r \neq -v}}  \widehat{\o}_N(r) \langle \xi, b^*_{r+v} a^*_{-r} a_v\xi \rangle \Big|  \\ & \leq  \frac 1 {\sqrt N} \sum_{\substack{r,v\in \L^*_+\\ r \neq -v}}|\widehat{\o}_N(r) | \| (\cN_++1)^{-1/2}b_{r+v} a_{-r} \xi \|  \|(\cN_++1)^{1/2} a_v \xi \| \\
		&   \leq  \frac 1 {\sqrt N} \, \bigg[ \sum_{\substack{r,v\in \L^*_+\\ r \neq -v}} |r|^2 \| (\cN_++1)^{-1/2}b_{r+v} a_{-r} \xi \|^2 \bigg]^{1/2} 
		\\& \hskip 2cm \times 
		\bigg[ \sum_{\substack{r,v\in \L^*_+\\ r \neq -v}} \frac{|\widehat{\o}_N(r) |^2}{|r|^2}   \| (\cN_++1)^{1/2}a_v \xi \|^2 \bigg]^{1/2}\\
		& \leq  \frac{C (\log N)^{1/2}}{\sqrt N}  \,\| \cK^{1/2}\xi \| \| (\cN_+ +1)\xi\|\,,
	\end{split}\ee
where we used that 
\be \label{eq:omegasup2}
\sum_{p \in \L^*_+} \frac{|\widehat{\o}_N(p)|}{|p|^2} \leq C \log N \,.
\ee
Moreover, we can write 
\be \label{eq:calE-1}
 \sum_{p \in P_L} p^2 a_p^* a_p = \sum_{p \in P_L} p^2 b_p^* b_p + \cE_1 
 \ee
and 
\be  \label{eq:calE-2}
\widehat{\omega}_N (0) \sum_{p \in \L^*_+} a_p^* a_p \left( 1- \frac{\cN_+}{N} \right) = \sum_{p \in P_L} \widehat{\omega}_N (p) b_p^* b_p + \cE_2 \ee
for error terms $\cE_1, \cE_2$ satisfying 
\be \label{eq-calE-12}
 \pm \cE_1, \cE_2 \leq C N^{-1} (\cK + 1) (\cN_+ + 1) 
 \ee
for all $\alpha \geq 1$ (here we used $|\widehat{\omega}_N (p) - \widehat{\omega}_N (0)| \leq C |p|/N^\alpha$ and also the bound $\widehat{\omega}_N (p) \leq C$, to control the contribution from $|p| > N^{\alpha + \nu}$). 

As for the off-diagonal quadratic contribution associated with momenta $p \in P_L^c$, we find, with Lemma \ref{lm:eta}, 
\[ \label{eq:sumomegap2high}
\sum_{p\in P_L^c}\fra{|\widehat\o_N(p)|^2}{p^2}  \leq C\, \sum_{p\in P_L^c}\fra{N^{3\a}}{|p|^5}\leq C N^{-3 \nu}\,.
\]
Hence  
\be \begin{split} \label{eq:off-diag-cR}
|\langle \xi, \sum_{p \in P_L^c} \widehat \omega_N(p) b_p b_{-p}\xi\rangle| &  \leq C \, \Big[ \sum_{p \in P_L^c} \frac{|\widehat \omega_N(p)|^2}{|p|^2} \Big]^{1/2} \Big[\sum_{p \in P_L^c} p^2 \| b_p\xi\|^2\Big]^{1/2} \|(\cN_++1)^{1/2}\xi\| \\ &\leq C N^{-3\nu/2} \| \cK^{1/2} \xi \| \| (\cN_+ + 1)^{1/2} \xi \|\,. 
\end{split}\ee
From Eq. \eqref{eq:RN-LB3} and Eqs.\eqref{eq:calE-1}--\eqref{eq:off-diag-cR}, together with the simple bound 
\[
\Big| \frac N2 \big(\widehat{V}(\cdot/e^N)*\widehat{f}_{N,\ell}\big)(0) \Big[ (N-1) \cN_+ /N- \cN_+ \Big(1 - \cN_+/N\Big) \Big]\Big| \leq \frac{C}{N} \cN^2_+  \, ,  
\]
 we obtain \eqref{eq:prRN-2}. 
 
Finally, we show part c). Again, we start from (\ref{eq:cReff}) and we use (\ref{eq:RN-LB3}) to bound the cubic term and (\ref{eq:off-diag-cR}) to control the off-diagonal  quadratic contribution associated with $p \in P_L^c$. Instead of (\ref{eq:calE-2}), we notice that, since $\widehat \o_N (0) > 0$,
 \[
 \sum_{p \in \L^*_+} \widehat \o_N(0) a^*_p a_p \Big( 1 - \frac{\cN_+}N \Big) \geq   \sum_{p \in P_L} \widehat \o_N(0) b^*_p b_p - \frac C  N \,\cN_+\,.
 \]
This bound, combined with the observation that, by \eqref{eq:defomegaN}, 
 \[
\Big| \sum_{p \in P_L} \big(\widehat \o_N(p)- \widehat \o_N(0) \big) \langle \xi, b^*_p b_p \xi \rangle \Big| \leq C N^{-\a} \sum_{p \in P_L} |p| \| a_p \xi\|^2 \leq C N^{-\a} \| \cK^{1/2}\xi\|^2 
\]
and with $\cV_N \geq 0$ implies that, for any $\g >0$, 
\[ \begin{split}
\cR_N \geq &\; C_\cR +   \sum_{p \in P_L} \widehat \o_N(p) b^*_p b_p  + \frac 12 \sum_{p\in P_L} \widehat{\omega}_N (p) \big[ b^*_p b^*_{-p} + b_p b_{-p} \big] \\
&  + \cK   -  C \Big[ N^{-\g} + N^{-3\nu/2} + N^{-1/2}(\log N)^{1/2}  \Big] (\cK+1)  - \frac{\log N}{N^{1-\g}} \,(\cN_++1)^2\,,
\end{split}\] 
Later on we will need to fix $\g <1/4$ to control the error proportional to $\cN_+^2$. With this restriction and for $\nu \in (1/6 ; 1/2)$ there exists $C$ such that
\[ \begin{split}
\cR_N \geq &\; C_\cR +  \sum_{p \in P_L} \widehat \o_N(p) b^*_p b_p  + \frac 12 \sum_{p\in P_L} \widehat{\omega}_N (p) \big[ b^*_p b^*_{-p} + b_p b_{-p} \big]   + \tfrac 1 2 N^{\g} \hskip -0.2cm\sum_{p \in \L^*_+ \setminus P_L}a^*_pa_p\\
&  +(1- C N^{-\g}) \sum_{p \in P_L} p^2 a^*_p a_p - \frac{\log N}{N^{1-\g}} \,(\cN_++1)^2 - C N^{-\g}\,.
\end{split}\]
Here, we divided the kinetic energy into the sum of two operators; in the one associated with $p \in \L^*_+ \backslash P_L$ we estimated $p^2 \geq N^\gamma$. With $a_p^* a_p \geq b_p^* b_p$, we obtain (\ref{eq:prRN-3}). 
\end{proof}

As shown in \cite[Theorem 1.1]{CCS}, an important consequence of part a) of Prop. \ref{prop:RN} is the emergence of Bose-Einstein condensation for low-energy states, with an optimal control on the number of orthogonal excitations. This also implies an upper bound for the expectation of the operator $\cH_N$, on the excitation vectors associated with low-energy states; this is the content of the next proposition. 
\begin{prop}\label{lm:hpN}
	Let $V \in L^3(\bR^2)$ be non-negative, compactly supported and spherically symmetric. 	Let $\psi_N \in L^2_s (\Lambda^N)$ with $\| \psi_N \| = 1$ belong to the spectral subspace of $H_N$ with energies below $2\pi N + \zeta$, i.e. 
	\begin{equation} \label{eq:psi-space} \psi_N = {\bf 1}_{(-\infty ; 2\pi N + \zeta]} (H_N) \psi_N \, . 
	\end{equation} 
	Let $\xi_N = e^{-A} e^{-B} U_N \psi_N$ be the renormalized excitation vector associated with $\psi_N$. Then for any $\a \geq 5/2$, there exists a constant $C > 0$ such that 
	\begin{equation}\label{eq:hpN} \langle \xi_N, (\cH_N+1) \,\xi_N \rangle \leq C (1+\z)(\log N) \, . \end{equation}
\end{prop}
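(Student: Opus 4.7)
The plan is to combine the decomposition of $\cR_N$ in Proposition \ref{prop:RN}(a) with the spectral hypothesis \eqref{eq:psi-space} and with the BEC estimates of \cite{CCS}. Spectral calculus applied to $\psi_N = \mathbf{1}_{(-\infty;2\pi N+\zeta]}(H_N)\psi_N$ first gives $\langle\psi_N,H_N\psi_N\rangle \leq 2\pi N + \zeta$, and since $U_N$, $e^B$, $e^A$ are unitary this transfers to $\langle\xi_N,\cR_N\xi_N\rangle \leq 2\pi N + \zeta$. The BEC estimates from \cite{CCS} moreover provide $\langle\xi_N,(\cN_++1)^k\xi_N\rangle \leq C_k(1+\zeta)^k$ for any $k\in\bN$, so that the factors of $\cN_+$ arising in our error terms will only contribute polynomial factors in $(1+\zeta)$.

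Next, I would evaluate the scattering equation \eqref{eq:eta-scat} at $p=0$ to obtain $N(\widehat V(\cdot/e^N)*\widehat f_{N,\ell})(0) = \widehat\omega_N(0) + N^{-1}\sum_{q\in\L^*}\widehat\omega_N(q)\eta_q$. Inserting this into \eqref{eq:defC-cR} and using $\widehat\omega_N(0) = 4\pi + O(\log N/N)$ from Lemma \ref{lm:eta}, together with the 2D tail estimate
\[
\sum_{p\in\L_+^*}|\widehat\omega_N(p)||\eta_p| \leq C\sum_{|p|\leq 1/\ell}\frac{1}{|p|^2} + C\sum_{|p|>1/\ell}\frac{(\ell|p|)^{-3/2}}{|p|^2} \leq C\log N
\]
(the logarithm originates from $\sum_{|p|\leq 1/\ell}|p|^{-2} \sim \log(1/\ell) = \alpha\log N$), I conclude $C_\cR = 2\pi N + O(\log N)$. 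I would then split the remaining part of \eqref{eq:cReff} as $\cR_N - C_\cR - \cH_N = \cT_{\cN_+} + \cQ_{\rm od} + \cC + \cE_\cR$, where $\cT_{\cN_+}$ collects the $\cN_+$-dependent terms from lines two and three of \eqref{eq:cReff} (bounded by $C\cN_+$ in operator norm), $\cQ_{\rm od} = \tfrac12\sum_p\widehat\omega_N(p)[b_p^*b_{-p}^* + b_pb_{-p}]$ is the off-diagonal quadratic, and $\cC$ is the cubic operator. The Cauchy--Schwarz argument underlying \eqref{eq:off-diag-cR}, applied to the full sum rather than just $p\in P_L^c$, combined with the companion 2D estimate $\sum_p|\widehat\omega_N(p)|^2/|p|^2 \leq C\log N$ (proved by an analogous split at $|p|\sim 1/\ell$), gives $|\langle\xi_N,\cQ_{\rm od}\xi_N\rangle| \leq C\sqrt{\log N}\,\|\cK^{1/2}\xi_N\|\|(\cN_++1)^{1/2}\xi_N\|$; the cubic bound \eqref{eq:RN-LB3} yields $|\langle\xi_N,\cC\xi_N\rangle| \leq CN^{-1/2}\sqrt{\log N}\,\|\cK^{1/2}\xi_N\|\|(\cN_++1)\xi_N\|$; and \eqref{eq:ReffE} gives $|\langle\xi_N,\cE_\cR\xi_N\rangle| \leq CN^{-1/2}\sqrt{\log N}\,\langle\xi_N,(\cH_N+1)\xi_N\rangle$.

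Rearranging $\langle\xi_N,\cH_N\xi_N\rangle = \langle\xi_N,\cR_N\xi_N\rangle - C_\cR - \langle\xi_N,\cT_{\cN_+}\xi_N\rangle - \langle\xi_N,\cQ_{\rm od}\xi_N\rangle - \langle\xi_N,\cC\xi_N\rangle - \langle\xi_N,\cE_\cR\xi_N\rangle$, using $\|\cK^{1/2}\xi_N\|^2 \leq \langle\xi_N,\cH_N\xi_N\rangle$ (since $\cV_N\geq 0$) together with AM-GM to absorb the $\|\cK^{1/2}\xi_N\|$ contributions into a small multiple of $\langle\xi_N,\cH_N\xi_N\rangle$ on the left-hand side, I obtain $\langle\xi_N,(\cH_N+1)\xi_N\rangle \leq C(1+\zeta)\log N$ for $N$ large. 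The main technical obstacle is the sharp 2D estimate $\sum_p|\widehat\omega_N(p)|^2/|p|^2 \leq C\log N$ (and its companion with $\eta_p$), which reflects the marginally logarithmic character of the 2D scattering problem and is ultimately responsible for the $\log N$ factor in \eqref{eq:hpN}; it appears simultaneously in the computation of $C_\cR$ and in the Cauchy--Schwarz bound on $\cQ_{\rm od}$, and must be tracked consistently.
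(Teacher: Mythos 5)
Your proposal takes essentially the same strategy as the paper: derive a lower bound for $\cR_N$ of the form $\cR_N \geq 2\pi N + c\,\cH_N - C(\log N)(\cN_++1)$, combine it with the spectral hypothesis to bound $\langle\xi_N,\cH_N\xi_N\rangle$ in terms of $\langle\xi_N,(\cN_++1)\xi_N\rangle$, and close with the BEC estimate from \cite{CCS} and Lemma~\ref{lm:ANgrow}. The only difference in presentation is that you re-derive the operator lower bound term by term from \eqref{eq:cReff}, whereas the paper simply cites the operator bound from \cite[Eqs.\,(58)--(59)]{CCS} for $\cR^{\text{eff}}_{N,\alpha}$ and adjusts it via \eqref{eq:ReffE} and \eqref{eq:omegahat0}. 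Your term-by-term bounds (including the logarithmic sums $\sum_p|\widehat\omega_N(p)||\eta_p|\leq C\log N$ and $\sum_p|\widehat\omega_N(p)|^2/|p|^2\leq C\log N$) are correct and do reproduce that input.

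There is, however, one substantive overclaim you should retract. You assert that the BEC estimates from \cite{CCS} provide $\langle\xi_N,(\cN_++1)^k\xi_N\rangle\leq C_k(1+\zeta)^k$ for \emph{any} $k\in\bN$. This is false, and the paper itself explicitly flags this point in the introduction: in the two-dimensional Gross--Pitaevskii regime the cubic terms in the Hamiltonian grow with $N$, so a-priori bounds on higher moments of $\cN_+$ (or on products of the energy with moments of $\cN_+$) are \emph{not} available -- this is precisely the reason the paper later resorts to a localization argument in the number of particles. What \cite[Eq.\,(61)]{CCS} provides, in the form of the operator inequality $\cG_N - E_N^{\text{Bog}} \geq c\,\cN_+ - C$, is only the $k=1$ estimate $\langle\xi_N,(\cN_++1)\xi_N\rangle\leq C(1+\zeta)$. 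Your argument is salvageable because the one place you appear to need $k=2$ (after AM--GM on the cubic term, you get $CN^{-1}(\log N)\|(\cN_++1)\xi_N\|^2$) can be handled by the crude bound $\cN_+\leq N$ on $\cF_+^{\leq N}$, which reduces it to the $k=1$ estimate. You should make that trick explicit rather than appeal to higher-moment bounds that do not exist here; as written the proof rests on a false premise even though the final inequality is correct.

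A minor point: you "evaluate the scattering equation \eqref{eq:eta-scat} at $p=0$", but that relation is only stated for $p\in\L^*_+$; the paper extracts the constant $C_\cR=2\pi N + O(\log N)$ directly from Lemma~\ref{lm:eta} (specifically \eqref{eq:omegahat0} and \eqref{eq:intpotf}) without passing through $p=0$. This is a cosmetic issue but worth cleaning up.
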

\begin{proof}
Combining the bounds \cite[Eqs.\;(58)-(59)]{CCS} for the excitation Hamiltonian $\cR^{\text{eff}}_{N,\a}$ defined in \cite[Eq.\;(47)]{CCS} with the estimate (\ref{eq:ReffE}) (and with the observation that, by \eqref{eq:omegahat0}, $|\cR^{\text{eff}}_{N,\a} - (\cR_N - \cE_\cR)| \leq C$), we conclude that 
\[ \cR_N \geq 2\pi N + \frac{1}{2} \cH_N - C (\log N) (\cN_+ + 1) \]
for any $\a \geq 5/2$ and $N$ large enough. The assumption (\ref{eq:psi-space}), and the definition of $\xi_N$ imply therefore that 
\[ \langle \xi_N, \cH_N \xi_N \rangle \leq 2 \zeta + C (\log N) \langle \xi_N, (\cN_+ + 1) \xi_N \rangle\,. \]
From the condensation estimate \cite[Eq. (61)]{CCS}  and from Lemma \ref{lm:ANgrow}, we conclude that 
\[  \langle \xi_N, \cH_N \xi_N \rangle  \leq  C (\zeta + 1)  (\log N)\,. \]
\end{proof}


\section{Quartic conjugation}
\label{sec:UB}

From (\ref{eq:prRN-2}), it is clear that to prove upper bounds on the eigenvalues of $\cR_N$, we cannot ignore the contributions of $\cV_N$ on the r.h.s. of (\ref{eq:prRN-2}). Instead, we conjugate $\cR_N$ with a quartic phase, which (up to errors that can be neglected) removes the low-momentum part of $\cV_N$, leaving us with an operator whose expectation vanishes on states generated by the action of creation operators $a_p^*$, with $p \in P_L$, the low-momentum set defined in (\ref{eq:defPL}). At the end, this will allow us to show upper bounds for the eigenvalues of $\cR_N$, making use of trial states involving only particles with low momentum. 

We consider the quartic operator 
\begin{equation}\label{eq:defD}
	D := D_1-D_1^*=\frac1{4N} \sum_{\substack{r \in \L^*_+, v,w \in P_L\\ v\neq -r, w\neq r} } 
	\eta_r \big[a^*_{v+r}a^*_{w-r}a_va_w - a^*_{w}a^*_{v}a_{w-r}a_{v+r}\big]\,,
\end{equation}
acting on $\cF^{\leq N}_+$.
Here, $\eta_p$ is defined as in \eqref{eq:defeta} and $P_L = \{ p \in \Lambda^*_+ :\; |p| \leq N^{\alpha + \nu} \}$. 

Since $D$ commutes with the number of particles operator $\cN_+$, we trivially obtain that 
\begin{equation*}\label{eq:DNgrow} e^{-D} (\cN_++1)^k e^{D}  = (\cN_+ +1)^k  \end{equation*}
for all $k \in \bN$. 

We state now two lemmas that will be shown in the next subsections. In the first lemma, we control the action of the quartic transformation on the kinetic energy operator. 
\begin{lemma}\label{lm:aprioriestimateKD} 
Let $\cK$  and $D$ be defined in \eqref{eq:cKVN} and in \eqref{eq:defD}, respectively, with $\a \geq 5/2$ and $\nu \in (0,1/2)$. Let $\k \in \bN$ the smallest integer s.t.  $\k > 4 (\a + \n -1/2 ) $. Then there exists $C > 0$ such that  
\begin{equation}\label{eq:DKD}
e^{-D} \cK e^{D} \leq C  \cK  \big(\cN_++1 \big)^{\k+2}  
	\end{equation}
for $N$ large enough. Moreover, we find 
\begin{equation}\label{eq:DKD-K} \pm \Big[ e^{-D} \cK e^D - \cK \Big] \leq C \frac{(\log N)^{1/2}}{N^{1/2}} \cK (\cN_+ + 1)^{\kappa +3} \, . \end{equation} 
 \end{lemma}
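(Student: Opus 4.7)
The plan is to derive both estimates from a careful quadratic-form bound on the commutator $[\cK, D]$, together with a Grönwall/Duhamel argument applied to $\Phi(s) := e^{-sD}\cK e^{sD}$, exploiting that $[\cN_+, D]=[\cN_+, \cK]=0$ so that $\cN_+$ commutes with $e^{sD}$ and hence acts as a parameter throughout the argument.

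First I would compute $[\cK, D]$ in closed form. Using $[\cK, a^*_p]=p^2 a^*_p$ and $(v+r)^2+(w-r)^2-v^2-w^2=2(r^2+r\cdot(v-w))$, I find
\[
[\cK, D_1]=\frac{1}{2N}\sum_{\substack{r\in\L^*_+,\,v,w\in P_L\\ v\neq-r,\,w\neq r}}\eta_r\,\big(r^2+r\cdot(v-w)\big)\,a^*_{v+r}a^*_{w-r}a_v a_w,
\]
and $[\cK, D]=[\cK, D_1]+[\cK, D_1]^*$. The heart of the proof is to bound this quartic operator in quadratic form on $\cF_+^{\leq N}$. For the piece with coefficient $r^2\eta_r$ I would use $|r^2\eta_r|\leq C$ (from Lemma \ref{lm:eta}) together with $\sum_v\|a_{v+r}a_{w-r}\xi\|^2\leq\|(\cN_++1)^{1/2}a_{w-r}\xi\|^2$, which yields a bound proportional to $N^{-1}\|\cK^{1/2}(\cN_++1)^{j/2}\xi\|\cdot\|(\cN_++1)^{(j+1)/2}\xi\|$ for a modest $j$. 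For the drift pieces with coefficients $\eta_r(r\cdot v)$ and $\eta_r(r\cdot w)$, the low-momentum restriction $v,w\in P_L$ is essential: each factor $|v|$ (resp.\ $|w|$) is bounded by $N^{\alpha+\nu}$ and must be absorbed by extracting a power of $|v|$ against $\cK^{1/2}a_v$ (resp.\ $|w|$ against $\cK^{1/2}a_w$), at the cost of repeated Cauchy--Schwarz splits that generate additional factors of $(\cN_++1)^{1/2}$. Careful tracking shows that each drift factor of $N^{\alpha+\nu-1/2}$ costs half a power of $(\cN_++1)$, and there are essentially four such factors, which is precisely what forces $\kappa>4(\alpha+\nu-1/2)$.

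Given a commutator bound of the schematic form $\pm\langle\xi,[\cK, D]\xi\rangle \leq C\langle\xi,\cK(\cN_++1)^{\kappa+1}\xi\rangle+C\langle\xi,(\cN_++1)^{\kappa+2}\xi\rangle$, the estimate (\ref{eq:DKD}) follows by integrating the resulting differential inequality for $\Phi(s)$ in $s\in[0,1]$. Since $\cN_+\leq N$ on $\cF_+^{\leq N}$ and commutes with $e^{sD}$, this reduces to an operator ODE on each spectral subspace of $\cN_+$; polynomial (rather than exponential) dependence $(\cN_++1)^{\kappa+2}$ follows by Duhamel iteration, using the $N^{-1}$ prefactor in the definition of $D$ to prevent blow-up. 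For the sharper bound (\ref{eq:DKD-K}), I would use the identity $e^{-D}\cK e^D-\cK=\int_0^1 e^{-sD}[\cK, D]e^{sD}\,ds$ together with a refined commutator estimate that extracts the smallness $\|\eta\|_2^2\leq C N^{-2\alpha}$ of Lemma \ref{lm:eta} and the logarithmic bound $\sum_{r\in\L^*_+}|\widehat\o_N(r)|^2/|r|^2\leq C\log N$ of (\ref{eq:omegasup2}); feeding this refined bound into the Duhamel formula and using (\ref{eq:DKD}) to control the conjugation by $e^{sD}$ produces the prefactor $(\log N)^{1/2}/N^{1/2}$ with one extra power of $(\cN_++1)$.

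The main obstacle is the Cauchy--Schwarz bookkeeping for the drift terms $\eta_r(r\cdot v)$ and $\eta_r(r\cdot w)$. Because $\sum_{r\in\L^*_+}|r|^{-2}$ diverges only logarithmically in two dimensions, the decay $|\eta_r|\leq C/|r|^2$ from Lemma \ref{lm:eta} is just barely sufficient, and every factor of $N^{\alpha+\nu}$ generated by the low-momentum cutoff in $P_L$ must be exactly matched against a power of $(\cN_++1)^{1/2}$ before the sums over $r$ are evaluated; it is this balance, and not the Grönwall step itself, that fixes the exponent $\kappa+2$ in (\ref{eq:DKD}) and $\kappa+3$ in (\ref{eq:DKD-K}).
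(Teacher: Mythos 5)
Your overall architecture (compute $[\cK,D]$, bound it in quadratic form, feed the bound into a Duhamel/Gr\"onwall argument, exploit $[\cN_+,D]=0$) matches the paper's. But there is a genuine gap in the step you treat as routine: the quadratic-form bound on the piece of $[\cK,D]$ carrying the coefficient $r^2\eta_r$.

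You propose to bound that piece directly, using only $|r^2\eta_r|\leq C$ from Lemma \ref{lm:eta} together with Cauchy--Schwarz. This cannot work: in two dimensions the coefficient $r^2\eta_r$ is bounded but does not decay as $|r|\to\infty$, and the variables $v+r,\,w-r$ are unrestricted. Whatever weight $X_{r,v,w}$ you put in the Cauchy--Schwarz split, one of the two brackets will contain a sum of the form $\sum_{r}|r^2\eta_r|^2/X$ or $\sum_{r}|r^2\eta_r|/|v+r|^2$, which diverges (at best logarithmically, like $\sum_r |r|^{-2}$, but typically worse). The bound you claim, $N^{-1}\|\cK^{1/2}(\cN_++1)^{j/2}\xi\|\,\|(\cN_++1)^{(j+1)/2}\xi\|$, cannot be extracted this way. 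The missing ingredient is the scattering equation \eqref{eq:eta-scat}: inserting it into $[\cK,D]$ replaces $r^2\eta_r$ by a combination of $(\widehat V(\cdot/e^N)*\widehat f_{N,\ell})(r)$, $N^{-1}(\widehat\omega_N*\widehat f_{N,\ell})(r)$, and the diagonal piece $\propto(\widehat V(\cdot/e^N)*\widehat f_{N,\ell})(0)$, and these renormalized kernels satisfy the summability properties $\sup_v\sum_r |(\widehat V(\cdot/e^N)*\widehat f_{N,\ell})(r)|/|v+r|^2\leq C$ and $\sup_v\sum_r |(\widehat\omega_N*\widehat f_{N,\ell})(r)|/|v+r|^2\leq C\log N$ (see \eqref{eq:intVfNell} and the display below it). Those estimates are what make the sum over $r$ convergent; without the scattering-equation substitution the argument does not close.

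A secondary point: your explanation of the threshold $\k>4(\a+\nu-1/2)$ (\emph{``there are essentially four such drift factors''}) misattributes its origin. In the actual proof $\k$ does not count factors inside a single commutator bound. It is the number of Duhamel iterations of the sharp bound $\pm[\cK,D]\leq C N^{-1/2}(\log N)^{1/2}\cK\cN_+$, chosen so that the accumulated prefactor $(\log N/N)^{\k/2}$ overwhelms the $N^{2(\a+\nu-1/2)}$ factor appearing in the crude a-priori estimate of Lemma \ref{lm:aprioriestimateKD-1} that is used to close the remainder term after $\k$ iterations. You do gesture at a two-stage structure (``using (\ref{eq:DKD}) to control the conjugation by $e^{sD}$''), but the preliminary estimate with the worse $N$-power and fixed $\cN_+$-power, established independently by Gr\"onwall before the iteration starts, is an essential and separate step that your sketch does not make precise.
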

 {\it Remark.} Since $\cN_+$ commutes with $D$, (\ref{eq:DKD}) also implies that 
 \[ e^{-D} \cK (\cN_+ + 1)^j e^D \leq C \cK (\cN_+ + 1)^{\k + j +2} \]
 for all $j \in \bN$. 
 
In the second lemma, we bound the growth of the potential energy operator. 
\begin{lemma}
\label{lm:growthVN}
Let $\cV_N$ and $D$ be defined in \eqref{eq:cKVN} and \eqref{eq:defD}, respectively. Fix $\a \geq 5/2$,  $\nu \in (0,1/2)$,  and let $\k \in \bN$ be the smallest integer such that $\k > 4(\a+\nu-1/2)$. 
Recalling the definition (\ref{eq:defPL}) of the set $P_L$, let 
\begin{equation}\label{eq:defVNL}
\cV_N^{(L)}  =\; \fra{1}{4N}\sum_{\substack{u \in \L^*, v,w \in P_L\\ u\neq -v, w}} \widehat{V}(u/e^N)\big[a^*_{v+u}a^*_{w-u}a_va_w + \hc\big]
\end{equation}
and 
\be \label{eq:defcV-H}
\begin{split} 		\cV_N^{(H)} &= \cV_N - \cV_N^{(L)} \\ &= \fra 12 \sum_{\substack{r \in \L^*, v\in P_L^c\\ w\in P_L\\r\neq -v, w}} \hskip -0.2cm \widehat{V}(r/e^N)\; a^*_{v+r}a^*_{w-r}a_va_w  +\fra 14\sum_{\substack{r \in \L^*, v,w\in P_L^c\\r\neq -v, w}} \hskip -0.2cm \widehat{V}(r/e^N)\; a^*_{v+r}a^*_{w-r}a_va_w+\hc .
\end{split} 	\ee
Then, we have 
\begin{equation}\label{eq:DVD-VH} e^{-D} \cV_N e^D = \cV_N^{(H)} + \cE_{\cV_N} \end{equation} 
where
\[ \pm \cE_{\cV_N} \leq C N^{\nu - 1/2} (\log N)^{1/2} \cK (\cN_+ + 1)^{\k +4} \]
for all $N \in \bN$ large enough. 
\end{lemma}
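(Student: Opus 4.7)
The plan is to use the fundamental theorem of calculus,
\begin{equation*}
e^{-D} \cV_N e^D - \cV_N \;=\; \int_0^1 e^{-sD}\,[\cV_N, D]\,e^{sD}\,ds,
\end{equation*}
and to show that, modulo error operators of the claimed size, the integrand equals $-\cV_N^{(L)}$. Since $\cV_N$ is self-adjoint and $D = D_1 - D_1^*$ is anti-self-adjoint, $[\cV_N, D] = [\cV_N, D_1] + [\cV_N, D_1]^*$, so the task reduces to computing the commutator of the two quartic expressions $\cV_N$ and $D_1$.

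I would expand $[\cV_N, D_1]$ by normal-ordering the resulting product of eight creation and annihilation operators. The fully normal-ordered eight-operator piece cancels in the commutator, leaving six-, four-, and two-operator contributions arising from one, two, and three canonical contractions. The crucial piece comes from contracting the two annihilations $a_p a_{q+r}$ of $\cV_N$ with the two creations $a^*_{v+r'} a^*_{w-r'}$ of $D_1$: after a change of summation variable, this produces
\begin{equation*}
\frac{1}{4N} \sum_{\substack{u \in \Lambda^*\\ v,w \in P_L}} \bigl(\widehat V(\cdot/e^N) * \eta\bigr)(u)\, a^*_{v+u} a^*_{w-u}\, a_v a_w,
\end{equation*}
plus symmetric four-operator terms whose remaining annihilation indices do not lie in $P_L$. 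Substituting the scattering identity (\ref{eq:eta-scat}),
\begin{equation*}
(\widehat V(\cdot/e^N) * \eta)(u) = -N\widehat V(u/e^N) + \widehat\omega_N(u) - 2 u^2 \eta_u + \tfrac{1}{N}(\widehat\omega_N * \eta)(u),
\end{equation*}
the leading term $-N\widehat V(u/e^N)$ cancels the $1/N$ prefactor and, combined with the hermitian conjugate from $[\cV_N, D_1]^*$, reproduces $-\cV_N^{(L)}$ as defined in (\ref{eq:defVNL}). Everything else is collected into the error $\cE_{\cV_N}$.

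The error terms---the subleading $\widehat\omega_N$, $u^2 \eta_u$ and $\tfrac{1}{N}(\widehat\omega_N \ast \eta)$ contributions from the scattering identity, the six- and two-operator Wick contractions, and the symmetric four-operator terms outside $P_L$---are all estimated by Cauchy--Schwarz, using $|\eta_p|\leq C/p^2$, $|\widehat\omega_N(p)|\leq C\min\{1,(\ell|p|)^{-3/2}\}$, $\|\eta\|_2^2 \leq C\ell^2$ and $\sum_p |\widehat\omega_N(p)|^2/p^2 \leq C\log N$ from Lemma~\ref{lm:eta}, combined with the restriction $|v|,|w|\leq N^{\alpha+\nu}$ inherited from $D$, which converts momentum gaps into the factor $N^{\nu - 1/2}(\log N)^{1/2}$. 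To pass from the pointwise commutator to its integral over $s \in [0,1]$, Lemma~\ref{lm:ANgrow} handles the growth of $(\cN_+ + 1)^k$ under $e^{sD}$, and Lemma~\ref{lm:aprioriestimateKD} transfers the required $\cK$ factor through $e^{sD}$ at the cost of an extra $(\cN_+ + 1)^{\kappa+2}$, yielding the $\cK(\cN_+ + 1)^{\kappa+4}$ factor in the bound on $\cE_{\cV_N}$.

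The main obstacle is the combinatorial bookkeeping needed to match prefactors, signs, and momentum restrictions across all Wick contractions so that the cancellation of $\cV_N^{(L)}$ is \emph{exact}, with no hidden leading-order survivor. In particular one has to control the asymmetric four-operator contributions (in which one annihilation index is displaced out of $P_L$) and to verify that the $\eta$-convolutions coming from the scattering equation reassemble cleanly into the definition of $\cV_N^{(L)}$.
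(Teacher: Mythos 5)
Your framework (Duhamel expansion, normal ordering of $[\cV_N,D]$, Cauchy--Schwarz estimates on remainders, Lemmas~\ref{lm:ANgrow} and \ref{lm:aprioriestimateKD} to propagate $\cK$ and $\cN_+$ through $e^{sD}$) is the right one, but there is a genuine gap in your treatment of the dominant four-operator contraction. You propose to insert the scattering identity \eqref{eq:eta-scat} into $\tfrac{1}{4N}\sum_{u}(\widehat V(\cdot/e^N)\ast\eta)(u)\,a^*_{v+u}a^*_{w-u}a_v a_w$, leaving residuals $\widehat\omega_N(u)$, $-2u^2\eta_u$ and $N^{-1}(\widehat\omega_N\ast\eta)(u)$, and you claim the $u^2\eta_u$ contribution is controlled by Cauchy--Schwarz with $|\eta_p|\leq C/p^2$. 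That bound only gives $|u^2\eta_u|\leq C$ uniformly in $u$, so the relevant weight $\sup_v\sum_{u\in\Lambda^*}|u^2\eta_u|/|u+v|^2$ diverges logarithmically in two dimensions, and no redistribution of the $|v|$, $|v+u|$ factors in Cauchy--Schwarz makes the $u$-sum convergent with only the pointwise decay of $\eta$. The paper never creates this residual: in the proof of Lemma~\ref{prop:commDcVN} the coefficient is split directly via $\eta_r=-N\delta_{r,0}+N\widehat f_{N,\ell}(r)$, producing the $-\tfrac14\widehat V(u/e^N)$ term (i.e.\ $-\cV_N^{(L)}$) plus $\tfrac14(\widehat V(\cdot/e^N)\ast\widehat f_{N,\ell})(u)$ and a tiny $\eta_0$ correction; the convolution $\widehat V(\cdot/e^N)\ast\widehat f_{N,\ell}$ has the decay that $u^2\eta_u$ lacks, since $\|\widehat V(\cdot/e^N)\ast\widehat f_{N,\ell}\|_\infty\leq C/N$ and $\|\widehat V(\cdot/e^N)\ast\widehat f_{N,\ell}\|_2\leq Ce^N$ give exactly $\sup_v\sum_u|(\widehat V(\cdot/e^N)\ast\widehat f_{N,\ell})(u)|/|u+v|^2\leq C$ as in \eqref{eq:intVfNell}. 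To repair your route you would have to substitute back $2u^2\eta_u=-N(\widehat V(\cdot/e^N)\ast\widehat f_{N,\ell})(u)+(\widehat\omega_N\ast\widehat f_{N,\ell})(u)$, which is precisely the information that $|\eta_p|\leq C/p^2$ alone does not carry.

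There is also a secondary omission. A single Duhamel step yields $\cV_N + \int_0^1 e^{-sD}(-\cV_N^{(L)}+\cE_{[\cV_N,D]})e^{sD}\,ds$, and $\int_0^1 e^{-sD}(-\cV_N^{(L)})e^{sD}\,ds$ is not $-\cV_N^{(L)}$. Reaching $\cV_N^{(H)}=\cV_N-\cV_N^{(L)}$ requires a second Duhamel expansion in $\cV_N^{(L)}$, which in turn needs the separate estimate $\pm[\cV_N^{(L)},D]\leq CN^{\nu-1/2}(\log N)^{1/2}\cK(\cN_++1)^2$ of \eqref{eq:commVNLowD}; your sketch treats the error propagation through $e^{sD}$ but does not address why $e^{-sD}\cV_N^{(L)}e^{sD}$ can be replaced by $\cV_N^{(L)}$ itself.
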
 
	
Using the last two lemmas, we can describe the action of the quartic transformation on the renormalized excitation Hamiltonian $\cR_N$. Our goal consists in proving that, on low-momentum states, the operator $e^{-D} \cR_N e^D$  is given, up to negligible errors, by a quadratic Hamiltonian which will be later diagonalized in Prop. \ref{prop:gsandspectrum}. 
\begin{prop}
 \label{prop:expD-cRN}
	Let $ \cR_{N}$  be defined as in \eqref{eq:prRN-2} and $D$ defined as in \eqref{eq:defD} with $\a \geq 5/2$ and  $\nu \in (0,1/2)$. Let  $C_\cR$ and $\cQ_\cR^{(L)}$ be defined in \eqref{eq:defC-cR} and \eqref{eq:defQNL}, respectively. Suppose that $\xi_L \in \cF_+^{\leq N}$ is such that $a_p \xi_L = 0$, for all $p \in P_L^c$, with the low-momentum set $P_L$ defined as in (\ref{eq:defPL}). Then, we have  
\begin{equation}\label{eq:expD}
\begin{split}
	\Big| \bmedia{\xi_L, e^{-D}\,\cR_{N} e^D \xi_L } 
		&-  \bmedia{ \xi_L, \big(C_{\cR} + \cQ_\cR^{(L)}\big) \xi_L} \Big| \\ &\leq C \Big[ N^{-3\nu/2} + N^{\nu-1/2} (\log N)^{1/2}\Big] \, \langle \xi_L, \cK \, (\cN+1)^{\k+5} \xi_L \rangle  \,,
\end{split}
\end{equation}
where $\k \in \bN$ is the smallest integer s.t. $\k > 4 (\a+\nu -1/2)$ and $N \in \bN$ is large enough.
\end{prop}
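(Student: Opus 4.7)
The plan is to start from the decomposition provided by part (b) of Proposition~\ref{prop:RN},
\[
\cR_N = C_\cR + Q_\cR^{(L)} + \cK_H + \cV_N + \cE_\cR', \qquad \cK_H := \sum_{p\in P_L^c} p^2 a_p^* a_p,
\]
with $\cE_\cR'$ controlled by \eqref{eq:cE-cRprimo}. After conjugating with $e^D$ and taking expectation on the low-momentum vector $\xi_L$, the observation driving the whole analysis is that every monomial in $\cK_H$ or in $\cV_N^{(H)}$ contains an annihilation operator $a_p$ with $p\in P_L^c$, so both operators annihilate $\xi_L$. I will then estimate separately the four conjugated pieces $e^{-D} Q_\cR^{(L)} e^D$, $e^{-D}\cK_H e^D$, $e^{-D}\cV_N e^D$ and $e^{-D}\cE_\cR' e^D$.

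For $e^{-D}\cV_N e^D$ I would invoke Lemma~\ref{lm:growthVN} directly: it gives $e^{-D}\cV_N e^D = \cV_N^{(H)} + \cE_{\cV_N}$, and the leading term annihilates $\xi_L$ while $\cE_{\cV_N}$ already has the target size. For $e^{-D}\cK_H e^D$, the identity $\cK_H\xi_L = 0$ allows a Duhamel expansion that reduces matters to
\[
\int_0^1 \langle \xi_L, e^{-sD}[\cK_H, D] e^{sD}\xi_L\rangle\,ds.
\]
A direct computation from \eqref{eq:defD} yields
\[
[\cK_H, D] = \frac{1}{4N}\sum_{r,v,w} \eta_r \bigl(|v+r|^2 \mathbf{1}_{v+r\in P_L^c} + |w-r|^2\mathbf{1}_{w-r\in P_L^c}\bigr) \bigl[a^*_{v+r}a^*_{w-r}a_v a_w - \hc\bigr].
\]
Cauchy--Schwarz in $(r,v,w)$, the bound $|\eta_r|\leq C|r|^{-2}$ from Lemma~\ref{lm:eta}, and the growth estimate \eqref{eq:DKD} allow one to absorb the extra kinetic factor $|v+r|^2$ (or $|w-r|^2$) into $\|\cK^{1/2}\,\cdot\,\|^2$; splitting the sum according to whether $|r|\gtrsim N^{\alpha+\nu}$ or the high momentum sits on the constraint $v+r\in P_L^c$ produces the target decay $N^{\nu-1/2}(\log N)^{1/2}\langle \xi_L, \cK(\cN_++1)^{\kappa+4}\xi_L\rangle$, the extra power of $\cN_++1$ compared to Lemma~\ref{lm:growthVN} coming from the Duhamel step.

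The quadratic contribution $e^{-D}Q_\cR^{(L)} e^D$ is treated by a further Duhamel step: the explicit prefactor $1/N$ in $D$, combined with $|\widehat\omega_N(p)|\leq C$ from Lemma~\ref{lm:eta} and the rewriting $b_p^{\#} = a_p^{\#} + \cO(\cN_+/N)$, produces a contribution of order $N^{-1}$ times $\langle \xi_L, \cK(\cN_++1)^{\kappa+5}\xi_L\rangle$ after pairing the $p^2$ factors of $Q_\cR^{(L)}$ with $\cK^{1/2}$-norms via Cauchy--Schwarz, which is trivially within the announced bound. Finally, the error $e^{-D}\cE_\cR' e^D$ is estimated using \eqref{eq:cE-cRprimo}, the invariance of $\cN_+$ under $e^D$, Lemma~\ref{lm:aprioriestimateKD} for $e^{-D}\cK e^D$, and Lemma~\ref{lm:growthVN} for $e^{-D}\cV_N e^D$, reproducing the $N^{-3\nu/2}$ and $N^{-1/2}(\log N)^{1/2}$ contributions in \eqref{eq:expD}. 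The main technical obstacle is the treatment of $[\cK_H, D]$: the two kinetic factors $|v+r|^2,|w-r|^2$ must be carefully traded against the decay of $\eta_r$ and the high-momentum restriction to $P_L^c$, and the sharp exponent $N^{\nu-1/2}(\log N)^{1/2}$ appearing in \eqref{eq:expD} is produced precisely at this stage; all the other contributions are either subleading or structurally identical to those already handled in the proofs of Lemmas~\ref{lm:aprioriestimateKD} and~\ref{lm:growthVN}.
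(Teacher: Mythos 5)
Your decomposition differs from the paper's in a consequential way. You keep $\cQ_\cR^{(L)}$ intact --- with its diagonal coefficient $p^2 + \widehat\omega_N(p)$ --- and split off only $\cK_H := \sum_{p\in P_L^c}p^2 a_p^*a_p$, then assert that the $e^D$-conjugation of $\cQ_\cR^{(L)}$ deviates from it by only $O(N^{-1})$, citing the $1/N$ prefactor of $D$ and $|\widehat\omega_N(p)|\leq C$. This overlooks the $\sum_{p\in P_L}p^2 b_p^*b_p$ part of $\cQ_\cR^{(L)}$: its commutator with $D$ is, up to $O(\cN_+/N)$ corrections, the commutator of $\cK_L := \sum_{p\in P_L}p^2 a_p^*a_p$ with $D$, whose coefficients contain the factor $r^2\eta_r/N$, which satisfies only $|r^2\eta_r|\leq C$ and carries no extra decay in $r$. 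A direct Cauchy--Schwarz over $(r,v,w)\in\L^*_+\times P_L\times P_L$ then produces a $|P_L| \sim N^{2(\alpha+\nu)}$ volume factor, so without further input $[\cK_L, D]$ is of size $N^{2(\alpha+\nu)-1}$, not $N^{-1}$. The scattering equation \eqref{eq:eta-scat} is exactly what the paper uses in the proof of Lemma~\ref{lm:aprioriestimateKD-1} (see \eqref{eq:commcKD-0}) to renormalize $r^2\eta_r$ into summable pieces, and even after this renormalization $[\cK, D]$ is of order $N^{-1/2}(\log N)^{1/2}$ by \eqref{eq:commKD}, not $N^{-1}$. Note also the internal inconsistency: $[\cK_L, D] = [\cK, D] - [\cK_H, D]$, you assign $[\cK_H, D]$ the size $N^{\nu-1/2}(\log N)^{1/2}$, and \eqref{eq:commKD} gives $[\cK, D]$ of size $N^{-1/2}(\log N)^{1/2}$, so the $[\cK_L, D]$-type contribution hidden in $[\cQ_\cR^{(L)}, D]$ cannot be $O(N^{-1})$.

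The paper sidesteps all of this by first moving the $p^2$ diagonal out of $\cQ_\cR^{(L)}$: one sets $\widetilde{\cQ}_\cR^{(L)} := \cQ_\cR^{(L)} - \sum_{p\in P_L}p^2 b_p^*b_p$, which has only the bounded coefficients $\widehat\omega_N(p)$ and whose commutator with $D$ splits into genuinely small terms $Z_1,\dots,Z_6$; the subtracted $\sum_{p\in P_L}p^2 b_p^*b_p$ is combined with $\cK_H$ into the full $\cK$ (up to $O(\cN_+/N)$ corrections put in the error). Then Lemma~\ref{lm:aprioriestimateKD}, which already encodes the scattering-equation renormalization of $[\cK, D]$, handles $e^{-D}\cK e^D$ in one stroke, and on a low-momentum $\xi_L$ one recovers $\langle\xi_L,\cK\xi_L\rangle = \sum_{p\in P_L}p^2\langle\xi_L, a_p^*a_p\xi_L\rangle$, which restores the $p^2$ piece of $\cQ_\cR^{(L)}$. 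Your treatment of $\cV_N$ and of $\cE_\cR'$ matches the paper's, and your observation that $\cK_H\xi_L = \cV_N^{(H)}\xi_L = 0$ is the right starting point; but the proposal breaks at the $\cQ_\cR^{(L)}$ step, and the fix is precisely the paper's regrouping of the $p^2$ diagonal with the rest of the kinetic energy.
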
 

\begin{proof}
From (\ref{eq:prRN-2}), we can write 
\[ \cR_N = C_\cR + \wt{\cQ}_{\cR}^{(L)} + \cK + \cV_N + \wt{\cE}_{\cR} \]
where \[ \wt{Q}_\cR^{(L)} := \cQ_\cR^{(L)}  - \sum_{p \in P_L} p^2b^*_pb_p= \sum_{p \in P_L} \widehat{\omega}_N (p) \Big[ b_p^* b_p + \frac{1}{2} \big( b_p^* b_{-p}^* + b_p b_{-p} \big) \Big] \]
and 
\[ \pm \wt{\cE}_\cR \leq C \left[ N^{-3\nu/2} + N^{- 1/2} (\log N)^{1/2} \right] (\cH_N+1)(\cN_++1) \, .\] 
With Lemma \ref{lm:aprioriestimateKD} and Lemma \ref{lm:growthVN}, we find immediately that 
\begin{equation}\label{eq:DED} | \bmedia{\xi_L, e^{-D} \wt{\cE}_{\cR} e^D \xi_L }| \leq C \big[ N^{-3\nu/2} + N^{-1/2} (\log N)^{1/2}\big] \langle \xi_L , \cK (\cN_++1)^{\k+5}\xi_L\rangle  \,.
\end{equation} 
Moreover, with (\ref{eq:DKD-K}) we obtain (using that $a_p \xi_L = 0$ for all $p \in P_L^c$) 
\be\label{eq:LDKDL} \Big| \langle \xi_L, e^{-D} \cK e^D \xi_L \rangle - \sum_{p \in P_L} p^2  \, \langle \xi_L,  a_p^* a_p  \xi_L \rangle \Big| \leq  C N^{-1/2} (\log N)^{1/2} \langle \xi_L, \cK (\cN_++1)^{\k+3}\xi_L \rangle \ee
and, with (\ref{eq:DVD-VH}), we find 
\be\label{eq:LDVDL}  \langle \xi_L, e^{-D} \cV_N e^{D} \xi_L \rangle \leq C N^{\nu-1/2} (\log N)^{1/2} 
\langle \xi_L, \cK (\cN_+ + 1)^{\k + 4} \xi_L \rangle \, . \ee 
It remains to study $e^{-D} \wt{\cQ}_{\cR}^{(L)} e^D$. Writing 
 $a^*_{v+r}a^*_{w-r}a_v a_w = a^*_{v+r}a_v a^*_{w-r}a_{w} - a^*_{v+r}a_w \d_{w, r+v}$ and using \eqref{eq:bcomm},  we find   
\[ [\wt{\cQ}_\cR, D]=  \sum_{i=1}^6 Z_i \]
with 
\[ \begin{split} 
Z_1 = &\; \fra{1}{2N}\; \sum_{\substack{r\in \L^*_+,\,v,w \in P_L\\r+v \in P_L,\,r\neq w}}  \widehat{\o}_N(v+r)\eta_r\,\big(b_{v+r}^*a^*_{w-r} a_w b_v + \hc \big)\\
Z_2 = &\; - \fra{1}{2N}\; \sum_{\substack{r\in \L^*_+\\v,w \in P_L\\r\neq -v,w }}  \widehat{\o}_N(v) \eta_r\,\big(b_{v+r}^*a^*_{w-r} a_w b_v + \hc\big) \\		
Z_3 = &\; \fra 1{2N}\sum_{\substack{r \in \L^*_+, v,w \in P_L\\ v+ r \in P_L,\, w\neq r}} \widehat{\o}_N(v+r) \eta_r\, ( b^*_{-r-v}b^*_v a^*_{w}a_{w-r} +\hc) \\
Z_4 = &\; \fra 1{ 4N}\sum_{\substack{r \in \L^*_+, v \in P_L\\ v+r \in P_L}} \widehat{\o}_N(v+r) \eta_r\, ( b_{v} b_{-v} +\hc) \\
Z_5 = &\; - \fra 1{2 N}\sum_{\substack{r \in \L^*_+, v,w \in P_L\\ r \neq -v, w}} \widehat{\o}_N(v) \eta_r\, ( b^*_{v+r}b^*_{-v} a^*_{w-r}a_{w} +\hc) \\
Z_6 = &\; - \fra 1{4 N}\sum_{\substack{r \in \L^*_+, v \in P_L\\ v \neq r}} \widehat{\o}_N(v) \eta_r\, ( b_{v-r} b_{-v+r} +\hc)\,.	
\end{split}\]

We can estimate 
\be \label{eq:Ki}
\pm Z_i \leq   C \frac{(\log N)^{1/2}}{N^{1/2}}  \cK (\cN_++1)\,.
\ee
for all $i=1, \dots , 6$. Indeed
	\[
	\begin{split}
 |\langle \xi, Z_1\;\xi\rangle|,  |\langle \xi, Z_2 \;\xi\rangle| &\leq \fra CN\,   \| \widehat{\o}_N \|_\infty \,\sum_{\substack{r\in \L^*_+\\v,w \in P_L\\r\neq -v,w }} |\eta_r|\|a_{v+r}a_{w-r}\xi\|\|a_va_w \xi\|\\
		& \leq \fra CN \; \Big[\sum_{\substack{r\in \L^*_+\\v,w \in P_L\\r\neq -v,w }}  \fra 1{|v|^2}\|a_{v+r}a_{w-r} \xi \|^2\Big]^{1/2}\Big[\sum_{\substack{r\in \L^*_+\\v,w \in P_L\\r\neq -v,w }} |\eta_r|^2 |v|^2\|a_{v}a_{w}\xi \|^2\Big]^{1/2}\\
		& \leq C N^{-1-\a}(\log N)^{1/2}\|\cN_+ \xi\|\|\cK^{1/2}(\cN_++1)^{1/2}\xi\|\,.
	\end{split}
	\]
	As for $Z_3$ we get
	\[
	\begin{split}
		|\langle \xi, Z_3\;\xi\rangle| 		&\leq \fra C{N} \,  \| \widehat{\o}_N \|_\infty \sum_{\substack{r \in \L^*_+, v,w \in P_L\\ v\neq -r, w\neq r}}|\eta_r|\|(\cN_++1)^{-1/2}b_{-v-r}b_va_{w}\xi\|\|(\cN_++1)^{1/2}a_{w-r}\xi\|\\
		&\leq \fra C{N}\Big[\sum_{\substack{r \in \L^*_+, v,w \in P_L\\ v\neq -r, w\neq r}}\fra 1{|v|^2}|\eta_r|^2\|a_{w-r}\cN_+^{1/2}\xi\|^2\Big]^{1/2}\\
		&\hspace{3cm}\times\Big[\sum_{\substack{r \in \L^*_+, v,w \in P_L\\ v\neq -r, w\neq r}} |v|^2\|(\cN_++1)^{-1/2}  b_{-v-r} b_v a_{w}\xi\|^2\Big]^{1/2}\\
		&\leq CN^{-\a-1}(\log N)^{1/2} \|\cN_+ \xi\|\|\cK^{1/2}(\cN_++1)^{1/2}\xi\|\,. 
	\end{split}
	\]
	Finally, using that $|\eta_r|\leq |r|^{-2}$, together with \eqref{eq:omegasup2}, we end up with
	\[
	\begin{split}
		|\langle \xi, Z_4\;\xi\rangle| &\leq \fra C{N}\sum_{\substack{r \in \L^*_+, v \in P_L\\ v\neq -r}} \big|\widehat{\o}_N(v+r) \big||\eta_r|\|b_{v}\xi\|\|(\cN_++1)^{1/2}\xi\|\\
		&\leq \frac{C}{N}\sum_{\substack{r \in \L^*_+, v \in P_L\\ v\neq -r}} \frac{\big|\widehat{\o}_N(v+r) \big|}{|r|^2} \|b_{v}\xi\|\|(\cN_++1)^{1/2}\xi\|\\
		& \leq C \frac{\log N}{N} \Big[\sum_{\substack{v \in P_L}} |v|^2 \|b_{v}\xi\|^2\Big]^{1/2}\Big[\sum_{\substack{v \in P_L}} \frac{1}{|v|^2} \Big]^{1/2}\|(\cN_++1)^{1/2}\xi\|\\
		&\leq C\frac{(\log N)^{3/2}}{N}\, \| \cK^{1/2}\xi\| \|(\cN_++1)^{1/2}\xi\|\, .
	\end{split}
	\]
The terms $Z_5$ and $Z_6$ can be bounded similarly. With Lemma 
\ref{lm:aprioriestimateKD}, we conclude that 
\begin{equation}
\label{eq:quartic-cQ}
	e^{-D}\,\wt{\cQ}_\cR e^D -\wt{\cQ}_\cR =  \sum_{i=1}^6  \int_0^1 ds\;  e^{-sD} Z_i e^{sD} = \widetilde{\cE}_{\cQ} 
	\end{equation}
where 
\[
\pm \widetilde{\cE}_{\cQ}  \leq C N^{-1/2} (\log N)^{1/2} \cK (\cN_++1)^{\k +3}\,.
\]
Combining (\ref{eq:DED}), (\ref{eq:LDKDL}), (\ref{eq:LDVDL}) and (\ref{eq:Ki}), we obtain (\ref{eq:expD}). 
\end{proof}


\subsection{Growth of the kinetic energy}
\label{sec:apriori}
In this section we show Lemma \ref{lm:aprioriestimateKD}, establishing a-priori bounds on the growth of the kinetic energy under the action of the unitary operator $e^D$. We will use the following preliminary estimate.
\begin{lemma}\label{lm:aprioriestimateKD-1} 
	Let $\cK$ be defined in \eqref{eq:cKVN} and $D$ defined as in \eqref{eq:defD}, with  $\a \geq 5/2$ and $\nu \in (0,1/2)$. Then for any $s \in [0,1]$,  there exists a constant $C>0$ such that  
		\begin{equation} \begin{split}		\label{eq:expDKexpD-1}
 e^{-sD}\cK e^{sD} & \leq \cK  + CN^{2(\a+\nu-1/2)} (\cN_++1)^{3}
					\end{split}
			\end{equation}
		for  all $N$ large enough.
\end{lemma}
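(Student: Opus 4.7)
The plan is a Gr\"onwall argument. For $\xi \in \cF_+^{\leq N}$, define $f(s) := \langle \xi, e^{-sD} \cK e^{sD} \xi\rangle$; then $f'(s) = \langle \xi, e^{-sD} [\cK, D] e^{sD} \xi\rangle$. Since $[\cN_+, D] = 0$, the operator $(\cN_+ + 1)^k$ is invariant under conjugation by $e^{sD}$; consequently, to establish \eqref{eq:expDKexpD-1} on $s \in [0,1]$ it suffices to prove, for every $\eta \in \cF_+^{\leq N}$ and every $\varepsilon > 0$, the quadratic-form estimate
\[
|\langle \eta, [\cK, D]\eta\rangle| \leq \varepsilon \, \langle \eta, \cK\eta\rangle + C_\varepsilon \, N^{2(\alpha+\nu-1/2)} \, \langle \eta, (\cN_+ + 1)^3 \eta\rangle,
\]
from which the operator inequality follows by a standard Gr\"onwall bound with $\varepsilon$ chosen small.

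Using $[\cK, a_p^*] = |p|^2 a_p^*$, $[\cK, a_p] = -|p|^2 a_p$ and the Leibniz rule, one computes
\[
[\cK, D_1] = \frac{1}{4N} \sum_{r,v,w} \eta_r \big[|v+r|^2 + |w-r|^2 - |v|^2 - |w|^2\big] \, a_{v+r}^* a_{w-r}^* a_v a_w.
\]
The symmetry $(v,w,r) \leftrightarrow (w,v,-r)$ together with $\eta_{-r} = \eta_r$ reduces the four resulting pieces to two: a ``high-momentum'' term $P_1 \propto |v+r|^2$ and a ``low-momentum'' term $P_3 \propto |v|^2$. For $P_1$ I apply Cauchy--Schwarz distributing the weight $|v+r|$ symmetrically between the two sides. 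Changing variables to $p = v+r$, $q = w-r$ in the first factor, and noting that the fiber over $(p,q)$ contains at most $|P_L| \leq C N^{2(\alpha+\nu)}$ admissible pairs $(v,w) \in P_L^2$ with $v+w = p+q$, yields a bound by $|P_L| \langle \xi, \cK \cN_+ \xi\rangle$. For the second factor, the elementary bound $|v+r|^2 \leq 2|v|^2 + 2|r|^2$, combined with the $L^2$- and $H_1$-bounds $\|\eta\|^2 \leq C N^{-2\alpha}$ and $\|\check\eta\|_{H_1}^2 \leq CN$ from Lemma \ref{lm:eta}, with the restriction $|v| \leq N^{\alpha+\nu}$ on $P_L$ and the assumption $\nu < 1/2$ (so that $N^{2\nu} \leq N$), gives the uniform estimate $\sum_r |\eta_r|^2 |v+r|^2 \leq CN$, and hence this factor is controlled by $CN \langle \xi, \cN_+^2 \xi\rangle$. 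Multiplying the two factors and applying Young's inequality produces precisely the bound $\varepsilon \langle \xi, \cK \xi\rangle + C_\varepsilon N^{2(\alpha+\nu-1/2)} \langle \xi, (\cN_+ + 1)^3 \xi\rangle$ for $P_1$. The piece $P_3$ is estimated by a direct Cauchy--Schwarz using $|v|^2 \leq N^{2(\alpha+\nu)}$ and $\|\eta\|^2 \leq CN^{-2\alpha}$; the resulting bound, of order $CN^{\alpha + 2\nu - 1}(\cN_+ + 1)^2 \|\xi\|^2$, is subsumed by the right-hand side.

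The main obstacle lies in controlling the unbounded sum over $r \in \Lambda_+^*$ in the presence of the restrictions $v, w \in P_L$: a naive Cauchy--Schwarz that isolates $|r|^2$ from $\eta_r$ would require either $\sum_r |r|^2$ or $\sum_r |\eta_r|$, neither of which converges. The resolution is to keep $\eta_r$ paired with its natural $H_1$-weight $|r|$, thereby exploiting $\|\check\eta\|_{H_1}^2 \leq CN$, while simultaneously channeling the factors $|v+r|$, $|w-r|$ into kinetic-energy-type sums $\sum_{p,q} |p|^2 \|a_p a_q \xi\|^2 \leq \langle \xi, \cK \cN_+ \xi\rangle$ after the change of variables $p = v+r$, $q = w-r$. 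The combinatorial factor $|P_L| \leq N^{2(\alpha+\nu)}$ arising from this change of variables, together with the $H_1$-scale $\sqrt{N}$ and the prefactor $1/N$ of $D$, produces exactly the scaling $N^{\alpha+\nu-1/2}$ whose square appears on the right-hand side of \eqref{eq:expDKexpD-1}.
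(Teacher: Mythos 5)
Your treatment of the leading piece $P_1\propto|v+r|^2\eta_r$ is correct and is a genuinely different route from the paper. The paper substitutes the scattering equation \eqref{eq:eta-scat} into $(r^2+2r\cdot v)\eta_r$, converting $r^2\eta_r$ (bounded but not summable) into $(\widehat V(\cdot/e^N)*\widehat f_{N,\ell})(r)$ and $(\widehat\omega_N*\widehat f_{N,\ell})(r)$, and then exploits the $L^\infty$-smallness $\|\widehat V(\cdot/e^N)*\widehat f_{N,\ell}\|_\infty\leq C/N$. You instead keep $|v+r|^2\eta_r$, send one copy of $|v+r|$ into the kinetic-type sum $\sum|p|^2\|a_pa_q\xi\|^2$ (paying the fiber overcount $|P_L|\leq CN^{2(\alpha+\nu)}$), and pair the other with $\eta_r$ through $\sum_r|v+r|^2|\eta_r|^2\leq C(|v|^2\|\eta\|_2^2+\|\check\eta\|_{H^1}^2)\leq CN$. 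Both routes produce the same exponent $N^{\alpha+\nu-1/2}$; yours is more elementary in that it uses only the $H^1$ estimate on $\check\eta$ from Lemma \ref{lm:eta} rather than the identity \eqref{eq:eta-scat}. One small point: as written you obtain $\|\cK^{1/2}\cN_+^{1/2}\eta\|$ rather than a pure $\|\cK^{1/2}\eta\|$; you should insert $(\cN_++1)^{\mp1/2}$ on the two Cauchy--Schwarz factors (which is harmless since $[\cN_+,D]=0$), giving $CN^{\alpha+\nu-1/2}\|\cK^{1/2}\xi\|\|(\cN_++1)^{3/2}\xi\|$ before Young, as otherwise the Gr\"onwall argument produces one extra power of $\cN_+$.

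The treatment of $P_3\propto|v|^2\eta_r$ has a genuine gap. If you apply Cauchy--Schwarz with the full $|v|^2\leq N^{2(\alpha+\nu)}$ pulled out, the two remaining factors give $\|\eta\|_2^2\langle\xi,\cN_+^2\xi\rangle$ and $|P_L|\langle\xi,\cN_+^2\xi\rangle$ respectively (the $|P_L|$ cannot be avoided, since the sum over $r$ in the high-momentum factor has the same fiber multiplicity as in $P_1$); with the prefactor $1/N$ this gives
\[
|\langle\xi,P_3\xi\rangle|\leq C\,\frac{N^{2(\alpha+\nu)}}{N}\cdot N^{-\alpha}\cdot N^{\alpha+\nu}\,\langle\xi,(\cN_++1)^2\xi\rangle \;=\; C\,N^{2\alpha+3\nu-1}\langle\xi,(\cN_++1)^2\xi\rangle,
\]
which exceeds the target $N^{2(\alpha+\nu-1/2)}=N^{2\alpha+2\nu-1}$ by a factor $N^\nu$, and is therefore not subsumed. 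The claimed exponent $N^{\alpha+2\nu-1}$ in your write-up is what one gets only after the fix. The fix is the same one the paper uses for its analogous term $K_{42}$ in \eqref{eq:commcKD-0}: write $|v|^2=|v|\cdot|v|$ and pull out only $\sup_{v\in P_L}|v|=N^{\alpha+\nu}$, keeping the other factor $|v|$ with $\|(\cN_++1)^{-1/2}a_va_w\xi\|$ so that the corresponding Cauchy--Schwarz factor becomes $\|\eta\|_2^2\langle\xi,\cK\xi\rangle\leq CN^{-2\alpha}\langle\xi,\cK\xi\rangle$; the other factor is then $|P_L|\langle\xi,(\cN_++1)^3\xi\rangle$. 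This yields $CN^{\alpha+2\nu-1}\|\cK^{1/2}\xi\|\|(\cN_++1)^{3/2}\xi\|$, and after Young, $\varepsilon\langle\xi,\cK\xi\rangle+C_\varepsilon N^{2\alpha+4\nu-2}\langle\xi,(\cN_++1)^3\xi\rangle$, which is subsumed since $\nu<1/2$. With this correction your argument closes.
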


\begin{proof}[Proof of Lemma \ref{lm:aprioriestimateKD-1}] The proof follows 
	from  Gronwall's lemma. For a fixed $\xi \in \cF_+^{\leq N}$ and $s\in [0; 1]$ we define
	\begin{equation*}
		h_\xi (s) := \langle \xi, e^{-sD} \cK\,e^{sD} \xi\rangle\,.  \end{equation*}
	Then 
	\begin{equation*}\label{eq:h'1} h'_\xi  (s) = \langle \xi, e^{-sD} [\cK\,, D] e^{sD} \xi\rangle. \end{equation*}
	With 
	\[\begin{split}
		[a^*_pa_p,a^*_{v+r}a^*_{w-r}a_va_{w}]&=a^*_{v+r} a^*_{w-r}a_v a_w (\d_{p,v+r} + \d_{p,w-r} - \d_{p,v} - \d_{p,w}) 
	\end{split}\]
	we find
	\[
	\begin{split}
		[\cK,D] 
		&= \fra 1{2N}\sum_{\substack{r \in \L^*_+, v,w \in P_L\\ v\neq -r, w\neq r}}(r^2 + 2 r \cdot v)\eta_r a^*_{v+r}a^*_{w-r}a_va_{w}+\hc
	\end{split}
	\]
	Using the scattering equation \eqref{eq:eta-scat} and the definition of $\widehat{\o}_N(p)$ in \eqref{eq:defomegaN} we get
	\begin{equation}
		\begin{split} \label{eq:commcKD-0}
			[\cK,D]=\; &
			-\fra 14\sum_{\substack{r \in \L^*, v,w \in P_L\\ v\neq -r, w\neq r}} (\widehat{V}(\cdot/e^N)*\widehat{f}_{N,\ell})(r) \big(a^*_{v+r}a^*_{w-r}a_va_{w}+\hc \big)\\ 
			& +\fra {1} {4 N} \sum_{\substack{r \in \L^*_+, v,w \in P_L\\ v\neq -r, w\neq r}}(\widehat{\o}_N*\widehat{f}_{N,\ell})(r) \big(a^*_{v+r}a^*_{w-r}a_va_{w}+\hc\big)\\
			& +\fra 1{4}\sum_{\substack{v,w \in P_L}} \big(\widehat{V}(\cdot/e^N)*\widehat{f}_{N,\ell}\big)(0)\big(a^*_{v}a^*_{w}a_va_{w}+\hc\big)\\
			&+ \frac 1 N \sum_{\substack{r \in \L^*, v,w \in P_L\\ v\neq -r, w\neq r}} 
			r \cdot v \,\eta_r \big( a^*_{v+r}a^*_{w-r}a_va_w +\hc \big)
			=: \; \sum_{i=1}^4 K_i\,.
		\end{split}
	\end{equation}
To estimate the first   term on the r.h.s of \eqref{eq:commcKD-0} we use (\ref{eq:intpotf}) to estimate $\|\widehat{V}(\cdot/e^N)*\widehat f_{N,\ell}\|_\infty \leq C /N$ and (\ref{eq:fell01}) to bound $\|\widehat{V}(\cdot/e^N)*\widehat f_{N,\ell}\|_2 \leq C e^N$. We obtain 
	\be \label{eq:intVfNell}
	\begin{split} & \sup_{v \in \L^*_+}\sum_{\substack{r \in \L^*\\ r\neq -v}} \frac{|(\widehat{V}(\cdot/e^N)*\widehat{f}_{N,\ell})(r)|}{|v+r|^2} \\
&\leq \;\|\widehat{V}(\cdot/e^N)*\widehat f_{N,\ell}\|_\infty \sum_{ |r+v|\leq e^N} \frac{1}{|v+r|^2} + \|\widehat{V}(\cdot/e^N)*\widehat f_{N,\ell}\|_2 \Big[ \sum_{ |r+v| > e^N} \frac{1}{|v+r|^4} \Big]^{1/2} \leq C\,.
	\end{split}\ee
Hence
	\[\begin{split}&|\langle \xi, e^{-sD}K_1e^{sD}\xi\rangle| \\
		&\quad \leq C \sum_{\substack{r \in \L^*, v,w \in P_L\\ v\neq -r, w\neq r}} |(\widehat{V}(\cdot/e^N)*\widehat{f}_{N,\ell})(r)|\|(\cN_++1)^{-1/2}a_{v+r}a_{w-r}e^{sD}\xi\|\|(\cN_++1)^{1/2}a_va_{w}e^{sD}\xi\|\\
		&\quad\leq C\Big[ \sum_{\substack{r \in \L^*, v,w \in P_L\\ v\neq -r, w\neq r}} \frac{|(\widehat{V}(\cdot/e^N)*\widehat{f}_{N,\ell})(r)|}{|v+r|^2} \|(\cN_++1)^{1/2}a_va_we^{sD}\xi\|^2\Big]^{1/2}\\
		&\quad\hskip1cm \times\Big[ \sum_{\substack{r \in \L^*, v,w \in P_L\\ v\neq -r, w\neq r}}|(\widehat{V}(\cdot/e^N)*\widehat{f}_{N,\ell})(r)| |v+r|^2 \|(\cN_++1)^{-1/2}a_{v+r}a_{w-r}e^{sD}\xi\|^2\Big]^{1/2}
		\\
		&\quad \leq CN^{\a+\n-1/2}\|(\cN_++1)^{3/2}e^{sD}\xi\|\|\cK^{1/2}e^{sD}\xi\|.
	\end{split}\]
Similarly, with $\|\widehat{\o}_N*\widehat f_{N,\ell}\|_\infty \leq \| \o_N\|_1 \leq C $ and $\|\widehat{\o}_N*\widehat f_{N,\ell}\|_2 \leq \| \o_N\|_2 \leq C N^\a $  we get
\[ \begin{split}
& \sup_{v \in \L^*_+}\sum_{\substack{r \in \L^* \\ r\neq -v}} \frac{|(\widehat{\o}_N*\widehat{f}_{N,\ell})(r)|}{|v+r|^2} \\
&\leq \;\|\widehat{\o}_N*\widehat f_{N,\ell}\|_\infty \sum_{ |r+v|\leq N^\a} \frac{1}{|v+r|^2} + \|\widehat{\o}_N*\widehat f_{N,\ell}\|_2 \Big[ \sum_{ |r+v|\geq N^\a} \frac{1}{|v+r|^4} \Big]^{1/2} \leq C \log N\,.
\end{split}\]
Hence
\[\begin{split}|\langle \xi, e^{-sD}K_{2}e^{sD}\;\xi \rangle | \; & \leq \frac CN\Big[\sum_{\substack{r \in \L^*_+, v,w \in P_L\\ v\neq -r, w\neq r}}\frac{|(\widehat\o_N \ast \widehat{f}_{N,\ell})(r)|}{|v+r|^2}\| a_va_w (\cN_++1)^{1/2}e^{sD} \xi\|^2\Big]^{1/2} \\ 
	&\;\hskip1cm\times
	\Big[\sum_{\substack{r \in \L^*_+, v,w \in P_L\\ v\neq -r, w\neq r}}|v+r|^2\|a_{v+r}a_{w-r}(\cN_++1)^{-1/2}e^{sD}\xi\|^2\Big]^{1/2}\\
	&\;\leq  C N^{\a+\n -1}(\log N)^{1/2} \|\cK^{1/2}e^{sD}\xi\| \|(\cN_++1)^{3/2}e^{sD}\xi\|\,.
\end{split}\]
As for $K_3$ we use Eq. \eqref{eq:intpotf} in Lemma \ref{lm:propomega} to conclude:
\be\label{eq:K3bound}\begin{split}
	\Big|\langle \xi,e^{-sD} K_3\; e^{sD}\xi\rangle \Big|& \leq \fra CN \sum_{\substack{v,w \in P_L}}\|a_va_we^{sD}\xi\|^2 \leq \fra CN \|(\cN_++1)e^{sD}\xi\|^2.
\end{split}\ee
Finally, to bound $K_4$ we write $r \cdot v = (r+v) \cdot v - |v|^2$ and we split correspondingly $K_4$ in two terms, denoted by $K_{41}$ and $K_{42}$ below. Recalling from Lemma \ref{lm:eta} that $\| \eta \|_2 \leq C N^{-\alpha}$, we bound
\[
\begin{split}
	|\langle\xi, e^{-sD}K_{41}\;e^{sD}\xi\rangle|& \leq \frac C N\, \Big(\sup_{v \in P_L}|v| \Big)\Big[\sum_{\substack{r \in \L^*,\,  v,w \in P_L\\ v\neq -r, w\neq r}}  | \eta_r |^2
	\|a_va_w (\cN_++1)^{1/2}e^{sD}\xi\|^2\Big]^{1/2}\\
& \hskip 1cm \times\Big[\sum_{\substack{r \in \L^*,\, v,w \in P_L\\ v\neq -r, w\neq r}} 
	\,  |r+v|^2 \,  \|a_{v+r}a_{w-r} (\cN_++1)^{-1/2}e^{sD}\xi\|^2\Big]^{1/2}\\
	&\; \leq C N^{\a+2\nu-1} \,\|\cK^{1/2}e^{sD}\xi\| \|(\cN_++1)^{3/2}e^{sD}\xi\|\,.
\end{split}
\]
On the other hand
\[
\begin{split}
	|\langle\xi, e^{-sD}K_{42}\;e^{sD}\xi\rangle|& \leq \frac C N\, \Big(\sup_{v \in P_L}|v| \Big)\Big[\sum_{\substack{r \in \L^*,\,  v,w \in P_L\\ v\neq -r, w\neq r}}  | \eta_r |^2 |v|^2
	\|a_va_w (\cN_++1)^{-1/2}e^{sD}\xi\|^2\Big]^{1/2}\\
& \hskip 1cm \times\Big[\sum_{\substack{r \in \L^*,\, v,w \in P_L\\ v\neq -r, w\neq r}} 
	\,  \|a_{v+r}a_{w-r} (\cN_++1)^{+1/2}e^{sD}\xi\|^2\Big]^{1/2}\\
	&\; \leq C N^{\a+2\nu-1} \,\|\cK^{1/2}e^{sD}\xi\| \|(\cN_++1)^{3/2}e^{sD}\xi\|\,.
\end{split}
\]
Collecting the results above (and recalling that $\n <1/2 $ and that $\cN_+$ commutes with $D$) we end up with
\begin{equation*} \label{eq:DKgrow}\begin{split}
\big| \langle \xi, e^{-sD} [\cK, D]  e^{sD} \xi \rangle \big|& \leq \langle \xi, e^{-sD} \cK e^{sD} \xi \rangle  + CN^{2 (\a+\n-1/2)} \langle \xi, (\cN_++1)^3 \xi \rangle\,.\\ 
\end{split}
\end{equation*}
Hence, applying Gronwall's lemma to the differential inequality 
\[
|h_\xi'(s)|\leq h_\xi(s) +CN^{2(\a+\n-1/2)} \langle \xi, (\cN_++1)^3 \xi \rangle 
\]
we end up with \eqref{eq:expDKexpD-1}. 
\end{proof}

With the help of  Lemma \ref{lm:aprioriestimateKD-1} we can now show Lemma  \ref{lm:aprioriestimateKD}.

\begin{proof}[Proof of Lemma \ref{lm:aprioriestimateKD}.] We first show that the commutator $[\cK,D]$ satisfies the bound
	\begin{equation}\label{eq:commKD}
		\pm [\cK,D] \leq C \frac{ (\log N)^{1/2}}{N^{1/2}}\,\cK \cN_+\,.
	\end{equation}
Indeed, the bounds for the terms $K_1$, $K_2$ and $K_4$ defined in \eqref{eq:commcKD-0} can be all improved by using the kinetic energy operator. We have  (recall the definition of $P_L$ in \eqref{eq:defPL})
		\begin{equation}\label{eq:K0}\begin{split}|\langle \xi, K_1\xi\rangle| 
			&\leq C\Big[ \sum_{\substack{r \in \L^*, v,w \in P_L:\\ r\neq -v, w}} \frac{|(\widehat{V}(\cdot/e^N)*\widehat{f}_{N,\ell})(r)|}{|v+r|^2}|v|^2 \|a_va_w\xi\|^2\Big]^{1/2}\\
			&\quad\hskip1cm \times\Big[ \sum_{\substack{r \in \L^*, v,w \in P_L:\\ r\neq -v, w}}|(\widehat{V}(\cdot/e^N)*\widehat{f}_{N,\ell})(r)| \,\frac{1}{|v|^2}\,|v+r|^2 \|a_{v+r}a_{w-r}\xi\|^2\Big]^{1/2}
			\\
			&\quad \leq CN^{-1/2}(\log N)^{1/2}\|\cK^{1/2}\cN_+^{1/2}\xi\|^2\,,
		\end{split}\end{equation}
and, similarly, 
\[\begin{split}|\langle \xi, K_{2}\;\xi \rangle | \; & \leq \frac CN\Big[\sum_{\substack{r \in \L^*_+, v,w \in P_L:\\ r\neq -v, w}}\frac{|(\widehat\o_N \ast \widehat{f}_{N,\ell})(r)|}{|v+r|^2}\, |v|^2\| a_va_w  \xi\|^2\Big]^{1/2} \\ 
	&\;\hskip1cm\times
	\Big[\sum_{\substack{r \in \L^*_+, v,w \in P_L:\\ r\neq -v, w}}\frac 1 {|v|^2}\,|v+r|^2\|a_{v+r}a_{w-r}\xi\|^2\Big]^{1/2}\\
	&\;\leq  C N^{-1} (\log N) \|\cK^{1/2} \cN_+^{1/2}\xi\|^2\,.
\end{split}\]
To show that $K_4$ is also bounded from the r.h.s. of \eqref{eq:commKD} we split as before $K_4$ into $K_{41}$ and $K_{42}$. We get  
\[
\begin{split}
	 |\langle\xi, K_{41}\;\xi\rangle|
& \leq \frac C N\, \Big[\sum_{\substack{r \in \L^*,\\  v,w \in P_L:\\ r\neq -v, w}}  | \eta_r |^2 |v|^2
	\|a_va_w \xi\|^2\Big]^{1/2} \Big[\sum_{\substack{r \in \L^*,\\ v,w \in P_L:\\ r\neq -v, w}} 
	\,  |r+v|^2 \,  \|a_{v+r}a_{w-r} \xi\|^2\Big]^{1/2}\\
	&\; \leq C N^{\nu-1} \,\|\cK^{1/2}\cN_+^{1/2}\xi\|^2\,.
\end{split}
\]
On the other hand, distinguishing the cases $r+v \in P_L$ and $r+v \in P_L^c$ we find
\[
\begin{split}
	&|\langle\xi, K_{42}\;\xi\rangle|\\
& \leq \frac C N\,\Big( \sup_{v \in P_L} |v|\Big)\Big[\sum_{\substack{r \in \L^*,\,  v,w \in P_L\\  w\neq r,\, r+v \in P_L}}  \frac{1}{|r+v|^2} |v|^2
	\|a_va_w \xi\|^2\Big]^{1/2}\\
& \hskip 2cm \times\Big[\sum_{\substack{r \in \L^*,\, v,w \in P_L\\ w\neq r, v+r \in P_L}} 
	\, | \eta_r |^2 |r+v|^2 \|a_{v+r}a_{w-r} \xi\|^2\Big]^{1/2}\\
& +\frac C N\,\Big[\sum_{\substack{r \in \L^*,\,  v,w \in P_L\\  w\neq r,\, r+v \in P_L^c}}  \frac{|v|^2}{|r+v|^2}\, |\eta_r |^2 \, |v|^2 
	\|a_va_w \xi\|^2\Big]^{1/2}\\
& \hskip 2cm \times\Big[\sum_{\substack{r \in \L^*,\, v,w \in P_L\\ w\neq r, v+r \in P_L^c}} \hskip -0.1cm
	  |r+v|^2 \|a_{v+r}a_{w-r} \xi\|^2\Big]^{1/2}\\
	&\; \leq C N^{\nu-1} (\log N)^{1/2} \,\|\cK^{1/2}\cN_+^{1/2}\xi\|^2 \,.
\end{split}
\]
Eq. \eqref{eq:commKD} then follows from the previous bounds, together with \eqref{eq:commcKD-0} and \eqref{eq:K3bound}.
  Now, with \eqref{eq:commKD} and using that $[\cN_+,D]=0$ we rewrite
\begin{equation}\label{eq:DKD-exp}\begin{split}
e^{-D} \cK e^{D}  =\; &\ \cK + \int_0^1 dt_1 \,e^{-t_1 D}[\cK,D]e^{t_1 D}\\
\leq &\; \cK + C\, \frac{(\log N)^{1/2}}{N^{1/2}} \int_0^1 dt_1 \,\cN_+^{1/2}e^{-t_1 D}\cK e^{t_1 D}\cN_+^{1/2}\,.
\end{split}\end{equation}
Iterating $\k-1$ times we obtain 
\[ \begin{split}
e^{-D} \cK e^{D}  \leq \; & 
 \cK + C \sum_{j=1}^{\k-1} \frac{(\log N)^{j/2}}{j! N^{j/2}}\,\cK\, \cN_+^{j} \\ &+C \frac{(\log N)^{\kappa /2}}{N^{\frac{\kappa}{2}}} \int_0^1 dt_1 \int_0^{t_1} dt_2 \dots \int_0^{t_{\k-1}} dt_\kappa \,  \cN_+^{\kappa/2} e^{-t_\kappa D} \cK e^{t_\kappa D} \cN_+^{\k/2}\,. \end{split} \]
 Estimating the error term with Lemma \ref{lm:aprioriestimateKD-1}, we find 
\[ \begin{split} 
e^{-D} \cK e^{D}  \leq \; & 
 \cK + C \sum_{j=1}^{\k} \frac{(\log N)^{j/2}}{j! N^{j/2}}\,\cK\, \cN_+^{j} +C \frac{(\log N)^{\kappa /2}}{\kappa! N^{\kappa/2}} N^{2(\alpha + \nu - 1/2)} (\cN_+ + 1)^{\kappa+3}\,. \end{split} \]
Choosing $\k > 4 \a + 4 \nu -2$ and $N$ large enough, we obtain (\ref{eq:DKD}). Applying (\ref{eq:commKD}) to the identity in (\ref{eq:DKD-exp}), we find
\[ \pm \Big[ e^{-D} \cK e^D - \cK \Big] = \pm \int_0^1 dt \, e^{-t D} \big[ \cK , D \big] e^{tD} \leq C \frac{(\log N)^{1/2}}{N^{1/2}} \int_0^1 dt \, \cN_+^{1/2} e^{-tD} \cK  e^{tD} \cN_+^{1/2}\,. \]
With (\ref{eq:DKD}), we arrive at (\ref{eq:DKD-K}). 
\end{proof}

		\subsection{Growth of the interaction potential $\cV_N$}

The aim of this section is to prove Lemma \ref{lm:growthVN}, describing the action of the quartic transformation $e^D$ on the potential energy operator $\cV_N$. To achieve this goal, we first prove estimates for the commutator $[\cV_N, D]$. 
\begin{lemma}
				\label{prop:commDcVN} Let $\cV_N$ and  $D$  be defined in \eqref{eq:cKVN} and \eqref{eq:defD} respectively, with $\a\geq 5/2$ and $\nu\in (0,1/2)$. Let $\cV_N^{(L)} $ be defined as in (\ref{eq:defVNL}). Then, for $N$ large enough, there exists a constant $C > 0$ such that  
				\begin{equation} \label{eq:commVN}
					[\cV_N,D]  =\; -\cV_N^{(L)} + \cE_{[\cV_N , D]},
				\end{equation}
				with
				\begin{equation}
					\label{eq:errorcommVN}
			\pm \cE_{[\cV_N , D]} \leq  CN^{\nu-1/2}(\log N)^{1/2} \cK \,(\cN_++1)^2\,.
				\end{equation}Moreover
				\begin{equation}
					\label{eq:commVNLowD}
					\pm [\cV_N^{(L)},D]  \leq CN^{\n-1/2} (\log N)^{1/2}\cK\, (\cN_++1)^2\,.
				\end{equation}
			\end{lemma}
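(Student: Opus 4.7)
The plan is to compute both commutators by direct expansion via the canonical commutation relations, isolate a leading four-body contribution that reproduces $-\cV_N^{(L)}$ through the scattering equation \eqref{eq:eta-scat}, and bound all remaining pieces by Cauchy--Schwarz using the decay estimates of Lemma \ref{lm:eta}.

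For \eqref{eq:commVN}: writing $D = D_1 - D_1^*$, the commutator $[a^*_{p+s}a^*_q a_p a_{q+s},\,a^*_{v+r}a^*_{w-r}a_v a_w]$ normal-orders into fully contracted four-body pieces (two simultaneous Wick contractions pairing $a_p,a_{q+s}$ with $a^*_{v+r},a^*_{w-r}$, or $a^*_{p+s},a^*_q$ with $a_v,a_w$) and singly contracted six-body pieces. Relabeling $u = r+s$ in the first class of contractions yields, up to combinatorial factors and h.c.,
\[
\frac{1}{4N}\sum_{u\in \L^*,\, v,w\in P_L} \bigl(\widehat V(\cdot/e^N)*\eta\bigr)(u)\, a^*_{v+u}a^*_{w-u} a_v a_w \,+\, \hc
\]
(with the appropriate constraints). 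Substituting $\bigl(\widehat V(\cdot/e^N)*\eta\bigr)(u) = -N\widehat V(u/e^N) - 2u^2\eta_u + \widehat{\omega}_N(u) + N^{-1}(\widehat{\omega}_N*\eta)(u)$ from \eqref{eq:eta-scat}, the $-N\widehat V(u/e^N)$ piece assembles into exactly $-\cV_N^{(L)}$, while the three remaining pieces together with the six-body terms form $\cE_{[\cV_N,D]}$.

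Each error term is controlled by splitting via Cauchy--Schwarz into a factor absorbable by $\|\cK^{1/2}(\cN_++1)^{1/2}\cdot\|$, using a $|v+u|^2$ on the "out" momenta, against a factor absorbable by $\|(\cN_++1)^{3/2}\cdot\|$. The estimates $|\eta_p|\le C/|p|^2$, $|\widehat{\omega}_N(p)|\le C$ and $\sum_p|\widehat{\omega}_N(p)|/|p|^2\le C\log N$ from Lemma \ref{lm:eta} produce the $(\log N)^{1/2}$; the restriction $|v|,|w|\le N^{\a+\nu}$ combined with the $1/N$ prefactor of $D$ produces the $N^{\nu-1/2}$. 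The six-body terms are handled similarly, using $|v+r|^2$ to absorb one kinetic factor and the decay $|\eta_r|\le C/r^2$ plus the bound on $\widehat V$ to sum over $r$. For \eqref{eq:commVNLowD}, the analogous expansion of $[\cV_N^{(L)},D]$ carries an extra $1/N$ from the prefactor of $\cV_N^{(L)}$ and all external momenta lie in $P_L$; no scattering-equation cancellation is required, and the same Cauchy--Schwarz analysis directly yields the stated bound.

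The principal difficulty is bookkeeping: correctly enumerating the contraction patterns and verifying that, after symmetrization and addition of the h.c., the $-N\widehat V$ piece reconstructs \emph{exactly} $-\cV_N^{(L)}$ with the correct combinatorial factor, rather than some other multiple. A related subtlety is the handling of the excluded momenta ($r\ne -v,w$ and the $P_L$ constraints): restoring full summation in order to apply the scattering equation produces boundary contributions that must themselves be estimated and absorbed into the claimed $N^{\nu-1/2}(\log N)^{1/2}\cK(\cN_++1)^2$ bound.
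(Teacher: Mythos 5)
Your extraction of the leading term via the scattering equation \eqref{eq:eta-scat} is a genuinely different algebraic route from the paper's. The paper uses the direct identity $\eta_r = -N\,\widehat{w}_{N,\ell}(r)$ together with $\widehat{w}_{N,\ell}(r) = \delta_{r,0} - \widehat{f}_{N,\ell}(r)$, so the main Wick-contracted piece $\frac{1}{4N}\sum_r \widehat V((u-r)/e^N)\eta_r$ splits into $-\frac14\widehat V(u/e^N)$ plus the single residual $\frac14(\widehat V(\cdot/e^N)*\widehat f_{N,\ell})(u)$ (their $V_5=-K_1$, bounded immediately via \eqref{eq:intVfNell}), plus a negligible $\eta_0$ term. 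Your route instead produces the three residuals $-2u^2\eta_u$, $\widehat\omega_N(u)$, $N^{-1}(\widehat\omega_N*\eta)(u)$. These are in fact equivalent after algebra: using $\eta_q = N\widehat f_{N,\ell}(q)$ for $q\neq 0$ one has $(\widehat V(\cdot/e^N)*\eta)(u) = N(\widehat V(\cdot/e^N)*\widehat f_{N,\ell})(u) - N\widehat V(u/e^N)$, so your three extra terms sum to exactly $N(\widehat V(\cdot/e^N)*\widehat f_{N,\ell})(u)$. If you don't notice this, bounding $-2u^2\eta_u$ directly is the delicate point: Lemma \ref{lm:eta} only gives $|u^2\eta_u|\le C$ uniformly, which is not summable over $u$, and one needs the extra observation that $u^2\eta_u$ vanishes for $|u|\gtrsim e^N$ (so that $\sup_v\sum_u |u^2\eta_u|/|v+u|^2\le CN$, in analogy with \eqref{eq:intVfNell}). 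So your approach closes, but requires an estimate the paper sidesteps by its choice of decomposition.

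For \eqref{eq:commVNLowD}, however, your justification is wrong. You argue the bound follows because $\cV_N^{(L)}$ ``carries an extra $1/N$'' in its prefactor, but the $\tfrac1{4N}$ in \eqref{eq:defVNL} is a typographical slip (compare with \eqref{eq:defcV-H}, where $\cV_N^{(H)}=\cV_N-\cV_N^{(L)}$ has prefactors $\tfrac12$ and $\tfrac14$, and with \eqref{eq:commVND-02}, where the term identified with $-\cV_N^{(L)}$ carries $-\tfrac14$, not $-\tfrac1{4N}$). There is no extra $N^{-1}$. The actual reason no cancellation is needed in $[\cV_N^{(L)},D]$ is that the main Wick-contracted term now carries the additional constraints $v+r, w-r\in P_L$ \emph{as well as} $v,w\in P_L$; the paper exploits these double momentum restrictions in a direct Cauchy--Schwarz estimate (end of the proof of Lemma \ref{prop:commDcVN}) to land in $N^{\nu-1/2}(\log N)^{1/2}\cK(\cN_++1)^2$ without invoking the scattering equation. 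You should replace the prefactor argument with this momentum-restriction argument.
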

		\begin{proof}
			First, we prove Eq. \eqref{eq:commVN},\eqref{eq:errorcommVN}. A straightforward computation leads us to 
			\be\label{eq:commquartico}\begin{split}
				& [a^*_{p+u}a^*_qa_pa_{q+u},\,a^*_{v+r}a^*_{w-r}a_va_{w}]\\& =\d_{q+u,v+r}a^*_{p+u}a^*_qa_pa^*_{w-r}a_va_w +\d_{q+u,w-r}a^*_{p+u}a^*_qa_pa^*_{v+r}a_va_w\\
				& \quad+\d_{p,v+r}a^*_{p+u}a^*_qa^*_{w-r}a_{q+u}a_va_w
				+\d_{p,w-r}a^*_{p+u}a^*_qa^*_{v+r}a_{q+u}a_va_w\\
				&\quad- \d_{q,v}a^*_{v+r}a^*_{w-r}a^*_{p+u}a_wa_pa_{q+u}-\d_{q,w}a^*_{v+r}a^*_{w-r}a^*_{p+u}a_va_pa_{q+u}\\
				&\quad -\d_{p+u,v}a^*_{v+r}a^*_{w-r}a_wa^*_qa_pa_{q+u}-\d_{p+u,w}a^*_{v+r}a^*_{w-r}a_va^*_qa_pa_{q+u}\,.
			\end{split}\ee
			Normal ordering the terms in the first and in the last lines we obtain:
			\begin{equation}\label{eq:commVND-0}
				\begin{split}
					&	[\cV_N,D] \\
					&=  \fra{1}{8N} \sum_{\substack{ p,q \in \L^*_+,\, u \in \L^*\\-u \neq p,q}}\hskip -0.2cm \widehat{V}(u/e^N) \hskip -0.2cm\sum_{\substack{r \in \L^*_+,\, v,w \in P_L\\ u\neq -v, w} } 
					\eta_r  \,[a^*_{p+u}a^*_qa_pa_{q+u},\,a^*_{v+r}a^*_{w-r}a_va_{w}-\hc]  \\
					&=\fra{1}{4N}\sum_{\substack{u\in \L^*,r \in \L^*_+\\ v,w \in P_L\\ u\neq -v, w} } 					\widehat{V}((u-r)/e^N)\,\eta_r \big(a^*_{v+u}a^*_{w-u}a_va_w + \hc\big) + \sum_{i=1}^3 V_i\,,
				\end{split}
			\end{equation}
	where
			\[
			\label{eq:commVND}
			\begin{split}
				V_1&= - \fra{1}{4N}\sum_{\substack{u \in \L^*, r \in \L^*_+, v,w \in P_L\\ v\neq -r, w\neq r} } 
				\widehat{V}(u/e^N)\eta_r \big(a^*_{v+r}a^*_{w-r}a_{w-u}a_{v+u} +\hc\big)\\
				V_2&=  \fra{1}{2N}\sum_{\substack{u \in \L^*, q, r \in \L^*_+, v,w \in P_L\\ u\neq +r-w, -q, v\neq -r, } } 
				\widehat{V}(u/e^N)\eta_r \big(a^*_{w-r+u}a^*_{q}a^*_{v+r}a_{q+u}a_va_w + \hc\big) \\
				V_3&=-\fra{1}{2N}\sum_{\substack{u \in \L^*, q, r \in \L^*_+, v,w \in P_L\\ v\neq -r, w\neq r,u \neq v,-q} } 
				\widehat{V}(u/e^N)\eta_r \big(a^*_{v+r}a^*_{w-r}a^*_{q}a_{w}a_{v-u}a_{q+u} +\hc\big)\,. 
			\end{split}
			\]
Using the definition $\eta_r=- N \widehat{w}_{N,\ell}(r)$, and $\widehat{w}_{N,\ell}(r)=\d_{r,0} -\widehat{f}_{N,\ell}(r)$ we further split the first term on the r.h.s. of \eqref{eq:commVND-0}, thus getting
		\begin{equation}\label{eq:commVND-02}
[\cV_N,D] =-\fra{1}{4}\sum_{\substack{u \in \L^*\\ v,w \in P_L\\ u\neq -v, w} } \widehat{V}(u/e^N)\, \big(a^*_{v+u}a^*_{w-u}a_va_w + \hc\big) + \sum_{i=1}^5 V_i 			\end{equation}
			with
			\[
			\label{eq:V4} 
\begin{split}
		V_4 & = \fra{1}{4}\sum_{\substack{u \in \L^*\\ v,w \in P_L\\ u\neq -v, w} } 				\widehat{V}(u/e^N)\eta_0 \big(a^*_{v+u}a^*_{w-u}a_va_w + \hc\big)\\
		V_5 &= \fra{1}{4}\sum_{\substack{u \in \L^*\\ v,w \in P_L\\ u\neq -v, w} } 
				(\widehat{V}(\cdot/e^N)\,*\widehat{f}_{N,\ell})(u) \big(a^*_{v+u}a^*_{w-u}a_va_w + \hc\big) \,.
			\end{split} \]
			To conclude the proof of (\ref{eq:commVN}), we are going to bound the terms $V_i$, $i=1,\ldots, 5$. We notice that $V_5=-K_1$ (see  \eqref{eq:commcKD-0}), hence it satisfies the bound in \eqref{eq:K0}. On the other hand, with $|\eta_0|\leq N^{-2\a}$ and 
\[ \label{eq:intV-r2}
\sup_{v \in \L^*_+} \sum_{u \in \L^*} \fra{|\widehat{V}(u/e^N)|}{|u+v|^2}  \leq C N
\]
(which can be proved similarly as in \eqref{eq:intVfNell}), with the difference that $\| \hat{V} (./e^N)\|_\infty \leq C$),  we have 
			\[
			\begin{split}
				|\langle \xi,V_4\xi\rangle|&\leq CN^{-2\a}\, \Big[\sum_{\substack{u \in \L^*, v,w \in P_L\\ u\neq -v, w} } \fra{|\widehat{V}(u/e^N)|}{|u+v|^2}  |v|^2\, \|a_{v}a_{w}\xi\|^2\Big]^{1/2}\\
				&\hspace{4cm}\times\Big[\sum_{\substack{u \in \L^*, v,w \in P_L\\ u\neq -v, w} } \frac{1}{|v|^2}\,|u+v|^2 \|a_{v+u}a_{w-u}\xi\|^2\Big]^{1/2}\\
				&\leq C N^{1/2 -2\a} (\log N)^{1/2}\|\cK^{1/2}(\cN_++1)^{1/2}\xi\|^2\,.
			\end{split}
			\]
			Next we bound $V_1$. We split $V_1$ in two terms $V_{11}$ and $V_{12}$, defined by restricting to the cases $v+r \in P_L$ and $v+r \in P_L^c$ respectively; we have
			\[
			\begin{split}
				|\langle \xi,V_{11}\xi\rangle|&\leq \fra CN \Big[\sum_{\substack{u \in \L^*, r \in \L^*_+, v,w \in P_L:\\ v+r \in P_L,\, v\neq -r, w\neq r} } \fra{|\widehat{V}(u/e^N)|}{|u+v|^2}\,|v+r|^2 \|a_{v+r}a_{w-r}\xi\|^2\Big]^{1/2}\\
				&\hspace{2cm}\times\Big[\sum_{\substack{u \in \L^*, r \in \L^*_+, v,w \in P_L:\\v+r \in P_L,\, v\neq -r, w\neq r} } \fra{|\eta_r|^2}{|v+r|^2}\,|u+v|^2 \|a_{v+u}a_{w-u}\xi\|^2 \Big]^{1/2}\\
				&\leq CN^{\nu-1/2}(\log N)^{1/2}\|\cK^{1/2}(\cN_++1)^{1/2}\xi\|^2\,.
			\end{split}
			\]
			As for $V_{12}$, we have
			\[
			\begin{split}
				|\langle \xi,V_{12}\xi\rangle|&\leq \fra CN \,\Big[\sum_{\substack{u \in \L^*, r \in \L^*_+, v,w \in P_L:\\ v+r \in P_L^c,\, v\neq -r, w\neq r} } \fra{|\widehat{V}(u/e^N)|}{|u+v|^2}|v+r|^2 \|a_{v+r}a_{w-r}\xi\|^2\Big]^{1/2}\sup_{v+r \in P_L^c}\fra{1}{|v+r|}\\
				&\hspace{2cm}\times\Big[\sum_{\substack{u \in \L^*, r \in \L^*_+, v,w \in P_L:\\v+r \in P_L^c,\, v\neq -r, w\neq r} } |u+v|^2 \|a_{v+u}a_{w-u}\xi\|^2|\eta_r|^2\Big]^{1/2}\\
				&\leq CN^{\nu-1/2}(\log N)^{1/2}\|\cK^{1/2}(\cN_++1)^{1/2}\xi\|^2.
			\end{split}
			\]
			Next, we focus on $V_2$. We get:
			\[
			\begin{split}
				|\langle\xi,V_2\xi\rangle| &\;\leq \frac CN \,
				\Big[\sum_{\substack{u \in \L^*, q, r \in \L^*_+, v,w \in P_L\\ u\neq +r-w, -q, v\neq -r, } } 
				\frac{|\widehat{V}(u/e^N)|^2}{|w-r+u|^2}\,|\eta_r|^2\,v^2\,\|a_va_{q+u}a_w\xi\|^2\Big]^{1/2}\\
				&\hspace{2cm}\times \Big[\sum_{\substack{u \in \L^*, q, r \in \L^*_+, v,w \in P_L\\ u\neq +r-w, -q, v\neq -r, } } \frac{|w-r+u|^2}{|v|^2} \|a_{w-r+u}a_{q}a_{v+r}\xi\|^2 \Big]^{1/2}\\
				&\;\leq CN^{\n-1/2}(\log N)^{1/2}\|\cK^{1/2}(\cN_++1)\xi\|^2\,.
			\end{split}
			\]
			Finally to estimate $V_3$, we consider the contributions coming from $q \in P_L$ and $q \in P_L^c$ separately, which we denote with $V_{31}$ and $V_{32}$ respectively. We get
			\[
			\begin{split}
				|\langle\xi,V_{31}\xi\rangle| &\;\leq  \fra CN \Big[\sum_{\substack{u \in \L^*, r \in \L^*_+, q,v,w \in P_L\\ v\neq -r, w\neq r,u \neq v,-q} } 
				\frac{|\widehat{V}(u/e^N)|^2}{|v-u|^2}\,|q|^2\, \|a_{q}a_{v+r}a_{w-r}\xi\|^2 \Big]^{1/2}\\
				&\hspace{2cm}\times\Big[\sum_{\substack{u \in \L^*, r \in \L^*_+, q,v,w \in P_L\\ v\neq -r, w\neq r,u \neq v,-q} } |\eta_r|^2\frac{1}{|q|^2}|v-u|^2 \|a_{w}a_{v-u}a_{q+u}\xi\|^2\Big]^{1/2}\\
				& \;\leq CN^{\nu-1/2}(\log N)^{1/2}\|\cK^{1/2}(\cN_++1)\xi\|^2\,,
			\end{split}
			\]
			and 
			\[
			\begin{split}
				|\langle\xi,V_{32}\xi\rangle| &\;\leq  \fra CN\,\Big[\sum_{\substack{u \in \L^*, r \in \L^*_+, \\v,w \in P_L, q\in P_L^c\\ v\neq -r, w\neq r,u \neq v,-q} } 
				\frac{|\widehat{V}(u/e^N)|^2}{|v-u|^2}\,|q|^2\, \|a_{q}a_{v+r}a_{w-r}\xi\|^2 \Big]^{1/2}\sup_{q \in P_L^c}\frac{1}{|q|}\\
				&\hspace{2cm}\times\Big[\sum_{\substack{u \in \L^*, r \in \L^*_+,\\ v,w \in P_L, q\in P_L^c\\ v\neq -r, w\neq r,u \neq v,-q} } |\eta_r|^2|v-u|^2 \|a_{w}a_{v-u}a_{q+u}\xi\|^2\Big]^{1/2}\\
				& \;\leq CN^{\nu-1/2} \|\cK^{1/2}(\cN_++1)\xi\|^2\,.
			\end{split}
			\]
			This concludes the proof of (\ref{eq:commVN}), (\ref{eq:errorcommVN}). In order to show (\ref{eq:commVNLowD}), we observe that 
\begin{equation}\label{eq:VNLres} \cV_N^{(L)} = \frac{1}{4N} \sum_{u,v,w \in \L^*} \hat{V} (u/e^N) a_{v+u}^* a_{w-u}^* a_v a_w \left[ \chi (v,w \in P_L) + \chi (v+u , w-u \in P_L) \right]\,.  \end{equation} 
A part from the restrictions on the momenta, this is just the potential energy operator $\cV_N$. Thus, the commutator $[\cV_N^{(L)}, D]$ will produce the same terms as the commutator $[\cV_N , D]$, just with additional restrictions on the momenta. We already proved that the operators $V_1,V_2, V_3$ on the r.h.s. of  (\ref{eq:commVND-0}) can be bounded by the r.h.s. of (\ref{eq:errorcommVN}); this will not change with the additional constraints. To conclude the proof of  (\ref{eq:commVNLowD}), we only have to show that also the first sum on the r.h.s. of (\ref{eq:commVND-0}), when restricted to momenta determined by (\ref{eq:VNLres}), can be bounded by the r.h.s. of 
(\ref{eq:errorcommVN}). This follows from 
\[ \begin{split}  \Big| \frac{1}{N} \sum_{\substack{u \in \L^*, r \in \L^*_+, v,w \in P_L : \\ v+r, w-r \in P_L}}&\widehat{V} ((u-r)/e^N) \eta_r \langle \xi, a_{v+u}^* a_{w-u}^* a_v a_w \xi \rangle \Big| \\ \leq \; &\frac{1}{N} \sum_{\substack{u \in \L^*, r \in \L^*_+, v,w \in P_L : \\ v+r, w-r \in P_L}} |\widehat{V} ((u-r)/e^N)| \, |\eta_r| \| a_{w-u} a_{v+u} \xi \| \|   a_v a_w \xi \| \\ \leq \; &\frac{1}{N} \Big[  \sum_{\substack{u \in \L^*, r \in \L^*_+, v,w \in P_L : \\ v+r, w-r \in P_L}} \frac{|\eta_r|^2}{v^2} (w-u)^2 \| a_{w-u} a_{v+u} \xi \|^2 \Big]^{1/2}  \\ &\hspace{2cm} \times  \Big[   \sum_{\substack{u \in \L^*, r \in \L^*_+, v,w \in P_L : \\ v+r, w-r \in P_L}} \frac{|\widehat{V} ((u-r)/e^N)|^2}{(w-u)^2}  v^2 \|   a_v a_w \xi \|^2 \Big]^{1/2}  \\ \leq \; &C N^{\nu-1/2} (\log N)^{1/2} \| \cK^{1/2} (\cN_+ + 1)^{1/2} \xi \|^2\,. 
\end{split} \] 
\end{proof}
		
With Lemma \ref{prop:commDcVN}, we can now show the validity of Lemma \ref{lm:growthVN}.

		\begin{proof}[Proof of Lemma \ref{lm:growthVN}]  We write  
			\[
			\begin{split}e^{-D}\cV_N e^{D} &=\; \cV_N + \int_0^1ds\; e^{-sD}[\cV_N,D]e^{sD}\\
				& =\; \cV_N +\int_0^1ds\;e^{-sD}[ -\cV_N^{(L)} +\cE_{[\cV_N , D]}]e^{sD}\,.
				\end{split}\]
Expanding once more the integral and using \eqref{eq:errorcommVN},\eqref{eq:commVNLowD} as well as Lemma \ref{lm:aprioriestimateKD}, we obtain
			\be\begin{split}  \label{eq:expD-cV-expD}
e^{-D}\cV_N e^{D} & = \;\cV_N -\cV_N^{(L)}+ \int_0^1 ds\,  e^{-sD} \cE_{[\cV_N , D]} e^{sD} + \int_0^1ds\int_0^s dt \,e^{-tD}[\cV_N^{(L)}  ,D]e^{tD} \\ 
			&=: \cV_N^{(H)} + \cE_{\cV_N}  \\
\end{split}\ee
with $\cV_N^{(H)}$ as defined in (\ref{eq:defcV-H}) and where 
\be \label{eq:tildecEVN}
\pm \cE_{\cV_N}  \leq  CN^{\n-1/2}(\log N)^{1/2}  \cK(\cN_++1)^{\k+4}\,.
\ee
Here $\k \in \bN$ is the smallest integer such that $\k> 4(\a+\n -1/2)$ and $N$ is large enough.
\end{proof}
			

\section{Diagonalization of quadratic Hamiltonians}
\label{sec:diag}

From Prop.\,\ref{prop:expD-cRN}, we observe that the renormalized Hamiltonian $e^{-D}\cR_N e^D$ can be approximated, on low-momentum states and up to negligible errors, by the quadratic (Bogoliubov) Hamiltonian 
\begin{equation}\label{eq:RNBog} \cR_N^\text{Bog} = C_\cR + Q_\cR^{(L)} \end{equation} 
which we are going to diagonalize in Sect.\,\ref{sub:dia-up}; this will be used later to prove upper bounds on the eigenvalues of (\ref{eq:Ham0}). 

On the other hand, by Prop. \ref{prop:RN}, the excitation Hamiltonian $\cR_N$ can be bounded below by the quadratic operator appearing on the r.h.s. of (\ref{eq:prRN-3}), which will be diagonalized in Sect.\,\ref{sub:dia-low}; this will allow us later to establish lower bounds on the eigenvalues of (\ref{eq:Ham0}).

\subsection{Diagonalization of (\ref{eq:RNBog})} 
\label{sub:dia-up} 

For $p \in \L^*_+$ we introduce the notation 
\begin{equation}
\label{eq:defFpGp}
\text{F}_p = p^2+\widehat{\o}_N(p), \qquad \text{G}_p =  \widehat{\o}_N(p).
\end{equation}
\begin{lemma}
	\label{lm:boundsFG}
	Let $V \in L^3(\bR^2)$ be non-negative, compactly supported and spherically symmetric. Let $\text{F}_p$ and $\text{G}_p$ be defined as in \eqref{eq:defFpGp}. Then there exists a constant $C>0$ such that
	\[(i)\;\fra{p^2}2\leq \text{F}_p\leq C(1+p^2), \qquad (ii)\;|\text{G}_p|\leq \fra{C}{(1+|p|/N^\a)^{3/2}}, \qquad(iii)\;|\text{G}_p|<\text{F}_p\]
	for all $p\in \L^*_+$. 
\end{lemma}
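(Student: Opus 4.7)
The three claims all follow from the pointwise estimates on $\widehat{\omega}_N(p)$ stated in Lemma \ref{lm:eta}, namely that $\widehat{\omega}_N(p) \geq 0$ whenever $\ell|p| \leq 1$ (i.e. $|p| \leq N^{\alpha}$), together with the uniform decay estimate
\[ |\widehat{\omega}_N(p)| \leq C \min\Big\{1, \frac{1}{(\ell|p|)^{3/2}} \Big\} = C \min \Big\{1 , (N^{\alpha}/|p|)^{3/2} \Big\}. \]
The proof is thus a short case distinction in $|p|$, with the threshold $|p| = N^{\alpha}$ provided by the sign/decay dichotomy of $\widehat{\omega}_N$.

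For (ii), I would simply compare $\min\{1,(N^\alpha/|p|)^{3/2}\}$ with $(1+|p|/N^\alpha)^{-3/2}$: in the low-momentum regime $|p| \leq N^\alpha$ both sides are comparable to a constant, while in the high-momentum regime $|p| > N^\alpha$ both are comparable to $(N^\alpha/|p|)^{3/2}$. Thus the two quantities are equivalent up to universal multiplicative constants and the bound in Lemma \ref{lm:eta} directly yields (ii).

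For (i), the upper bound $F_p \leq C(1+p^2)$ is immediate from $|\widehat{\omega}_N(p)| \leq C$. For the lower bound $F_p \geq p^2/2$, I would split cases. When $|p| \leq N^\alpha$ we have $\widehat{\omega}_N(p) \geq 0$ and hence $F_p \geq p^2$. When $|p| > N^\alpha$, the decay bound gives $|\widehat{\omega}_N(p)| \leq C (N^\alpha/|p|)^{3/2} \leq C$, while $p^2 > N^{2\alpha}$, so $|\widehat{\omega}_N(p)| \leq p^2/2$ for $N$ large enough (recall $\alpha \geq 5/2 > 0$), giving $F_p \geq p^2/2$.

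Finally (iii) follows from essentially the same splitting. If $\widehat{\omega}_N(p) \geq 0$ (in particular for $|p| \leq N^\alpha$), then $F_p - |G_p| = p^2 \geq (2\pi)^2 > 0$ since the smallest nonzero momentum in $\Lambda^*_+$ has $|p| = 2\pi$. If instead $\widehat{\omega}_N(p) < 0$, which can only occur for $|p| > N^\alpha$, then $F_p - |G_p| = p^2 + 2\widehat{\omega}_N(p) \geq p^2 - 2C (N^\alpha/|p|)^{3/2} > 0$ for $N$ sufficiently large, by exactly the same estimate as in (i). No step is really an obstacle here; the only point requiring slight attention is ensuring that in the high-momentum regime the decay of $|\widehat{\omega}_N(p)|$ dominates over the (small) possibility of a negative sign, which is handled uniformly by comparing against $p^2$.
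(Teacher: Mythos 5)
Your proposal is correct and follows essentially the same route as the paper: all three parts are reduced to the sign and decay bounds for $\widehat{\omega}_N(p)$ from Lemma \ref{lm:eta}, with the case split at $|p|=N^\alpha$. The only cosmetic difference is in (iii), where the paper computes $F_p - G_p = p^2 > 0$ and $F_p + G_p = p^2 + 2\widehat{\omega}_N(p) \geq p^2/2 > 0$ directly and concludes $|G_p|<F_p$, whereas you split on the sign of $\widehat{\omega}_N(p)$ — logically equivalent, just phrased differently.
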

\begin{proof} 
The upper bound in i) follows easily from Lemma \ref{lm:eta}. For the lower bound we use that, from Lemma \ref{lm:eta}, $\widehat{\o}_N(p)\geq 0$ for $|p| \leq N^\a$ and $|\widehat{\o}_N(p)|\leq C N^{3\a/2}/|p|^{3/2} < p^2/2$, for $|p| \geq N^\alpha$.

Part ii) follows from Lemma \ref{lm:eta}. Finally, we show iii). On the one hand we have $\text F_p - \text{G}_p = p^2 >0$; on the other hand it is easy to show that  $\text F_p + \text G_p = p^2 + 2\, \widehat \o_N(p) \geq p^2/2 >0$, arguing as we did for the lower bound in part i). Thus, $|\text G_p| < \text{F}_p$.
\end{proof}
By Lemma \ref{lm:boundsFG}, part (iii), we can introduce, for an arbitrary $p \in \Lambda^*$, the coefficient $\tau_p$, requiring that 
\begin{equation*}
\label{eq:deftau}
\tanh(2\tau_p)= - \fra{\text G_p}{\text{F}_p }\,.
\end{equation*}
We define the antisymmetric operator 
\begin{equation}\label{eq:defBt}B_\t = \fra 12 \sum_{p\in P_L}\t_p(b^*_{-p}b^*_{p}-b_{-p}b_p)\,,\end{equation}
with the low-momentum set $P_L$ defined in \eqref{eq:defPL}. The generalized Bogoliubov transformation $e^{B_\tau}$ has the following properties.
\begin{lemma}
	\label{lm:growcNcH}
Let $B_\t$ be defined in \eqref{eq:defBt}. Then, under the same assumptions of Theorem  \ref{thm:main} and for any $k \in \bN$, there exists a constant $C_k>0$ (depending on $k$) such that 
\begin{equation}
	\label{eq:growthcNcHD}
	\begin{split}
	e^{-B_\t}(\cN_++1)^ke^{B_\t}&\leq C_k(\cN_++1)^k\\
	e^{-B_\t} (\cK+1) (\cN_++1)^ke^{B_\t}&\leq C_k \cK (\cN_++1)^k+ C_k(\log N)(\cN_++1)^{k+1}\\
	e^{-B_\t}\cV_N(\cN_++1)^ke^{B_\t}&\leq C_k\cV_N(\cN_++1)^k+ C_k (\log N)^2(\cN_++1)^{k+2} \,.
	\end{split}
	\end{equation}
	%
\end{lemma}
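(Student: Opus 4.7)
The plan is to establish each of the three bounds via Gronwall's inequality: for a given operator $\cO \in \{(\cN_++1)^k,\; (\cK+1)(\cN_++1)^k,\; \cV_N(\cN_++1)^k\}$ and a vector $\xi \in \cF_+^{\leq N}$, I would define $h_\xi(s) = \langle \xi, e^{-sB_\t} \cO\, e^{sB_\t} \xi \rangle$ for $s \in [0,1]$, compute $h_\xi'(s) = \langle \xi, e^{-sB_\t}[\cO,B_\t]e^{sB_\t}\xi\rangle$, and derive a differential inequality of the form $|h_\xi'(s)| \leq C h_\xi(s) + (\text{lower-order terms})$ that closes after integration.

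The preliminary step is to extract quantitative information on the coefficients $\t_p$. From $\tanh(2\t_p) = -\text{G}_p/\text{F}_p$ I would derive $\t_p = \tfrac14 \log\bigl(p^2/(p^2 + 2\widehat\o_N(p))\bigr)$ and, using Lemma \ref{lm:eta} and Lemma \ref{lm:boundsFG}, deduce the uniform estimates $\|\t\|_\infty \leq C$, $\|\t\|_2 \leq C$, together with $\sum_{p\in P_L} p^2 \t_p^2 \leq C\log N$ (the logarithm coming via \eqref{eq:omegasup2}). These are precisely the quantities which will appear when the commutators with $B_\t$ are controlled by Cauchy--Schwarz.

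For the first estimate, the commutators $[b^*_{-p}b^*_p, (\cN_++1)^k]$ and $[b_{-p}b_p,(\cN_++1)^k]$ shift $\cN_+$ by $\pm 2$ and are bounded pointwise by $C_k(\cN_++1)^k$. After Cauchy--Schwarz with weight $\|\t\|_2^2 \leq C$, I obtain $|[B_\t,(\cN_++1)^k]| \leq C_k(\cN_++1)^k$, and Gronwall closes. For the second estimate, the main commutator is
\[
[\cK, B_\t] = \tfrac12 \sum_{p\in P_L} p^2 \t_p \bigl(b^*_p b^*_{-p} + b_{-p}b_p\bigr),
\]
which, via Cauchy--Schwarz using $\sum_p p^2\t_p^2 \leq C\log N$, gives $\pm[\cK,B_\t] \leq C(\log N)(\cN_++1)$. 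Iterating this bound, using the first estimate to push the $(\cN_++1)^k$ factor through the $s$-integration and the fact that $[\cN_+,B_\t]$ is controlled by $\cN_+$ itself, yields the desired $\cK$-bound.

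The main obstacle is the third estimate, which controls the growth of $\cV_N$. Here one expands $[\cV_N,B_\t]$ using the same normal-ordering computation as in \eqref{eq:commquartico}--\eqref{eq:commVND-02}: a quartic--quadratic commutator produces sextic terms (which mix $\cV_N$-type structures with an extra $b^\#,b^\#$ pair), together with quartic terms in which one $a^\#$ from $\cV_N$ is contracted against one $b^\#$ from $B_\t$. Each summand is estimated by Cauchy--Schwarz, splitting the factors so that one half reconstructs $\cV_N^{1/2}$ (using $\widehat V(u/e^N)$ as the weight) and the other half carries $\t_p$ together with a $(\cN_++1)^{1/2}$; the sextic contributions produce one power of $\cN_+$ more than the quartic ones, and the log factors arise from the weighted sums in $\t_p$. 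After a two-step Gronwall iteration (using the first two bounds of the Lemma to control the factor $\cK(\cN_++1)^{k+1}$ that appears in intermediate estimates), the $(\log N)^2 (\cN_++1)^{k+2}$ remainder emerges as the accumulated cost of twice paying a $\log N$ when moving $\cK$ or $\cV_N$ past $B_\t$. The bookkeeping of the many terms in $[\cV_N,B_\t]$ and the careful separation between low-momentum ($P_L$) and high-momentum contributions are the most delicate parts; positivity of $\cV_N$ is used throughout to absorb the leading piece.
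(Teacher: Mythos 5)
Your overall Gronwall strategy matches the paper, and the first estimate is carried out exactly as in the paper's proof. For the second estimate you correctly identify $[\cK,B_\tau]=\sum_{p\in P_L}p^2\tau_p(b^*_pb^*_{-p}+b_pb_{-p})$, but your claim $\pm[\cK,B_\tau]\leq C(\log N)(\cN_++1)$ is too strong: what Cauchy--Schwarz with $\sum_p p^2\tau_p^2\leq C\log N$ actually yields is $|\langle\xi,[\cK,B_\tau]\xi\rangle|\leq C(\log N)^{1/2}\|\cK^{1/2}\xi\|\|(\cN_++1)^{1/2}\xi\|$, hence $\pm[\cK,B_\tau]\leq\cK+C(\log N)(\cN_++1)$. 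The $\cK$ on the right-hand side is essential; it is precisely what makes this a genuine Gronwall argument rather than a direct integration, and the paper writes it this way. This is a correctable slip, not a change of route.

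The third estimate has a genuine structural error. You claim that expanding $[\cV_N,B_\tau]$ ``produces sextic terms,'' drawing an analogy with \eqref{eq:commquartico}--\eqref{eq:commVND-02}. But those equations compute $[\cV_N,D]$ with $D$ \emph{quartic}; here $B_\tau$ is quadratic, so the commutator $[\cV_N,B_\tau]$ contains only quartic and quadratic terms (a degree-4 operator commuted with a degree-2 operator has degree 4, with degree 2 arising from double contractions). The paper's proof writes $[\cV_N,B_\tau]$ explicitly as the sum of a quadratic piece $\frac12\sum_{w\in P_L,r}\widehat V(r/e^N)\tau_w(b^*_{w-r}b^*_{-w+r}+\hc)$ and a quartic piece $\sum\widehat V(r/e^N)\tau_v(b^*_{v+r}b^*_{-v}a^*_{w-r}a_w+\hc)$, and then passes to position space, using $\|\tau\|_1\leq C\log N$ for the quadratic piece and $\|\tau\|_2\leq C$ together with the already-proved $\cN_+$-growth for the quartic piece; positivity of $\cV_N$ enters via $\|\cV_N^{1/2}\cdot\|$. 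Your coefficient inventory omits $\|\tau\|_1\leq C\log N$, which is the source of the $\log N$ in this part of the argument, and the sextic-terms premise would lead you to attempt estimates on contributions that simply are not present. If you redo the commutator correctly and add the $\ell^1$ bound on $\tau$, the rest of your plan (Cauchy--Schwarz, absorb with $\cV_N^{1/2}$, Gronwall) recovers the paper's proof.
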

\begin{proof}
We proceed similarly as in \cite[Lemma 5.2]{BBCS4}. From Lemma \ref{lm:boundsFG} and from $|\tau_p| \leq C |G_p|/ F_p$, we easily obtain 
\begin{equation}\label{eq:bds-tau} \| \tau \|_2 \leq C , \qquad \| \tau \|^2_{H^1} , \| \tau \|_1 \leq C \log N\,. \end{equation} 
To show the first bound in (\ref{eq:growthcNcHD}), for $k=1$, we consider, for a fixed $\xi \in  \cF_+^{\leq N}$, 
\[  f_\xi(s) = \langle\xi, e^{-sB_\t} (\cN_+ + 1) e^{sB_\t}\xi\rangle\,. \] 
With 
\[ f'_\xi(s) = \langle\xi, e^{-sB_\t}[(\cN_++1),B_\t]e^{sB_\t}\xi\rangle =  \sum_{p \in P_L} \tau_p \langle \xi, e^{-sB_\t} (b_p b_{-p} + b_p^* b_{-p}^*) e^{sB_\t} \xi \rangle \,,\]
and using $\| \tau \|_2 \leq C$, we obtain $|f'_\xi (s)| \leq C f_\xi (s)$. With Gronwall, we obtain  the first bound in (\ref{eq:growthcNcHD}), for $k=1$. The case $k > 1$ can be handled similarly. 

As for the second estimate in (\ref{eq:growthcNcHD}), let us consider the case $k=0$. For $\xi \in \cF_+^{\leq N}$, we set 
\[ g_\xi (s) = \langle\xi, e^{-sB_\t} \cK e^{sB_\t} \xi \rangle \]
and we compute 
\[ g'_\xi (s) = \langle \xi, e^{-sB_\t}[\cK,\,B_\t ] e^{sB_\t} \xi \rangle = \sum_{p \in P_L} p^2 \tau_p \langle \xi ,  e^{-sB_\t} (b_p b_{-p} + b^*_p b_{-p}^*)  e^{s B_\t} \xi \rangle\,. \]
Using $\| \tau \|^2_{H^1} \leq C \log N$, we obtain 
\[ \begin{split}  |g'_\xi (s)| &\leq C (\log N)^{1/2} \| \cK^{1/2} e^{s B_\t} \xi \| \| (\cN_+ +1)^{1/2} e^{s B_\t} \xi \| \\ &\leq g_\xi (s) + C (\log N) \langle \xi, e^{-s B_\t} (\cN_+ + 1) e^{s B_\t} \xi \rangle \leq g_\xi (s) + C (\log N) \langle \xi, (\cN_+ + 1) \xi \rangle\,, \end{split} \] 
where we used the estimate for the growth of $\cN_+$, shown above. By Gronwall, we obtain the second bound in (\ref{eq:growthcNcHD}), for $k=0$. The case $k > 0$ can be treated analogously (in this case, $g'_\xi (s)$ contains an additional contribution, arising from the commutator of $B_\t$ with $(\cN_+ + 1)^k$, which can also be treated similarly; for more details, see \cite[Lemma 5.2]{BBCS4}). 

Finally, let us show the last estimate in (\ref{eq:growthcNcHD}), focussing again on the case $k=0$. For fixed $\xi \in \cF_+^{\leq N}$, we define 
\[ h_\xi(s) = \langle\xi, e^{-sB_\t}\cV_N e^{sB_\t} \xi\rangle \, . \]
We have 
\[ \begin{split} 
h_\xi'(s) &= \langle\xi, e^{-sB_\t} [\cV_N , B_\t] e^{sB_\t}\xi\rangle \\ 
= \; &\fra 12 \sum_{w \in P_L, r\in \L^*_+}\widehat{V}(r/e^N)\t_w  \langle\xi, e^{-sB_\t} (b^*_{w-r}b^*_{-w+r}+b_{w-r}b_{-w+r}) e^{sB_\t}\xi\rangle \\ &+ \sum_{\substack{v \in P_L,r\in \L^*\\w\in \L^*_+}}\widehat{V}(r/e^N)\t_v  \langle\xi, e^{-sB_\t} (b^*_{v+r}b^*_{-v}a^*_{w-r}a_w+\hc) e^{sB_\t}\xi\rangle\,. \end{split} \] 
Switching to position space, we find 
\[ \begin{split} h_\xi'(s) = &\; \fra 12 \int_{\L^2}dxdy\;  \sum_{w\in P_L}e^{-iw \cdot (x-y)}\t_w \; e^{2N}V(e^N(x-y)) \langle\xi, e^{-sB_\t} (\check b^*_x\check b^*_y+\check b_x\check b_y)e^{sB_\t}\xi\rangle  \\
	&+ \int_{\L^2} dxdy\;  \sum_{v\in P_L}e^{-iv\cdot x}\t_v \; e^{2N}V(e^N(x-y)) \langle\xi, e^{-sB_\t} (\check b^*_x b^*_{-v}\check{a}^*_{y}\check a_y+\hc) e^{sB_\t}\xi\rangle \, .
	\end{split} \]
Using $\|\t\|_1\leq C\log N$, $\|\t\|\leq C$ and the first estimate in (\ref{eq:growthcNcHD}) (for $k=2$), we find 
\[  \begin{split} 
|h'_\xi (s)| &\leq C (\log N) \| \cV_N^{1/2} e^{-s B_\t} \xi \| \| \xi \| + C  \| \cV_N^{1/2} e^{s B_\t} \xi \| \| (\cN_+ + 1) e^{s B_\t} \xi \| \\ &\leq h_\xi (s)  + C (\log N)^2  \langle \xi , (\cN_+ + 1)^2 \xi \rangle\,. \end{split} \]
By Gronwall, we obtain the last bound in (\ref{eq:growthcNcHD}), for $k=0$. The case $k > 0$ can be treated similarly. 
\end{proof}

In the next proposition, we show that conjugation with the generalized Bogoliubov transformation $e^{B_\tau}$ diagonalizes the quadratic Hamiltonian (\ref{eq:RNBog}), up to negligible errors. 
\begin{prop} \label{prop:gsandspectrum}
	Let $V \in L^3(\bR^2)$ be non-negative, compactly supported and spherically symmetric. Let $\cR_N^\text{Bog}$ be defined as in (\ref{eq:RNBog}) (with $C_\cR$ and $Q_\cR^{(L)}$ defined in \eqref{eq:defC-cR} and, respectively, \eqref{eq:defQNL} with parameters $\alpha \geq 5/2$ and $\nu  \in (0;1/2)$). Let
\be \label{eq:EBog}
S_{\rm{Bog}}  := \fra 12 \sum_{p\in \L^*_+} \Big(\sqrt{p^4+8\pi p^2}-p^2-4\pi + \fra{(4\pi)^2}{2p^2}\Big)\,.
\ee
Then
\be \label{eq:cM}
 e^{-B_\t}\cR_N^{\rm{Bog}} e^{B_\t} = E_N^{\rm{Bog}} + \sum_{p\in P_L}\sqrt{p^4+8\pi p^2}\;a_p^*a_p  + \d_{\rm{Bog}}
\ee
where 
\be		\label{eq:EcM}
			E_N^{\rm{Bog}}= 2 \pi (N-1) + \pi^2 \frak{a}^2 +S_{\rm{Bog}}-4\pi^2  \sum_{\substack{p\in \L^*_+}}\fra{J_0(|p| \frak{a})}{|p|^2}\,,
			\ee
and the error term $\d_{\rm{Bog}}$ is bounded by
\[ \pm \d_{\rm{Bog}}  \leq  C N^{-1/2}(\log N)(\cH_N+1)(\cN_++1)+ CN^{-3\nu} \,,\]
for $N$ large enough.
\end{prop}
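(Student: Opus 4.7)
The proof follows the standard scheme for diagonalizing a bosonic quadratic Hamiltonian via Bogoliubov transformation, followed by a careful evaluation of the emerging constant to match it with $E_N^{\rm Bog}$. I would proceed in four steps.

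First, I would rewrite $Q_\cR^{(L)}$, defined in \eqref{eq:defQNL}, in terms of the true creation/annihilation operators $a_p, a_p^*$ rather than the modified $b_p, b_p^*$, accepting errors of size $O(\cN_+/N)$ controlled through \eqref{eq:bcomm} together with the bounds $|F_p| \leq C(1+p^2)$, $|G_p| \leq C$ from Lemma \ref{lm:boundsFG}. Similarly, during the conjugation by $e^{B_\tau}$, I would use the approximation $e^{-B_\tau} a_p e^{B_\tau} \simeq \cosh(\tau_p) a_p + \sinh(\tau_p) a_{-p}^*$, with commutator remainders bounded via Lemma \ref{lm:growcNcH} and the estimates \eqref{eq:bds-tau} on $\|\tau\|_2$, $\|\tau\|_1$, $\|\tau\|_{H^1}$. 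With the choice $\tanh(2\tau_p) = -G_p/F_p$, the off-diagonal terms cancel, leaving a diagonal form with coefficients
\[ \sqrt{F_p^2 - G_p^2} \;=\; |p|\sqrt{p^2 + 2\widehat{\omega}_N(p)} \]
and a constant $\tfrac{1}{2}\sum_{p\in P_L}\big[\sqrt{F_p^2-G_p^2}-F_p\big]$.

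Second, I would show these coefficients are close to $\sqrt{p^4+8\pi p^2}$. The difference $\widehat{\omega}_N(p) - 4\pi$ consists of (i) the asymptotic error $\widehat{\omega}_N(0) - 4\pi = O(\log N /N)$, computable from Lemma \ref{lm:propomega}(ii)-(iii) combined with \eqref{eq:omegahat0}; and (ii) the $p$-dependent correction $\widehat{\omega}_N(p) - \widehat{\omega}_N(0) = g_N(\widehat{\chi}(\ell p) - \pi)$, with $\widehat{\chi}(\ell p)$ given by the Bessel-function expression \eqref{eq:chiellp}; this correction is negligible for small $\ell|p|$ and decays for large $\ell|p|$, contributing $O(N^{-3\nu})$ after summing over $p\in P_L$. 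Pairing this with $a_p^* a_p$ and the a priori energy bounds on trial states yields the stated $N^{-3\nu}$ error term.

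Third, I would evaluate the constant
\[ C_\cR + \tfrac{1}{2}\sum_{p\in P_L}\big[\sqrt{F_p^2-G_p^2}-F_p\big]. \]
The first summand in $C_\cR$ gives $2\pi(N-1)$ to leading order from Lemma \ref{lm:propomega}(iii). The second summand $\tfrac{1}{2}\sum_p \widehat{\omega}_N(p)\eta_p$ is rewritten using the scattering equation \eqref{eq:eta-scat}: the Bessel structure of $\widehat{\omega}_N = g_N\widehat{\chi}(\ell\,\cdot\,)$ combined with the real-space form $\check{\eta}(x) \sim -\log(\ell/|x|)$ (from Lemma \ref{lm:propomega}(iv)) produces both the $\pi^2\frak{a}^2$ contribution and the Bessel-function corrections $\sum J_0(|p|\frak{a})/|p|^2$. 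The remaining sum $\tfrac{1}{2}\sum_{p\in P_L}[|p|\sqrt{p^2+2\widehat{\omega}_N(p)}-p^2-\widehat{\omega}_N(p)]$ matches $S_{\rm Bog}$ up to tail contributions from $|p| > N^{\alpha+\nu}$, which are absolutely summable and contribute to the $N^{-3\nu}$ error.

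The main obstacle is Step 3: isolating the $O(1)$ constant $\pi^2\frak{a}^2$ from the competition of $N$- and $\ell$-dependent terms requires the sub-leading asymptotics in Lemma \ref{lm:propomega}(ii)-(iii) and the $\ell$-independence identity cited in Remark (iv),
\[ -2\pi\log(\ell/\frak{a}) + \pi^2\ell^2 - 4\pi^2\sum_{p\in\L^*_+} J_0(\ell|p|)/p^2 \;=\; \text{const in }\ell, \]
which ensures the explicit dependence on the scale $\ell = N^{-\alpha}$ cancels from the final expression. This constant-matching is the most delicate bookkeeping in the argument.
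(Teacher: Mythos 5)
Your overall approach mirrors the paper's: approximate the generalized Bogoliubov conjugation by the exact one (with error controlled via \eqref{eq:bds-tau} and Lemma \ref{lm:growcNcH}), replace $\widehat{\omega}_N(p)$ by $4\pi$, and then carefully match the emerging constant with $E_N^{\rm Bog}$ using the scattering equation and Bessel identities, with the $\ell$-independence of $I_\ell$ allowing the choice $\ell=\frak{a}$. So the plan is sound in outline.

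There is, however, a genuine gap in your Step 3. You assert that the remaining constant sum $\tfrac{1}{2}\sum_{p\in P_L}\big[\sqrt{p^4+2\widehat{\omega}_N(p)p^2}-p^2-\widehat{\omega}_N(p)\big]$ ``matches $S_{\rm Bog}$ up to tail contributions.'' This is not the case: expanding the square root gives $\sqrt{p^4+8\pi p^2}-p^2-4\pi = -\frac{(4\pi)^2}{2p^2}+O(|p|^{-4})$, so that sum grows like $\log N$ as the cutoff $N^{\alpha+\nu}$ is removed, whereas $S_{\rm Bog}$ is an $O(1)$ constant precisely because it carries the extra summand $+\frac{(4\pi)^2}{2p^2}$. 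The paper resolves this by adding and subtracting $\tfrac{1}{4}\sum_{p\in P_L}\widehat{\omega}_N^2(p)/p^2$: one copy completes the constant sum to an absolutely convergent expression (with $|-p^2-\widehat{\omega}_N(p)+\sqrt{p^4+2\widehat{\omega}_N(p)p^2}+\tfrac{1}{2}\widehat{\omega}_N^2(p)/p^2|\leq C|p|^{-4}$, uniformly in $N$) which can then be restricted to $|p|<N^{1/4}$ and matched to $S_{\rm Bog}$; the other copy cancels, up to an $O(N^{-3\nu})$ tail, against the term $\tfrac{1}{4}\sum_p\widehat{\omega}_N^2(p)/p^2$ that automatically appears when the scattering equation \eqref{eq:eta-scat} is used to expand $\tfrac{1}{2}\sum_p\widehat{\omega}_N(p)\eta_p$. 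Your proposal invokes the scattering equation but never exhibits this pair of mutually cancelling logarithmically divergent pieces, so as written the constant would be off by a quantity of order $\log N$, which is fatal at the $O(1)$ precision claimed. You would need to make this add-and-subtract structure explicit (or supply an equivalent regularization) to close the argument.
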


\begin{proof}
Proceeding very similarly as in \cite[Lemma 5.3]{BBCS4}, using the bounds (\ref{eq:bds-tau}), we obtain 
\be \begin{split} \label{eq:lm3.4-1}
e^{-B_\t} \cR_N^{\text{Bog}} e^{B_\t} =\;&  C_{\cR} +   \fra 12\sum_{p\in P_L}\big(\sqrt{F_p^2- G_p^2}-F_p\big)+\sum_{p\in P_L}\sqrt{ F_p^2-G_p^2} \, a_p^*a_p+\d_1
\end{split}\ee 
with $C_\cR$, $F_p, G_p$ as defined in \eqref{eq:defC-cR} and, respectively,  (\ref{eq:defFpGp}), and where 
\begin{equation*}
\label{eq:errorQ}
\pm \delta_1 \leq \fra C N (\log N)^2 (\cH_N+1)(\cN_++1)\,.
\end{equation*}
We have $F_p^2 - G_p^2 = (F_p - G_p) (F_p + G_p) = |p|^4 + 2 p^2 \widehat{\omega}_N (p)$. With the estimate (see the definition (\ref{eq:defomegaN}), recall from Lemma \ref{lm:eta} that 
$|g_N| \leq C$ and use the continuity of $\widehat{\chi}$ at the origin) 
\[ |\widehat{\o}_N(p)-\widehat{\o}_N(0)|\leq C|p|N^{-\a} \]
we can bound 
\[ \begin{split}
\Big|\sum_{p\in P_L}\Big[\sqrt{|p|^4+2\widehat{\o}_N(p) p^2} &-\sqrt{|p|^4+2\widehat{\o}_N(0) p^2}\Big]\; \langle \xi , a_p^*a_p \xi \rangle \Big|\\ & \leq CN^{-\a}\sum_{p\in P_L}|p|\langle \xi, a^*_p a_p \xi \rangle \leq CN^{-\a} \langle \xi , \cK \xi \rangle \,.
\end{split} \]
With $|\sqrt{p^4+2\widehat{\o}_N(0)p^2}-\sqrt{p^4+8\pi p^2}| \leq C|\widehat{\o}_N(0)-4\pi|\leq  C (\log N)/N$ (see (\ref{eq:omegahat0}) and (\ref{eq:intpotf})), we conclude that 
\begin{equation} \label{eq:BogSpectrum}
	\begin{split}
	\sum_{p\in P_L}\sqrt{ F_p^2-G_p^2} \; a_p^*a_p = \sum_{p\in P_L}\sqrt{p^4+8\pi p^2}\; a_p^*a_p +\d_2
	\end{split}
	\end{equation}
	where $\pm\d_2  \leq C N^{-1} (\log N) (\cK+1)$, for all $\alpha \geq 1$.

Let us now consider the constant term on the r.h.s. of \eqref{eq:lm3.4-1}. From (\ref{eq:defC-cR}) and (\ref{eq:defFpGp}), we obtain (adding and subtracting the factor $\sum_{p\in P_L} \widehat{\o}_N^2(p) / (4p^2)$)  
\be \label{eq:lm3.4-2}
	\begin{split}
	&	  C_{\cR} +   \fra 12\sum_{p\in P_L}\big(\sqrt{F_p^2- G_p^2}-F_p\big) \\
&=\frac N2 \big(\widehat{V}(\cdot/e^N)*\widehat{f}_{N,\ell}\big)(0) (N-1)  + \frac 12\sum_{p\in \L^*_+} \widehat{\o}_N(p)\eta_p - \fra 14 \sum_{p\in P_L }\fra{\widehat{\o}_N^2(p)}{p^2} \\
&\hskip 2cm +\fra 12\sum_{p\in P_L} \bigg(-p^2 -\widehat{\o}_N(p) +\sqrt{p^4+2\widehat\o_N(p)p^2 } +\frac 12  \fra{\widehat{\o}_N^2(p)}{p^2} \bigg)\,.
	\end{split}\ee
Expanding the square root, we find 
\begin{equation}\label{eq:bdB-4} \Big| -p^2 -\widehat{\o}_N(p) +\sqrt{p^4+2\widehat\o_N(p)p^2 } +\frac 12  \fra{\widehat{\o}_N^2(p)}{p^2}  \Big| \leq C |p|^{-4} \end{equation} 
uniformly in $N$. Up to an error vanishing as $N^{-1/2}$, we can therefore restrict the sum on the last line of (\ref{eq:lm3.4-2}) to $|p| < N^{1/4}$. After this restriction, we can use $| \widehat{\omega}_N (p) - 4\pi | \leq C |p| N^{-\alpha} + C(\log N) / N$, to replace $\widehat{\omega}_N (p)$ by $4\pi$. Comparing with (\ref{eq:EBog}) (and noticing that (\ref{eq:bdB-4}) remains true, if we replace $\widehat{\omega}_N (p)$ with $4\pi$), we conclude that 
\be \label{eq:SBog}
 \Big| \fra 12\sum_{p\in P_L} \bigg(-p^2 -\widehat{\o}_N(p) +\sqrt{p^4+2\widehat\o_N(p)p^2 } +\frac 12  \fra{\widehat{\o}_N^2(p)}{p^2} \bigg) - S_\text{Bog} \Big| \leq C \frac{\log N}{\sqrt{N}}\,. \ee
Let us now consider the terms on the second line of (\ref{eq:lm3.4-2}). First of all, we observe that, by \eqref{eq:intpotf},  
\[
\frac N2 \big(\widehat{V}(\cdot/e^N)*\widehat{f}_{N,\ell}\big)(0) (N-1) = 2 \pi (N-1) - 2\pi \big(\log(\ell/\aa) -\tfrac 12\big)    + \cO(\log N/N)\,.
\]
As for the second term on the r.h.s. of (\ref{eq:lm3.4-2}), we use the scattering equation (\ref{eq:eta-scat}) and the definition \eqref{eq:defomegaN} to write 
\begin{equation}\label{eq:etaterm}\begin{split}
		\fra 12\sum_{p\in \L^*_+} \widehat{\o}_N(p)\eta_p&= -\fra N4\sum_{p\in \L^*_+} \widehat{\o}_N(p)\fra{(\widehat{V}(\cdot/e^N)*\widehat{f}_{N,\ell})(p)}{p^2} + \fra{1}{4}\sum_{p\in \L^*_+}\fra{\widehat\o^2_N(p)}{p^2}\\
		&\hspace{0.5cm}+ \fra 1{4N}\sum_{p\in \L^*_+}\widehat{\o}_N(p)\fra{(\widehat{\o}_N*\eta)(p)}{p^2}\,.\end{split}\end{equation}
Since $\| \widehat{\omega}_N * \eta \|_\infty \leq \| \widehat{\omega}_N \| \| \eta \| \leq C$, the last term on the r.h.s. of (\ref{eq:etaterm}) is negligible, of order $(\log N)/N$. The second term on the r.h.s. of (\ref{eq:etaterm}), on the other hand, cancels with the third term on the r.h.s. of (\ref{eq:lm3.4-2}), up to a small error of order $N^{-3\nu}$ (because $\sum_{p \in P_L^c} \widehat{\omega}_N^2 (p) / p^2 \leq N^{-3\nu}$, from Lemma \ref{lm:eta} and by the definition (\ref{eq:defPL}) of the set $P_L$). Finally, to estimate the first term on the r.h.s. of (\ref{eq:etaterm}), we use that 
\[ \Big|(\widehat{V}(\cdot/e^N)*\widehat{f}_{N,\ell})(p)- (\widehat{V}(\cdot/e^N)*\widehat{f}_{N,\ell})(0) \Big| \leq C e^{-N} |p|\,,  \]
that $|(\widehat{V}(\cdot/e^N)*\widehat{f}_{N,\ell})(0)- 4 \pi /N | \leq C N^{-2} \log N$ and that, with $g_N$ as defined in (\ref{eq:defomegaN}), $|g_N - 4| \leq C /N$ by (\ref{eq:omegahat0}). We arrive at 
\[ C_{\cR} +   \fra 12\sum_{p\in P_L}\big(\sqrt{F_p^2- G_p^2}-F_p\big) =  2 \pi (N-1) - 2\pi \big(\log(\ell/\aa) -\tfrac 12\big)  + S_\text{Bog} - 4\pi \sum_{p \in \L^*_+} \frac{\widehat{\chi} (\ell p)}{p^2} + \delta_3 \] 
where $\pm \delta_3 \leq C (\log N)/\sqrt{N}$. Let us now compute the remaining sum. To this end, we observe that, denoting by $J_n$ the Bessel function of order $n$, we have 
\be \label{eq:x2chiell}
 \widehat{|.|^2 \chi_\ell} (p) = -8 \pi \ell \left[ \frac{J_1 (\ell p)}{|p|^3} - \frac{\ell}{2} \frac{J_0 (\ell p)}{|p|^2} - \frac{\ell^2}{4} \frac{J_1 (\ell p)}{|p|} \right]\,, \ee
which can be proved similarly to  \eqref{eq:chiellp}.
Hence, with \eqref{eq:chiellp} and \eqref{eq:x2chiell} we find 
 \[ \frac{\widehat{\chi} (\ell p)}{p^2} = - \frac{1}{4\ell^2} \widehat{|.|^2 \chi_\ell} (p) + \pi \frac{J_0 (\ell p)}{p^2} + \frac{\ell^2}{4} \widehat{\chi} (\ell p) \]
 and thus 
 \[\begin{split}  -4\pi \sum_{p \in \L^*_+} \frac{\widehat{\chi} (\ell p)}{p^2} &= -\frac{\pi}{\ell^2}  \widehat{|.|^2 \chi_\ell} (0) - 4 \pi^2 \sum_{p \in \L^*_+} \frac{J_0 (\ell p)}{p^2} - \pi \left[ \chi_\ell (0) - \widehat{\chi}_\ell (0) \right] \\ &= - 4 \pi^2  \sum_{p \in \L^*_+} \frac{J_0 (\ell p)}{p^2} - \pi + \frac{\pi^2 \ell^2}{2} \,, \end{split} \]
where we used that $\widehat{\chi}_\ell (0) = \pi \ell^2$ and $\widehat{|.|^2 \chi_\ell} (0) = \pi \ell^4 /2$. Taking into account that $\ell^2 \simeq N^{-2\alpha}$ and $\alpha \geq 5/2$, we conclude that 
\begin{equation}\label{eq:const-last}  C_{\cR} +   \fra 12\sum_{p\in P_L}\big(\sqrt{F_p^2- G_p^2}-F_p\big) = 2 \pi (N-1)   + S_\text{Bog} + I_\ell  + \delta_3 \end{equation} 
where $\pm \delta_3 \leq C (\log N)/ \sqrt{N}$ and where we defined 
\begin{equation}\label{eq:Iell}  I_\ell = -2\pi \log (\ell / \frak{a}) + \pi^2 \ell^2 - 4\pi^2 \sum_{p \in \L^*_+} \frac{J_0 (\ell p)}{p^2}\,. \end{equation} 
We claim now that the value of $I_\ell$ is independent of the choice of $\ell$ (this is why it is convenient to include the factor $\pi^2 \ell^2$ in the definition of $I_\ell$, despite the fact that this term is very small, for $\ell = N^{-\alpha}$, $\alpha \geq 5/2$). In fact, with the identity  
\[ (\widehat{\log(|\cdot|/\ell)\chi_\ell})(p) = 2\pi\Big[-\fra1{|p|^2} +\fra{J_0(\ell|p|)}{|p|^2}\Big],
\]
and using that $J_0 (z) = 1 - (z/2)^2 + \cO (z^4)$ close to $z=0$, we find that, for any $\ell_1, \ell_2 > 0$, 
\[  -4\pi^2 \sum_{p \in \L_+^*} \frac{J_0 (\ell_1 |p|)}{p^2} + 4 \pi^2 \sum_{p \in \L_+^*} \frac{J_0 (\ell_2 |p|)}{p^2}
= 2\pi \log (\ell_1 / \ell_2) - \pi^2 (\ell_1^2 - \ell_2^2)\,. \]
which implies that $I_{\ell_1} - I_{\ell_2} = 0$. Since $I_\ell$ is independent of $\ell$, we can evaluate the r.h.s. of (\ref{eq:const-last}) choosing for example $\ell = \frak{a}$. This completes the proof of (\ref{eq:EcM}). 
\end{proof}

\subsection{Diagonalization of quadratic Hamiltonian for lower bounds} 
\label{sub:dia-low}


Next, we discuss how to diagonalize the quadratic operator on the r.h.s. of (\ref{eq:prRN-3}). As explained above, this will be used to show lower bounds on the spectrum of the Hamilton operator. For $\g \in (0 ; 1/4)$ we introduce the notation \[ F_p^\g= (1 - C N^{-\g}) p^2 +\widehat{\o}_N(p) \] and recall the definition of $G_p$ in \eqref{eq:defFpGp}. For $p \in \L^*_+$ we consider the coefficient $\upsilon_p$ defined through  
\be \label{eq:alphap}
\tanh(2 \upsilon_p) = \a_p := \frac 1 {G_p} \Big( F_p^\g - \sqrt{(F_p^\g)^2 - G_p^2}\Big)\,,
\ee
and the antisymmetric operator 
\be \label{eq:B-sigma}
B_\upsilon= \frac 12 \sum_{p \in P_L} \upsilon_p (b^*_p b^*_{-p} - b_p b_{-p})\,.
\ee
With Lemma \ref{lm:boundsFG}, it is easy to check that $|\a_p|<1$ hence $\upsilon_p$ is well defined. Moreover, we have $\|\upsilon\|_2 \leq C$ 
, hence in particular 
\be \begin{split}\label{eq:grow-Bups}
e^{-B_\upsilon}(\cN_++1)^ke^{B_\upsilon} & \leq C (\cN_++1)^k \\
\end{split}\ee 
for a constant $C>0$ (depending on $k$),proceeding as in the proof of Lemma \ref{lm:growcNcH}. 

\smallskip

{\it Remark.}  The choice (\ref{eq:alphap}) is motivated by the lower bound \eqref{eq:complete} because, up to negligible errors, $e^{B_v} b_p^* b_p e^{-B_v} = c_p^* c_p$, with $c_p, c_p^*$ defined in \eqref{eq:defcp} and satisfying canonical commutation relations.

\begin{prop} \label{prop:BsigmaRN-LB}Let $V \in L^3(\bR^2)$ as in Theorem \ref{thm:main}, $B_\upsilon$ as in \eqref{eq:B-sigma}, and let $\nu \in (1/6;1/2)$. Then, for $N \in \bN$ large enough, and any $\g \in (0,1/4)$, there exists a constant $C>0$ s.t.  
such that 
	\begin{equation}\label{eq:lbRN}\begin{split}
			e^{-B_\upsilon} \cR_N e^{B_\upsilon} \geq &\; E_N^{\text{Bog}}   +(1- C N^{-\g})  \cD_\g   \\
		  & - C (\log N) \big[ N^{\g-1} (\cN_++1)^2  + N^{-\g}(\cN_++1)\big]\,,
	\end{split}\end{equation}
with $E_N^\text{Bog}$ as defined in (\ref{eq:EcM}) and where $D_\g$ is the quadratic operator
	\[
\cD_\g =\sum_{\substack{p \in P_\g}} \sqrt{p^4 + 8\pi p^2}\,  a^*_p a_p + \tfrac {1 } 2  N^\g \hskip -0.2cm \sum_{ p\in  \L_+^*\setminus P_\g}a^*_p a_p \, \]
where $P_\g = \{ p \in \L^*_+ : |p| \leq N^{\g/2}\}$. 
\end{prop}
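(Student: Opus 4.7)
The plan is to start from the lower bound \eqref{eq:prRN-3} of Proposition \ref{prop:RN}(c) and conjugate it by $e^{B_\upsilon}$. The coefficients $\upsilon_p$ are tailored so as to diagonalise the bosonic quadratic form with coefficients $(F_p^\gamma, G_p)$ by an exact Bogoliubov rotation. As a first observation, since $B_\upsilon$ contains only modes with $p \in P_L$ and $[a_q^* a_q, b_p^*] = 0$ whenever $q \neq \pm p$ (which follows from $b_p^* = a_p^*\sqrt{1-\cN_+/N}$ and $[\cN_+, b_p^*] = b_p^*$), the high-momentum contribution $\tfrac{1}{2} N^\gamma \sum_{p \in \L^*_+ \setminus P_L} a_p^* a_p$ commutes with $B_\upsilon$ and descends unchanged into the $\L^*_+ \setminus P_\gamma$ part of $\cD_\gamma$.

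The core step is the conjugation of $Q^\gamma := \sum_{p \in P_L} F_p^\gamma b_p^* b_p + \tfrac{1}{2}\sum_{p\in P_L} G_p (b_p^* b_{-p}^* + b_p b_{-p})$. Because the $b_p, b_p^*$ obey only the approximate CCR \eqref{eq:bcomm}, the same commutator-expansion strategy used in the proof of Proposition \ref{prop:gsandspectrum} gives, using $\|\upsilon\|_2 \leq C$ and the bounds \eqref{eq:grow-Bups},
\[
e^{-B_\upsilon} Q^\gamma e^{B_\upsilon} \geq \tfrac{1}{2}\sum_{p \in P_L}\big[\sqrt{(F_p^\gamma)^2 - G_p^2} - F_p^\gamma\big] + \sum_{p \in P_L} \sqrt{(F_p^\gamma)^2 - G_p^2}\, a_p^* a_p - \widetilde{\cE},
\]
with $\pm \widetilde{\cE} \leq C(\log N) N^{-1}(\cN_++1)^2 + C N^{-\gamma}(\cN_++1)$.

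To match the diagonal part with $\cD_\gamma$, I would factorise $(F_p^\gamma)^2 - G_p^2 = (1-CN^{-\gamma}) p^2 \bigl[(1-CN^{-\gamma}) p^2 + 2\widehat{\omega}_N(p)\bigr]$ and use $|\widehat{\omega}_N(p) - 4\pi| \leq C|p|N^{-\alpha} + C(\log N)/N$ (from \eqref{eq:omegahat0}) to deduce $\sqrt{(F_p^\gamma)^2 - G_p^2} \geq (1 - CN^{-\gamma})\sqrt{p^4 + 8\pi p^2} - C(\log N)/N$ uniformly on $P_L$. For $p \in P_\gamma$ this reproduces the first sum in $(1-CN^{-\gamma})\cD_\gamma$; for $p \in P_L \setminus P_\gamma$, the cruder estimate $\sqrt{(F_p^\gamma)^2 - G_p^2} \geq F_p^\gamma - G_p = (1-CN^{-\gamma})p^2 \geq \tfrac{1}{2} N^\gamma$ matches the second piece of $\cD_\gamma$. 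For the constant $C_\cR + \tfrac{1}{2}\sum_{p\in P_L}[\sqrt{(F_p^\gamma)^2 - G_p^2} - F_p^\gamma]$ I would re-run the cancellation used in the proof of Proposition \ref{prop:gsandspectrum} (relying on the scattering equation \eqref{eq:eta-scat} and the $\ell$-invariance of \eqref{eq:Iell}); the replacement $p^2 \mapsto (1-CN^{-\gamma})p^2$ inside the square roots contributes at most an additional $O(N^{-\gamma})$ error, and the restriction of the sum to $P_L$ an $O(N^{-3\nu})$ error, both absorbed in the stated bound.

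The main obstacle I anticipate is keeping the errors from the approximate CCR and from the higher-order commutators within the announced size, despite $F_p^\gamma$ being as large as $N^{2(\alpha+\nu)}$ on $P_L$. The saving grace should be the pointwise decay $|\upsilon_p| \leq C|G_p|/F_p^\gamma \leq C(1+|p|/N^\alpha)^{-3/2}/(1+p^2)$ implied by Lemma \ref{lm:boundsFG}, which ensures that $\|\upsilon\|_1$ and $\|\upsilon\|_{H^1}$ grow only logarithmically in $N$, so that each successive commutator produces only $O(\cN_+/N)$ corrections relative to its predecessor.
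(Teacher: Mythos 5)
Your high-level strategy is in the right spirit (diagonalize the quadratic form below $\cR_N$ by a Bogoliubov rotation tailored to $(F_p^\gamma, G_p)$), and several of the auxiliary observations are correct (the factorization $(F_p^\gamma)^2-G_p^2 = (1-CN^{-\gamma})p^2[(1-CN^{-\gamma})p^2+2\widehat{\omega}_N(p)]$, the re-use of the $\ell$-invariance of $I_\ell$ for the constant, the crude bound $\sqrt{(F_p^\gamma)^2-G_p^2}\geq (1-CN^{-\gamma})p^2\geq \tfrac12 N^\gamma$ on $P_L\setminus P_\gamma$). But the central step — the claim that a Gronwall/commutator-expansion of $e^{-B_\upsilon}Q^\gamma e^{B_\upsilon}$ produces an error bounded by $C(\log N)N^{-1}(\cN_++1)^2 + CN^{-\gamma}(\cN_++1)$ — does not hold up, and the paper uses a genuinely different mechanism precisely to avoid this.

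The commutator-expansion of Prop.~\ref{prop:gsandspectrum} (following \cite[Lemma 5.3]{BBCS4}) does not produce pure $\cN_+$ errors: the approximate CCR corrections $[b_p,b_q^*]-\delta_{pq}=-\delta_{pq}\cN_+/N-a_q^*a_p/N$, when multiplied by the coefficients $F_p^\gamma\sim p^2$ (which reach $N^{2(\alpha+\nu)}$ on $P_L$), carry $\cK$-factors. This is visible in $\delta_1$ of \eqref{eq:lm3.4-1}, which is bounded by $\tfrac CN(\log N)^2(\cH_N+1)(\cN_++1)$, not by a pure $\cN_+$ quantity. On $P_L\setminus P_\gamma$ the operator $\cD_\gamma$ has coefficient $\tfrac12 N^\gamma$, which is dwarfed by $p^2$, so a residual $-\tfrac CN(\log N)^2\cK(\cN_++1)$ cannot be absorbed into $(1-CN^{-\gamma})\cD_\gamma$: the inequality you announce would simply fail as a bound between operators on $\cF_+^{\leq N}$. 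The paper sidesteps this in two ways that are absent from your proposal. First, it never conjugates the quadratic form: it uses the per-mode completion-of-square operator inequality \eqref{eq:complete} (Seiringer/Lieb--Solovej), whose residual is $-\tfrac12(F_p^\gamma-\sqrt{(F_p^\gamma)^2-G_p^2})([b_p,b_p^*]+[b_{-p},b_{-p}^*])$, a quantity summing to $O(\log N)$ with only $\cN_+$-corrections — no Gronwall, no accumulation of $\cK$-errors. Second, and crucially, it \emph{truncates the dispersion at $N^\gamma$} before matching $c_p^*c_p$ with $e^{B_\upsilon}b_p^*b_p e^{-B_\upsilon}$ via \eqref{eq:actionBv}: the errors $\delta_N^{(1)},\delta_N^{(2)}$ then carry the prefactor $\sup_p\cE(p)=N^\gamma$ times the $O(1/N)$ smallness of $D_p$, giving $N^{\gamma-1}(\cN_++1)^2$. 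Without that cap you would face prefactors of order $\sup_{p\in P_L}F_p^\gamma\sim N^{2(\alpha+\nu)}$, which is far too large. These two ingredients (algebraic per-mode inequality plus capping at $N^\gamma$) are what make the lower-bound error $\cK$-free, and they are the essential content that the blind proposal misses.
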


\begin{proof} For $p \in P_L$, we  introduce the notation 
\be \label{eq:defcp}
 c_p = \frac{b_p + \a_p b^*_{-p}}{\sqrt{1-\a_p^2}}\,,
\ee
with $\a_p$ defined before  \eqref{eq:B-sigma}. 
A standard completion of the square argument (see  \cite[Sect.\,3]{Sei}, so as  \cite[Thm. 6.3]{LS}) leads to the lower bound
\begin{equation}\label{eq:complete} \begin{split}
&F_p^\g \big(b^*_p b_p + b^*_{-p} b_{-p}\big) + G_p \big(b_p b_{-p} + b^*_p b^*_{-p} \big) \\[6pt]
& \; \geq \sqrt{(F_p^\g)^2 - G_p^2} \,\big(c^*_p c_p + c^*_{-p} c_{-p}\big)  - \frac 12 \Big( F_p^\g - \sqrt{(F_p^\g)^2 - G_p^2}   \Big) \Big([b_p, b^*_p] + [b_{-p}, b^*_{-p}]\Big)\,.
\end{split}\end{equation} 
Expanding the square roots we find
\[ \begin{split}
\Big|\; F_p^\g - \sqrt{(F_p^\g)^2 - G_p^2}\;\Big| & \leq C \frac{|\widehat \o_N(p)|^2}{|p|^2} \\
\Big|\; F_p^\g - \sqrt{(F_p^\g)^2 - G_p^2} - F_p + \sqrt{F_p^2 - G_p^2} \;\Big| &\leq C N^{-\g} \frac{|\widehat \o_N(p)|^2}{|p|^2} 
\end{split}\]
for all $p \in P_L$. Hence, we find 
\[ \begin{split}
\Big|\sum_{p \in P_L}  \Big( F_p^\g - \sqrt{(F_p^\g)^2 - G_p^2} \Big) \, \Big|& \leq C (\log N) \,,
\end{split}\]
and, similarly,
\[ \begin{split}
\Big|\sum_{p \in P_L}  \Big( F_p^\g - \sqrt{(F_p^\g)^2 - G_p^2} - F_p + \sqrt{F_p^2 - G_p^2} \Big)\Big|
\leq C N^{-\g} (\log N) \,.
\end{split}\]
With the commutation relations \eqref{eq:bcomm}, we conclude therefore that 
\[ \begin{split}
 - \frac 12 \sum_{p\in P_L}& \Big( F_p^\g - \sqrt{(F_p^\g)^2 - G_p^2}   \Big) \Big([b_p, b^*_p] + [b_{-p}, b^*_{-p}]\Big) \\
 &\geq  - \frac 12  \sum_{p\in P_L}\Big( F_p - \sqrt{F_p^2 - G_p^2}   \Big)   - C N^{-\g} (\log N) (\cN_++1)\,.
\end{split}\]
Next we consider the first term on the r.h.s. of (\ref{eq:complete}). We  denote $\cE^\g(p)=\sqrt{(F_p^\g)^2 - G_p^2} $. Proceeding as in \eqref{eq:BogSpectrum}, we find
\[
\big|\,\cE^\g(p)- \sqrt{|p|^4 + 8\pi |p|^2}\,\big| \leq  C (N^{-\g} p^2 + N^{-\a}|p| + N^{-1} ( \log N))\,.
\]
Hence  
\be \begin{split} \label{eq:cE-gamma}
\sum_{p \in P_L} \cE^\g(p) c^*_p c_p & \geq   \sum_{p\in P_L} \sqrt{|p|^4 + 8\pi |p|^2}\, c^*_p c_p -  C N^{-\g} \sum_{p\in  P_L}  |p|^2 c^*_p c_p  \\
&\geq   ( 1- C N^{-\g})  \sum_{p \in P_L} \cE(p)\, c^*_p c_p\,,
\end{split}\ee
where we introduced the notation $\cE(p) = \sqrt{|p|^4 + 8\pi |p|^2}$, for $p \in P_\gamma= \{ p \in \L^*_+ : |p| < N^{\gamma/2} \}$, and $\cE (p) = N^\g$, for $p \in P_L \backslash P_\gamma$. Next, we use (see \cite[Lemma 5.3]{BBCS4} for a proof) that  
\be \label{eq:actionBv}
e^{B_\upsilon} b_p  e^{-B_\upsilon} = \cosh(\upsilon_p) b_p + \sinh(\upsilon_p) b^*_{-p} + D_p\,,
\ee
where the remainder operator $D_p$ satisfies
\be \label{eq:Dp-bound}
\| (\cN_++1)^{n/2} D_p \xi\| \leq \frac C N |\upsilon_p| \| (\cN_++1)^{(n+3)/2} \xi\| + \frac C N \int_0^1 ds \| a_p (\cN_++1)^{(n+2)/2} e^{s B_\upsilon} \xi\|\,.
\ee
This implies that, after some algebraic manipulations, that 
\[
 \sum_{p \in P_L} \cE(p) c^*_p c_p =   \sum_{p \in P_L} \cE(p) e^{B_\upsilon} b^*_p b_p  e^{-B_\upsilon} +  \d_N^{(1)} + \d_N^{(2)}
\]
with
\[
\d_N^{(1)} =  \sum_{p \in P_L} \cE(p) D^*_p e^{B_\upsilon} b_p  e^{-B_\upsilon}\,, \qquad \d_N^{(2)}  =  \sum_{p \in P_L} \cE(p) \big( \cosh(\s_p) b_p + \sinh(\s_p) b^*_{-p}\big) D^*_p\,.
\]
With \eqref{eq:Dp-bound} and \eqref{eq:grow-Bups} we easily bound
\[ \begin{split}
\big | \langle \xi, \d_N^{(1)} \xi \rangle \big|
 & \leq \Big(\sup_{p \in P_L} \cE(p) \Big) \sum_{p \in P_L} \| (\cN_++1)^{-1/2} D_p \xi\|  \| (\cN_++1)^{1/2}   e^{B_\upsilon} b_p  e^{-B_\upsilon}\xi\| \\
 & \leq C N^{\g-1} \sum_{p \in P_L} |\upsilon_p| \|(\cN_++1) \xi\| \| a_p  e^{-B_\upsilon}\xi\|\leq C N^{\g-1} \|(\cN_++1) \xi\|^2\,.
\end{split}\]
where we used  \eqref{eq:grow-Bups}. The term $\d_N^{(2)}$ can be bounded similarly. Using again  \eqref{eq:grow-Bups}, we also obtain 
\be\label{eq:diff-bb-aa}
\pm\sum_{p \in P_L}  \Big[e^{B_\upsilon} b^*_p b_p  e^{-B_\upsilon}-  e^{B_\upsilon} a^*_p a_p  e^{-B_\upsilon}\Big]\leq CN^{-1}(\cN_++1)^2\,.\ee
Summarizing, from (\ref{eq:complete})-(\ref{eq:diff-bb-aa}) we obtain that  
\be\label{eq:cR-LB}\begin{split}
&\sum_{p \in P_L} \big((1 - C N^{-\g}) p^2 + \widehat \o_N(p)\big) b^*_p b_p + \frac 12 \sum_{p\in P_L} \widehat{\omega}_N (p) \big[ b^*_p b^*_{-p} + b_p b_{-p} \big] \\
& \; \geq  - \frac 12  \sum_{p \in P_L} \Big( F_p - \sqrt{F_p^2 - G_p^2}   \Big)  + (1 - C N^{-\g}) \sum_{p \in P_\g} \sqrt{p^4 + 8\pi p^2}\, e^{B_\upsilon} a^*_p a_p e^{-B_\upsilon} \\
& \hskip 0.5cm \;+ \tfrac 1 2 N^{\g} \sum_{p \in P_L \setminus P_\g}e^{B_\upsilon} a^*_p a_p e^{-B_\upsilon}- C N^{\g-1} (\cN_++1)^2  -  C(\log N) N^{-\g} (\cN_++1)\,.
\end{split}\ee
Inserting on the r.h.s. of \eqref{eq:prRN-3}, we find 
\[\begin{split}
		\cR_N \geq &\; \cC_\cR - \frac 12  \sum_{p \in P_L} \Big( \sqrt{F_p^2 - G_p^2}  -F_p \Big)  +  \tfrac 1 2 N^{\g} \hskip -0.2cm \sum_{p \in \L^*_+\setminus P_L} a^*_pa_p \\ 
		&+ (1 - C N^{-\g}) \sum_{p \in P_\g} \sqrt{p^4 + 8\pi p^2}\, e^{B_\upsilon} a^*_p a_p e^{-B_\upsilon} +   \tfrac 1 2 N^{\g} \hskip -0.2cm\sum_{p \in P_L \setminus P_\g} e^{B_\upsilon} a^*_p a_p e^{-B_\upsilon}   \\
		&  - C N^{\g-1} (\cN_++1)^2    - C (\log N) N^{-\g} (\cN_++1)\,.
\end{split}\]
With Eq. \eqref{eq:const-last} and \eqref{eq:Iell} (choosing $\ell=\frak{a}$) 
and using the a-priori bound \eqref{eq:grow-Bups} we have
\begin{equation}\label{eq:lbRN-fin}\begin{split}
	e^{-B_\upsilon} \cR_N e^{B_\upsilon} \geq &\; E_N^{\text{Bog}}   + (1-CN^{-\g}) \sum_{\substack{p \in P_\g}} \sqrt{p^4 + 8\pi p^2}\,  a^*_p a_p  \\
	& +   \tfrac 1 2  N^{\g}\sum_{p \in P_L \setminus P_\g}a^*_pa_p  + \tfrac 1 2 N^{\g} \hskip -0.2cm \sum_{p \in \L^*_+ \setminus P_L} e^{-B_\upsilon} a^*_p a_p e^{B_\upsilon} \\
	&   - C (\log N) \big[ N^{\g-1} (\cN_++1)^2  + N^{-\g}(\cN_++1) \big]\,.
\end{split}\end{equation}
Observing that, by (\ref{eq:bcomm}) (in particular, the last two commutators), $[B_v , a_p^* a_p ] = 0$ for all $p \in \L^*_+ \backslash P_L$ (because, from (\ref{eq:B-sigma}), $B_v$ only contains the operators $b_p, b_p^*$ with $p \in P_L$), we arrive at (\ref{eq:lbRN}). 
\end{proof}

\section{Proof of Theorem \ref{thm:main}.} \label{sec:proof} 

In this section we focus on the low energy spectrum of $H_N$. We fix $\alpha =5/2$ and $\nu = 1/5$ (recall the definitions of $\ell=N^{-\a}$ and $P_L= \{ p \in \L^*_+: |p|\leq N^{\a + \nu}\}$ entering in the definitions of the operators $B$, $A$ and $D$ defined in \eqref{eq:defB}, \eqref{eq:defA} and \eqref{eq:defD} respectively).  

First of all, we observe that, from Prop. \ref{prop:expD-cRN} (choosing $\xi_L = e^{B_\tau} \Omega$, with $B_\tau$ defined as in (\ref{eq:defBt})) and Prop. \ref{prop:gsandspectrum}, the ground state energy $E_N$ satisfies  
\begin{equation}\label{eq:EN-UB} E_N \leq E_N^\text{Bog} + C N^{-3/10 + \delta} \end{equation} 
for any $\delta >0$, if $N$ is large enough. Recall here the definition (\ref{eq:EcM}) of $E_N^{\text{Bog}}$. 

Next, we prove lower bounds for the ground state energy and for the excited eigenvalues of $H_N$ below the threshold $E_N + \zeta$. For $k \in \bN$, let $\l_k$ be the $k$-th eigenvalue of $H_N-E_N^{\text{Bog}}$ and $\m_k$ the $k$-th eigenvalue of the quadratic operator 
\be \label{eq:cD}
\cD_\gamma = \sum_{p \in P_\g} \sqrt{|p|^4 + 8\pi p^2} a^*_p a_p + \frac{N^\gamma}{2} \sum_{p \in \L^*_+ \backslash P_\g} a_p^* a_p 
\ee  
with $P_\g = \{ p \in \L^*_+ : |p| \leq N^{\g/2} \}$, as defined in Prop. \ref{prop:BsigmaRN-LB} (note that eigenvalues are counted with multiplicity). We claim that 
\be \label{eq:diff-lk-nuk}
\l_k \geq  \m_k - C  N^{-1/10+ \delta} (1 + \zeta)^2
\ee
for all $k \in \bN$ with $\l_k < \zeta + 1$ and for any $\delta > 0$. In view of the upper bound (\ref{eq:EN-UB}), this bound is enough to show (\ref{eq:EN}) (taking $k=0$) and to prove lower bounds matching (\ref{eq:exc-main}) for all eigenvalues of $H_N - E_N$ below the threshold $\zeta$, if $N$ is large enough. Here we use the fact that the spectrum of the quadratic operator (\ref{eq:cD}) below a fixed $\zeta > 0$ consists exactly of eigenvalues having the form 
\[ \sum_{p \in \L^*_+} n_p \sqrt{|p|^4 + 8\pi p^2} \]
with $n_p \in \bN$ for all $p\in \L^*_+$ and $n_p \not = 0$ for finitely many $p \in \L^*_+$ only (to stay below the threshold $\zeta >0$, we cannot excite modes with $|p| > N^{\gamma/2}$). 

To prove (\ref{eq:diff-lk-nuk}), we apply a localization argument similar to those recently used in \cite{NT,HST}, together with the a-priori bound on the energy of excitations established in Prop. \ref{lm:hpN}. Let $\cL_N$ and $B$ be defined in \eqref{eq:calL} and  \eqref{eq:defB} respectively, and consider the excitation Hamiltonian
 \be \label{eq:calG-1}
 \cG_N=e^{-B}\cL_N e^B\,.\
 \ee
From \cite[Eq. (61)]{CCS}, we have the condensation bound
\begin{equation}\label{eq:cGN-fin} 
\cG_{N} - E_N^\text{Bog} \geq c\, \cN_+ - C  
\end{equation}
for all $N \in \bN$ sufficiently large (here, we used that $2\pi N \geq  E_N^\text{Bog}$). We will make use of the following lemma, which is proven in App.~\ref{App:propRN}.  
\begin{lemma} \label{GN-localization}
		Let $V\in L^3(\bR^2)$ as in Theorem \ref{thm:main}.    Let $\cG_N$ be defined as in \eqref{eq:calG-1}. Let  $f, g:\bR \to [0;1]$ be smooth functions, with $f^2 (x) + g^2 (x)= 1$ for all $x \in \bR$. Moreover, assume that $f (x) = 0$ for $x > 1$ and $f (x) = 1$ for $x < 1/2$. For a small $\eps>0$, we fix $M  = N^{\e}$  and we set $f_M (\cN_+)= f (\cN_+ / M), g_M(\cN_+) = g (\cN_+ / M)$. Then there exists a constant $C>0$ such that 
\begin{equation}\label{eq:cGN-1}\begin{split} \cG_{N}  - E_N^\text{Bog}  \geq\; & f_M (\cN_+)\big(\cG_N -E_N^\text{Bog} \big) f_M(\cN_+) + g_M(\cN_+) \big(\cG_N-E_N^\text{Bog} \big) g_M (\cN_+) - \cE_M \end{split}
\end{equation}
with 
\be \label{eq:cEM-loc}
\cE_M \leq C \frac{N^{1/2}}{M^2}[\|f'\|^2_\infty + \|g'\|^2_\infty](\cH_N+1)\,,\ee
for all $\a>1$, $M \in \bN$ and $N \in \bN$ large enough. 
\end{lemma}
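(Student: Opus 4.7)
The plan is to apply the standard IMS localization identity: for any smooth $f,g$ with $f^2+g^2=1$ and any self-adjoint $A$, the algebraic identity
\[ A = fAf + gAg + \tfrac{1}{2}\big([f,[f,A]] + [g,[g,A]]\big) \]
holds. Specializing to $A = \cG_N - E_N^{\text{Bog}}$ with $f = f_M(\cN_+)$ and $g = g_M(\cN_+)$, and using that $E_N^{\text{Bog}}$ is a scalar commuting with $f_M,g_M$, one obtains
\[ \cG_N - E_N^{\text{Bog}} = f_M(\cG_N - E_N^{\text{Bog}})f_M + g_M(\cG_N - E_N^{\text{Bog}})g_M + \tfrac{1}{2}\big([f_M,[f_M,\cG_N]] + [g_M,[g_M,\cG_N]]\big). \]
So the proof of \eqref{eq:cGN-1} reduces to an operator bound on the double commutators matching \eqref{eq:cEM-loc}.

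Next I would decompose $\cG_N = \cG_N^{\mathrm{d}} + \cG_N^{\mathrm{od}}$, where $\cG_N^{\mathrm{d}}$ commutes with $\cN_+$ (it collects $\cK$, $\cV_N$, all diagonal quadratic pieces, and polynomials in $\cN_+$) and $\cG_N^{\mathrm{od}}$ contains the terms that shift $\cN_+$. Since commutators of $\cG_N^{\mathrm{d}}$ with functions of $\cN_+$ vanish, only $\cG_N^{\mathrm{od}}$ contributes. The explicit form of $\cG_N = e^{-B}\cL_N e^B$, obtainable by expanding the conjugation as in \cite[Sect.\,3]{CCS}, shows that $\cG_N^{\mathrm{od}}$ consists of an off-diagonal quadratic piece $\sum_p c^{(2)}_p\,(b_p^* b_{-p}^* + b_p b_{-p})$ with coefficients controlled by $|\widehat{\omega}_N(p)|$ (shifting $\cN_+$ by $\pm 2$), and a cubic piece of the schematic form $\sqrt{N}\sum_{p,q} c^{(3)}_{p,q}\, b^*_{p+q} a^*_{-p} a_q + \mathrm{h.c.}$ (shifting $\cN_+$ by $\pm 1$). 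Crucially, the cubic piece still carries the $\sqrt{N}$ prefactor inherited from $\cL_N^{(3)}$, because the cubic renormalization $e^A$ has not yet been performed.

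For an operator $O_k$ that shifts $\cN_+$ by $+k$, the identity $[F(\cN_+), O_k] = (F(\cN_+ + k) - F(\cN_+))\,O_k$ together with the mean value estimate $|f_M(\cN_+ + k) - f_M(\cN_+)| \leq |k|\,M^{-1}\,\|f'\|_\infty$ (and analogously for $g_M$) produces
\[ \pm \tfrac{1}{2}\big([f_M,[f_M,O_k]] + [g_M,[g_M,O_k]]\big) \leq \frac{C k^2}{M^2}\big(\|f'\|_\infty^2 + \|g'\|_\infty^2\big)\,|O_k|, \]
where $|O_k|$ stands for a Cauchy--Schwarz estimate of $O_k$ in suitable Sobolev-type norms. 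Applied to the off-diagonal quadratic part and using $\sum_p |\widehat{\omega}_N(p)|^2/|p|^2 \leq C\log N$ (from Lemma \ref{lm:eta}) yields a subleading contribution bounded by $C M^{-2}(\log N)(\|f'\|_\infty^2 + \|g'\|_\infty^2)(\cK+1)$. Applied to the cubic part, a Cauchy--Schwarz estimate in the spirit of \eqref{eq:RN-LB3} (but without the cancellation $\widehat{V}\to N^{-1}\widehat{\omega}_N$ that would be provided by $e^A$) produces the dominant contribution $C N^{1/2}M^{-2}(\|f'\|_\infty^2 + \|g'\|_\infty^2)(\cH_N+1)$. Summing the two yields \eqref{eq:cEM-loc}. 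The main obstacle is precisely this cubic contribution: because we are analyzing $\cG_N$ before the cubic renormalization, the cubic coefficients still scale as $\sqrt{N}\widehat{V}$ rather than $N^{-1/2}\widehat{\omega}_N$, and this is exactly what is responsible for the $N^{1/2}$ in \eqref{eq:cEM-loc}.
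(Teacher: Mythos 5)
Your proposal follows essentially the same route as the paper: the IMS identity reduces the problem to bounding $[h_M,[h_M,\cG_N]]$; only the pieces of $\cG_N$ that shift $\cN_+$ contribute; the mean-value/shift argument bounds the double commutator by $M^{-2}\|h'\|_\infty^2$ times the underlying operator; and the dominant contribution is the cubic term carrying the $\sqrt{N}\,\widehat{V}$ prefactor, which gives the $N^{1/2}$ in \eqref{eq:cEM-loc}, while the off-diagonal quadratic $\widehat{\omega}_N$ term is subleading. The one thing you gloss over is that the representation \eqref{eq:GNeff} of $\cG_N$ is only exact up to the error $\cE_\cG$ of Prop.\ \ref{prop:GN}, whose double commutator must also be controlled; the paper handles this by arguing term-by-term on the expansion underlying (\ref{eq:GeffE}) (citing the analogous step in \cite[Prop.\,4.2]{BBCS3}), obtaining a contribution $\sim M^{-2}(N^{1-\alpha}+N^{-1}(\log N)^{1/2})$ that is indeed subleading, so the omission does not affect the conclusion.
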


Let now $Y \subset \cF_+^{\leq N}$ denote the subspace spanned by the eigenvectors of $\cG_N - E_N^\text{Bog}$ associated with its first $(k+1)$ eigenvalues $\lambda_0 \leq \lambda_1 \leq \dots \leq \lambda_k$. Since, by assumption, $\lambda_k \leq \zeta +1$, we find $Y \subset P_{\zeta+1} (\cF_+^{\leq N})$. We have 
\begin{equation}\label{eq:lblambda} \begin{split} \lambda_k &= \sup_{\substack{\xi \in Y\\\|\xi\|=1}} \langle \xi , (\cG_N - E_N^\text{Bog}) \xi \rangle \\ &\geq  \sup_{\substack{\xi \in Y\\\|\xi\|=1}} \langle \xi, [f_M(\cN_+)(\cG_N-E_N^{\text{Bog}})f_M(\cN_+) + g_M(\cN_+)(\cG_N-E_N^{\text{Bog}})g_M(\cN_+) -\cE_M]\xi\rangle\,.\end{split} 
 \end{equation}
 where $M = N^\epsilon$ for a $\epsilon > 0$ to be specified below (we will choose $\epsilon = 3/4 + 1/20$). From Eq. \eqref{eq:cGN-fin}  we have that, for $N$ large enough (recall $M = N^\epsilon$), 
 \begin{equation}
 	\label{eq:loc1}
 	\begin{split}
 	g_M(\cN_+)(\cG_N-E_N^{\rm{Bog}})g_M(\cN_+) & \geq g_M^2(\cN_+) (c\,\cN_+ - C ) \geq  g_M^2(\cN_+)\big(c' M-C\big) \geq 0\,,
 	\end{split} 
 \end{equation}
since $g_M=0$ for $\cN_+ \leq M/2$. Furthermore, for a normalized $\xi \in Y \subset P_{\zeta+1} (\cF_+^{\leq N})$, we find, combining \eqref{eq:cEM-loc} with Prop. \ref{lm:hpN} and Lemma \ref{lm:ANgrow}, 
\begin{equation}\label{eq:loc3}
\begin{split}
\langle \xi,\cE_M\xi\rangle &\leq C \frac{ N^{1/2}}{M^2} \langle e^{-A}\xi, e^{-A}(\cH_N+1)e^Ae^{-A} \xi\rangle \leq C \frac{ N^{3/2}}{M^2}(\log N)(1+\z)\,, \end{split}
\end{equation}
where we used again $2\pi N \geq E_N^{\text{Bog}}$ to make sure that $e^{-A}\xi$ satisfies the assumptions of Prop. \ref{lm:hpN}. 

Finally, we look at the first term on the r.h.s. of \eqref{eq:lblambda}. With \eqref{eq:lbRN} we find
\begin{equation*}
	\begin{split}
		  & \langle \xi, e^{A}e^{B_\upsilon} f_M(\widetilde \cN_+)e^{-B_\upsilon}e^{-A}(\cG_N-E_N^{\rm{Bog}}) e^{A}e^{B_\upsilon}  f_M(\widetilde\cN_+)e^{-B_\upsilon}e^{-A}  \xi\rangle\\
		  &\hspace{3cm}\geq \langle \xi, e^{A}e^{B_\upsilon} f_M(\widetilde \cN_+)((1- C N^{-\g})\cD_\g -\cE_\g) f_M(\widetilde \cN_+)e^{-B_\upsilon}e^{-A}  \xi\rangle\\
	\end{split}
\end{equation*}
 where we introduced the notation $\widetilde \cN_+ := e^{-B_\upsilon}e^{-A} \cN_+ e^{A}e^{B_\upsilon}$ and 
\[\begin{split} \cE_\g \leq  &\,C (\log N) \big[ N^{\g-1} (\cN_++1)^2 + N^{-\g} (\cN_++1)\big]\,.
	\end{split}\]
Now, with Lemma \ref{lm:ANgrow} and Eq.\eqref{eq:grow-Bups} 
we have
\[\begin{split}
 f_M(\widetilde\cN_+)(\cN_++1)^2 f_M(\widetilde\cN_+) &\leq CM f_M(\widetilde\cN_+)(\widetilde\cN_++1)f_M(\widetilde\cN_+) \leq CM (\widetilde\cN_++1)\,. 
 \end{split}\]
Hence, for any normalized $\xi \in Y \subset P_{\zeta+ 1} (\cF^{\leq N}_+)$, we find, with (\ref{eq:cGN-fin}), 
\[ \begin{split}  \langle \xi, e^{A}e^{B_\upsilon} f_M(\widetilde \cN_+) \cE_\g f_M (\widetilde \cN_+) &e^{-B_\upsilon}e^{-A}  \xi\rangle \\ &\leq C (\log N) (MN^{\gamma-1} + N^{-\gamma}) \langle \xi, (\cN_+ + 1) \xi \rangle \\ &\leq C (\log N) (MN^{\gamma-1} + N^{-\gamma}) \langle \xi, (\cG_N - E_N^\text{Bog} + C) \xi \rangle \\ &\leq C (\log N) (\zeta + 1)(MN^{\gamma-1} + N^{-\gamma}) \,. \end{split} \]
We conclude that 
\[ \begin{split} \lambda_k \geq \; & (1-C N^{-\g}) \sup_{\substack{\xi \in Y\\\|\xi\|=1}} \langle \xi, e^A e^{B_\upsilon} f_M(\widetilde \cN_+) \cD_\g f_M (\widetilde \cN_+) e^{-B_\upsilon}e^{-A}  \xi\rangle \\ &- C (\log N) (\zeta + 1)(MN^{\gamma-1} + N^{-\gamma}) - C N^{3/2} M^{-2} (\log N) (\zeta + 1)\,. \end{split} \]
Next we observe that, for any normalized $\xi \in Y \subset P_{\zeta + 1} (\cF_+^{\leq N})$, we have (again, with (\ref{eq:cGN-fin})) 
\begin{equation}\label{eq:fM-markov} \| f_M (\widetilde \cN_+) e^{-B_\upsilon}e^{-A}  \xi \|^2 \geq 1 - \frac{C}{M} \langle \xi , \cN_+  \xi  \rangle  \geq 1 - \frac{C (\zeta+1)}{M}\,. \end{equation} 
This immediately implies that the linear subspace $X = f_M (\widetilde{\cN}_+) e^{-B_\upsilon}e^{-A} Y \subset \cF_+^{\leq N}$ has dimension $(k+1)$ (like $Y$) and that 
\[ \begin{split} \lambda_k \geq \; &(1-C N^{-\g} - C (\zeta+1) M^{-1}) \sup_{\xi \in X , \|\xi\|=1} \langle \xi, \cD_\g  \xi\rangle \\ &- C (\log N) (\zeta + 1)(MN^{\gamma-1} + N^{-\gamma}) - C N^{3/2} M^{-2} (\log N) (\zeta + 1) \,. \end{split} \]
Thus, by the min-max principle for the eigenvalues of $\cD_\g$, 
\begin{equation} \label{eq:l-mu} \begin{split} \l_k \geq \; &(1-C N^{-\g} - C (\zeta+1) M^{-1}) \mu_k \\ &- C (\log N) (\zeta + 1)(MN^{\gamma-1} + N^{-\gamma}) - C N^{3/2} M^{-2} (\log N) (\zeta + 1)\,. \end{split}  \end{equation} 
Choosing $M = N^{3/4+1/20}$ and $\g = 1/4 - 3/20$, we obtain (using that (\ref{eq:l-mu}) in particular implies that $\mu_k \leq C \lambda_k \leq C (\zeta+1)$)   
\be \label{eq:low-lk-3}
\l_\k \geq \mu_k  - C N^{-1/10+\d} (1+\zeta)^2
\ee
for any $\d>0$. 

Finally, we show upper bounds for all the excited eigenvalues $\lambda_k$ of $H_N - E_N^{\text{Bog}}$ (or equivalently of $\cR_N - E_N^{\text{Bog}}$) with $\lambda_k \leq \zeta + 1$ (we already proved an upper bound for the ground state energy, with $k=0$, at the beginning of this section). We are going to use trial states given by eigenvectors of the operator $\cD_\g$, defined in \eqref{eq:cD}. Fix $k \in \bN \backslash \{ 0 \}$, with $\lambda_k < \zeta$. For $j=1,\dots ,k$, the $j$-th eigenvalue $\mu_j$ of $\cD$ has the form 
\[ \mu_j = \sum_{p \in P_L} n_p^{(j)} \eps_p \]
with $\eps_p = \sqrt{|p|^4 + 8\pi p^2}$ and $n_p^{(j)} \in \bN$, for all $p \in P_L$ (since we consider eigenvalues below a fixed $\zeta> 0$, there is no contribution from the second sum in (\ref{eq:cD}), running over $p \in \L_+^* \backslash P_\g$, and there are only finitely many $p \in P_L$ with 
$n_p^{(j)} \not = 0$). The eigenvector associated with $\mu_j$ has the form
\begin{equation}
	\label{eq:eigenvectorD}
	\xi_j = C_j \prod_{p\in P_L}(a_p^*)^{n_p^{(j)}}  \Omega,
\end{equation}
for an appropriate normalization constant $C_j > 0$ (if the eigenvalue has multiplicity larger than one, eigenvectors are not uniquely defined, but they can always be chosen in this form). We denote by $\text{span} (\xi_1, \dots , \xi_k)$ the linear space spanned by the eigenvectors defined in (\ref{eq:eigenvectorD}). From the min-max principle, we have 
\[ \begin{split}
\l_k &=  \inf_{\substack{Y \subset \cF_+^{\leq N}\\ \dim Y =k}} \sup_{\substack{\xi \in Y\\\|\xi\|=1}} \langle  \xi, e^{-B_\t} e^{-D}(\cR_N-E_N^{\text{Bog}})e^D e^{B_\t} \xi\rangle \\
&\leq \sup_{\substack{\xi \in \text{ span}(\xi_1, \ldots, \xi_k)\\ \|\xi\|=1}}\langle \xi, e^{-B_\t}e^{-D}(\cR_N-E_N^{\text{Bog}})e^D e^{B_\t} \xi\rangle \,.
\end{split}\]
Since $a_p e^{B_\t} \xi = 0$ for all $p \in P_L^c$ and all $\xi \in \text{ span} (\xi_1, \dots , \xi_k)$, we can apply Prop. \ref{prop:expD-cRN} with $\kappa = 10$ (so that $\kappa > 4 (\alpha+\nu -1/2)$) to conclude that  (recall the definition \eqref{eq:RNBog})
\[ \begin{split}
	\l_k 	&\leq \; \sup_{\substack{\xi \in\, \text{Span}(\xi_1,\dots, \xi_k)\\\|\xi\|=1}} \Big[ \langle \xi, \big(e^{-B_\t}\cR^{\text{Bog}}_N e^{B_\t}-E_N^{\text{Bog}} \big)\xi\rangle \\
& \hskip 3.5cm +CN^{-3/10 +\d} \langle \xi, e^{-B_\t} \cK (\cN_++1)^{15} e^{B_\t} \xi \rangle \Big]\,,
\end{split}\]
for any $\delta > 0$. With  Prop. \ref{prop:gsandspectrum} and Lemma \ref{lm:growcNcH} we get
\[ \begin{split}
	\l_k  &\leq \; \sup_{\substack{\xi \in\, \text{Span}(\xi_1,\dots, \xi_k)\\\|\xi\|=1}} \Big[ \langle \xi, \cD \xi\rangle +C N^{-3/10 +\d} \langle \xi, (\cH_N +1) (\cN_++1)^{15}  \xi \rangle \Big] 
\end{split}\]
for any $\delta > 0$ (the value of $\delta$ changes from line to line). Observing that, on $\text{span} (\xi_1, \dots , \xi_k)$, 
\be \begin{split} \label{eq:cV-on-xiL}
		\bmedia{\xi, \cV_N  \xi}  & \leq  C \sum_{\substack{r \in \L^*,\, p,q \in P_L \\ r+p , q-r \in P_L}} |\widehat V(r/e^N)| \| a_{p+r} a_{q} \xi \| \| a_p a_{q+r} \xi \| \\
		& \leq C   \sum_{\substack{r \in \L^*,\, p,q \in P_L \\ r+p , q-r \in P_L}} \frac{1}{|p|^2} \,|q|^2 \| a_{p+r} a_{q} \xi \|^2 \leq C (\log N) \| \cK^{1/2}\cN_+^{1/2} \xi\|^2
	\end{split}\ee
and, again on $\text{span} (\xi_1, \dots , \xi_k)$, $\cN_+ \leq C \cK \leq C \cD_\g \leq C \mu_k \leq C (\zeta + 1)$  (from the lower bound (\ref{eq:diff-lk-nuk}), we have $\mu_k \leq \l_k + 1 \leq \zeta + 1$, for $N$ large enough), we conclude that 
\[ \langle \xi, (\cH_N + 1)( \cN_+ + 1)^{15} \xi \rangle \leq C (\log N) \langle \xi, (\cK+1)(\cN_+ + 1)^{16} \xi \rangle \leq C (\log N) (1 + \zeta)^{17} \]
for all normalized $\xi \in \text{span} (\xi_1, \dots , \xi_k)$. Thus, we find 
\[\begin{split}
	\l_k &\leq \sup_{\substack{\xi \in\, \text{Span}(\xi_1,\dots, \xi_k)\\\|\xi\|=1}} \langle \xi, \cD \xi\rangle + C N^{-3/10+\delta} (1+\z)^{17}  \leq \m_k + C N^{-3/10+\delta} (1+\z)^{17}\,. 
\end{split}\]
Together with the lower bound (\ref{eq:diff-lk-nuk}), this concludes the proof of (\ref{eq:exc-main}) and of Theorem~\ref{thm:main}.


\appendix



\section{Proof of Proposition \ref{prop:RN} and of Lemma \ref{GN-localization}} 
\label{App:propRN}

To show Prop. \ref{prop:RN}, we start from the analysis carried out in \cite[Sec.6 and App.A]{CCS}. First, we study the excitation Hamiltonian 
\be \label{eq:calG}
\cG_{N}:=  e^{-B} \cL_N e^{B} = \sum_{i=1}^4\cG_{N}^{(i)}\,,\qquad  \cG_N^{(i)}:= e^{-B}\cL_N^{(i)} e^B
\ee
with $\cL_N^{(i)}$  defined in \eqref{eq:cLNj} and $B$ defined \eqref{eq:defB}. 

\begin{prop} \label{prop:GN} Let $V\in L^3(\bR^2)$ be compactly supported, pointwise non-negative and spherically symmetric. Let $\cG_{N}$ and $\widehat{\o}_N(p)$ be defined in \eqref{eq:calG} and \eqref{eq:defomegaN} respectively. Then for $\ell=N^{-\a}$ there exists a constant $C>0$ such that 
	\be \label{eq:GNeff} \begin{split}   \cG_{N}  :=\; &    \frac N2 \big(\widehat{V}(\cdot/e^N)*\widehat{f}_{N,\ell}\big)(0) (N-1)\left(1-\frac{\cN_+}{N}\right)  + \frac 12\sum_{p\in \L^*_+} \widehat{\o}_N(p)\eta_p\\
			& + \left[2 N\widehat{V} (0)-\frac N2 \big(\widehat{V}(\cdot/e^N)*\widehat{f}_{N,\ell}\big)(0) \right] \, \cN_+ \, \left(1-\frac{\cN_+}{N}\right) \\
			&+ \frac 1 2 \sum_{ p\in  \L_+^*}\widehat{\omega}_N(p)(b_pb_{-p}+\hc)   + \sqrt{N} \sum_{\substack{p,q \in \L^*_+ :\\ p + q \not = 0}} \widehat{V} (p/e^N) \left[ b_{p+q}^* a_{-p}^* a_q  + \hc \right]  \\
			& +\cH_N + \cE_\cG \,,
	\end{split} \ee
	with $\cE_\cG$ satisfying
	\be \label{eq:GeffE}
	\begin{split}
		| \langle \xi, \cE_{\cG}\, \xi \rangle | \leq\; & C \big( N^{1/2 -\a} + N^{-1}(\log N)^{1/2} \big) \| \cH_N^{1/2}\xi\| \| (\cN_++1)^{1/2}\xi \| \\
		&+ C N^{1-\a}\|(\cN_++1)^{1/2}\xi \| ^2\,,
	\end{split}\ee
	for $\a>1$, and $N \in \bN$ sufficiently large.
	
\end{prop}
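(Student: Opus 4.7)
\medskip

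\noindent\textit{Proof plan for Proposition \ref{prop:GN}.}
The strategy is to compute $\cG_N^{(i)}=e^{-B}\cL_N^{(i)}e^B$ for $i=0,2,3,4$ separately by means of the integral expansion
\[
e^{-B} X e^B = X + \int_0^1 ds\, e^{-sB}[X,B] e^{sB},
\]
iterated once more when the commutator itself still carries a leading contribution. The action of $B$ on modified creation/annihilation operators, computed via \eqref{eq:bcomm}, is essentially that of a Bogoliubov transformation generated by the sequence $\eta_p$. The heart of the argument is then to combine the bare contributions with those coming from the commutators through the scattering equation \eqref{eq:eta-scat} of Lemma \ref{lm:eta}, which produces the renormalized coefficient $\widehat{\omega}_N(p)$ wherever it appears in \eqref{eq:GNeff}. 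Throughout, all a-priori control on the number of excitations and on the energy is provided by Lemma \ref{lm:ANgrow} and by the $L^2$/pointwise bounds for $\eta$ in Lemma \ref{lm:eta}; the analysis mirrors the one performed in \cite[Sect.\,6 and App.\,A]{CCS}, and I would only spell out the adaptations forced by the extra requirement of tracking $\cH_N$ in the error (rather than merely $\cN_+$).

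First, for $\cL_N^{(0)}$, which is a polynomial in $\cN_+$, I would use $[\cN_+, B]=\sum_p\eta_p(b_p^*b_{-p}^*+\text{h.c.})$ together with the bound $\|\eta\|_2\le C\ell$ to show that $e^{-B}\cL_N^{(0)} e^B$ reproduces the first term of (\ref{eq:GNeff}) together with a piece of the $\cN_+$-linear contribution on the third line, up to errors compatible with (\ref{eq:GeffE}). The replacement $\widehat{V}(0)\mapsto (\widehat{V}(\cdot/e^N)*\widehat f_{N,\ell})(0)$ in the constant and $\cN_+$-linear terms, which is numerically of order $N$ when multiplied by the $N^2$ prefactor of $\cL_N^{(0)}$, arises only after combining $\cL_N^{(0)}$ with the contribution coming from the conjugation of $\cL_N^{(4)}=\cV_N$: the commutator $[\cV_N, B]$ produces a quadratic operator $\sim\sum_p \widehat V(p/e^N)\eta_{-p}\,(b_p^*b_{-p}^*+\text{h.c.})$ that, together with the off-diagonal piece of $\cL_N^{(2)}$ and the terms coming from the double commutator $[\cK,B]$, collapses via \eqref{eq:eta-scat} to $\tfrac12 \sum_p\widehat\omega_N(p)\,(b_pb_{-p}+\text{h.c.})$ on the fourth line of (\ref{eq:GNeff}), and to the $\cN_+$-independent constant $\tfrac12\sum_p\widehat\omega_N(p)\eta_p$ through the residual contribution of the normal ordering of $b_p b_p^*$. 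This is the mechanism by which the off-diagonal quadratic term is renormalized.

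For $\cL_N^{(3)}$ no renormalization happens at this stage (that will be the job of the cubic conjugation $e^A$ in the definition \eqref{eq:calR} of $\cR_N$), so I would simply expand $e^{-B}\cL_N^{(3)}e^B$ to first order in $B$ and show that the commutator $[\cL_N^{(3)}, B]$, which is of the form $\sqrt N\sum \widehat V(p/e^N)\eta_q\,b^*a^*a b$ with $b,b^*$ in various positions, can be bounded in absolute value by $CN^{1/2-\alpha}\|\cH_N^{1/2}\xi\|\|(\cN_++1)^{1/2}\xi\|$ via Cauchy–Schwarz and the pointwise bound $|\eta_q|\le C q^{-2}$, which is consistent with (\ref{eq:GeffE}). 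Similarly, the contribution of $[\cV_N, B]$ beyond the off-diagonal quadratic piece already used above is cubic, and can be estimated by the same mechanism. The quartic $\cV_N$ together with the diagonal quadratic part of $\cL_N^{(2)}$ give $\cH_N$ on the last line of (\ref{eq:GNeff}), and the residual $N\sum_p\widehat V(p/e^N)[b_p^*b_p-N^{-1}a_p^*a_p]$ term, approximated via $\widehat V(p/e^N)\simeq \widehat V(0)$ for $|p|\ll e^N$, yields the $[2N\widehat V(0)-\ldots]\cN_+(1-\cN_+/N)$ contribution on the third line (with the part already extracted from $\cL_N^{(0)}$).

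The main obstacle, and the place where I expect to invest the most care, is controlling the remainders from the double commutators $[[\cV_N,B],B]$ and $[[\cL_N^{(3)},B],B]$. These produce a proliferation of normal-ordered polynomials of order up to four in $b,b^*$, some of which contain one or two factors of $\sqrt N\,\widehat V(p/e^N)$ that are naively large; the improvement needed to fit the error bound (\ref{eq:GeffE}) comes from pairing each such factor either with a factor $\eta_q$ (which supplies a decay $q^{-2}$ and, upon summation, the prefactor $\ell^2\sim N^{-2\alpha}$), or with a momentum $p+q$ that can be absorbed into a kinetic factor extracted from $\cH_N^{1/2}$. The logarithmic factor $(\log N)^{1/2}$ in (\ref{eq:GeffE}) originates from summations of the type $\sum_p|\widehat\omega_N(p)|^2/p^2\le C\log N$ already used in (\ref{eq:omegasup2}). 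Once all these bounds are collected, the estimate for $\cE_\cG$ follows; passing from $\cG_N$ to $\cR_N=e^{-A}\cG_N e^A$ by a cubic conjugation (as in \cite{CCS}) then gives part a) of Prop.\ref{prop:RN}, while parts b) and c) are deduced directly from part a) along the lines already indicated in the main text.
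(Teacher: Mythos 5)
Your proposal is correct and takes essentially the same approach as the paper: the paper's actual proof in Appendix B (of Prop.~\ref{prop:GN}) delegates the term-by-term computation of $e^{-B}\cL_N^{(i)}e^B$ to Props.~13--16 of [CCS], obtaining the displayed form \eqref{eq:proofGNell-1}, and then applies the scattering equation \eqref{eq:eta-scat} to collapse the combination $p^2\eta_p + \tfrac{N}{2}\widehat V(p/e^N) + \tfrac12(\widehat V(\cdot/e^N)*\eta)_p$ into $\tfrac12\widehat\omega_N(p)$ in both the constant term and the off-diagonal quadratic coefficient, which is precisely the mechanism you identify. The error estimates for the residual pieces (via the pointwise and $L^2$ bounds on $\eta$ of Lemma~\ref{lm:eta}, Cauchy--Schwarz paired with a kinetic weight, and the $\log N$ from $\sum_p |\widehat\omega_N(p)|^2/p^2$) also match your plan.

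One small imprecision: what produces the off-diagonal contribution $\sum_p p^2\eta_p(b_p^*b_{-p}^*+\hc)$ that enters the scattering-equation cancellation is the \emph{single} commutator $[\cK,B]$, not the double commutator; the double commutator $[[\cK,B],B]$ contributes the constant and diagonal corrections at the next order. Likewise, $[\cV_N,B]$ actually produces a convolution $\tfrac12\sum_q\widehat V((p-q)/e^N)\eta_q$ rather than the pointwise product $\widehat V(p/e^N)\eta_{-p}$ you write; the distinction matters because it is exactly this convolution that completes the scattering equation. Neither slip affects the overall strategy, which is sound.
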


\begin{proof}[Proof of Prop. \ref{prop:GN}] The proof of  Prop. \ref{prop:GN} follows from the analysis performed in \cite[App. A]{CCS}, where we establish properties of the operators $\cG_N^{(i)}= e^{-B}\cL_N^{(i)} e^B$.  Recombining the results of \cite[Prop. 13--16]{CCS} we have  
	\begin{equation} \begin{split} \label{eq:proofGNell-1}
			& \cG_{N} =    \frac{\widehat{V} (0)}{2}\, (N +\cN_+ -1) \, (N-\cN_+)\\
			& \hskip 0.2cm+ \sum_{p \in \L^*_+} \eta_p \Big[p^2 \eta_p + N\widehat{V} (p/e^N) + \frac 12\sum_{\substack{r \in \L^*\\ p+r \neq 0}} \widehat{V} (r/e^N)  \eta_{p+r}\Big]\Big(\frac{N-\cN_+}{N}\Big) \Big(\frac{N-\cN_+ -1}{N}\Big)\\
			&  \hskip 0.2cm+\cK +N\sum_{p \in \Lambda^*_+}  \widehat{V} (p/e^N) a^*_pa_p\Big(1-\fra{\cN_+}{N}\Big) \\
			&  \hskip 0.2cm+ \sum_{p \in \L^*_+} \Big[\; p^2 \eta_p + \frac N2 \widehat{V} (p/e^N) + \frac 12 \sum_{r \in \L^*:\; p+r \neq 0}\hskip -0.5cm \widehat{V} (r/e^N)  \eta_{p+r} \; \Big]  \big( b^*_p b^*_{-p} + b_p b_{-p} \big) \\
			&  \hskip 0.2cm+ \sqrt{N} \sum_{p,q \in \L^*_+ :\, p + q \not = 0} \widehat{V} (p/e^N) \left[ b_{p+q}^* a_{-p}^* a_q  + \hc \right]  +\cV_N   + \cE_{1}
	\end{split} \end{equation}
	with $\cK$ and $\cV_N$ defined in \eqref{eq:cKVN}, and where 
	\begin{equation*}
		| \langle \xi, \cE_1  \xi \rangle | \leq  C N^{1/2 -\a} \| \cH_N^{1/2}\xi\| \| (\cN_++1)^{1/2}\xi \| + C N^{1-\a}  \| (\cN_++1)^{1/2}\xi \|^2 
	\end{equation*}
	for any $\a >1$ and  $\xi \in \cF^{\leq N}_+$. With the  scattering equation \eqref{eq:eta-scat}, we rewrite
	\begin{equation*} 
		\begin{split} 
			&\sum_{p \in \L^*_+}  \eta_p \Big[p^2 \eta_p + N\widehat{V} (p/e^N) + \frac 1 {2} \sum_{r \in \L^*:\; p+r \neq 0}\hskip -0.5cm \widehat{V} (r/e^N)  \eta_{p+r}\Big] \\
			& = \sum_{p \in \L^*_+} \eta_p \Big[ \;\frac N 2 \widehat{V} (p/e^N)  +Ne^{2N} \l_\ell \widehat \chi_\ell(p) +e^{2N} \l_\ell \sum_{q \in \L^*} \widehat \chi_\ell(p-q) \eta_q - \frac 12 \widehat V(p/e^N) \eta_0\;\Big] 
	\end{split} \end{equation*}
With the bounds $|e^{2N}\l_\ell|\leq C N^{2\a-1}$, $\|\eta\| \leq C N^{-\a}$ and $\| \widehat{\chi}_\ell * \eta \| = \| \chi_\ell \check{\eta} \| \leq \| \check{\eta} \| 	\leq C N^{-\a}$, 	
we estimate 
	\[
\Big| e^{2N} \l_\ell \sum_{\substack{p \in \L^*_+,\, q \in \L^*}} \widehat \chi_\ell(p-q) \eta_q\eta_p  \Big|  \leq C N^{2\a-1} \big( \| \chi_\ell\| + \| \hat \chi_\ell \ast \eta \|\big)\| \eta\| \leq CN^{-1}.
\]
On the other hand, using that $\sum_{ p\in  \L_+^*}|\widehat{V}(p/e^N)|/|p|^2\leq CN$ and the bound $|\eta_0| \leq C N^{_2\alpha}$ (see Lemma \ref{lm:eta}), we have
	\[
	\Big| \frac 12 \sum_{p \in \L^*_+} \widehat V(p/e^N) \eta_p \eta_0 \Big| \leq C N^{1-2\a}\,.
	\]
	Finally,  by the definition \eqref{eq:defeta} of $\eta_p$ and using again $|\eta_0| 
	\leq C N^{-2\alpha}$ (since we need to add the zero momentum mode), we rewrite
	\[ \begin{split}
		\frac N 2 \sum_{p \in \L^*_+}  \widehat{V} (p/e^N) \eta_p 
		=\; & \frac{N^2}2 \Big( \widehat{V}(\cdot/e^N)*\widehat{f}_{N,\ell}\big)(0)-  \widehat V(0) \Big) +  \cE_2
	\end{split}\]
	with $\pm  \cE_2 \leq C N^{1-2\a}$. With \eqref{eq:defomegaN}
	We conclude that
	\begin{equation}\label{eq:1lineG} \begin{split} 
			& \sum_{p \in \L^*_+} \eta_p  \Big[p^2 \eta_p + N\widehat{V} (p/e^N) + \frac 1 {2} \sum_{\substack{r \in \L^*\\ p+r \in \L^*_+}} \widehat{V} (r/e^N)  \eta_{p+r}\Big]\Big(\frac{N-\cN_+}{N}\Big) \Big(\frac{N-\cN_+ -1}{N}\Big)  \\ 
			& = \frac 1 {2}  \left[ \widehat{V}(\cdot/e^N)*\widehat{f}_{N,\ell}\big)(0)- \widehat{V} (0) \right] (N-\cN_+ -1) \left( N-\cN_+ \right)  + \frac 12\sum_{p\in \L^*_+} \widehat{\o}_N(p)\eta_p  + \cE_3 \end{split} \end{equation}
	where $\pm \cE_3 \leq C N^{1-2\a}$. Using again \eqref{eq:eta-scat}, the fourth line of \eqref{eq:proofGNell-1} reads:
	\begin{equation}\begin{split} \label{eq:proofGNellQ}
			& \sum_{p \in \L^*_+} \Big[\, p^2 \eta_p + \frac N2 \widehat{V} (p/e^N) + \frac 12 \sum_{r \in \L^*:\; p+r \in \L^*_+}\hskip -0.5cm \widehat{V} (r/e^N)  \eta_{p+r} \, \Big]  \big( b^*_p b^*_{-p} + b_p b_{-p} \big) \\
			& \quad = \sum_{p \in \L^*_+} \Big[ Ne^{2N} \lambda_\ell \widehat{\chi}_\ell (p) +e^{2N} \lambda_\ell \sum_{q \in \Lambda^*} \widehat{\chi}_\ell (p-q) \eta_q - \frac 1 2 \widehat{V} (p/e^N)  \eta_{0}  \Big] \big( b^*_p b^*_{-p} + b_p b_{-p} \big)\,.
	\end{split} \end{equation}
	We focus on the last two terms on the r.h.s. of \eqref{eq:proofGNellQ}. With $\sum_{ p\in  \L_+^*}|\widehat{V}(p/e^N)|/|p|^2 \leq CN$ and $|\eta_0| \leq CN^{-2\alpha}$ (from Lemma \ref{lm:eta}), we have 
	\[ \begin{split}
		\Big| \sum_{p \in \L^*_+}&\widehat{V} (p/e^N)  \eta_{0} \big( b^*_p b^*_{-p} + b_p b_{-p} \big) \Big| \\
		& \leq  C N^{-2\a}  \bigg[  \sum_{p \in \L^*_+} \frac{|\widehat{V} (p/e^N)|^2}{p^2} \bigg]^{1/2}  \bigg[  \sum_{p \in \L^*_+} p^2 \| a_p \xi \|^2 \bigg]^{1/2}  \|(\cN_++1)^{1/2} \xi \| \\
		& \leq C N^{1/2 - 2\a} \| \cK^{1/2}\xi\|  \|(\cN_++1)^{1/2} \xi \|\,.
	\end{split}\]
	The second term on the right hand side of \eqref{eq:proofGNellQ} can be bounded in position space: 
	\[\begin{split}
		& \Big| \langle \xi,\; e^{2N} \lambda_\ell \sum_{p\in \L^*_+} (\widehat{\chi}_\ell * \eta)(p) (b_p^* b_{-p}^* + b_p b_{-p}) \xi \rangle \Big|  \\
		&\quad\leq CN^{2\a-1} \| (\cN_+ + 1)^{1/2} \xi \| \int_{\L^2} dxdy \,\chi_\ell(x-y) |\check{\eta}(x-y)|\|(\cN_+ + 1)^{-1/2}\check{b}_x\check{b}_y\xi\|\\ 
		&\quad\leq CN^{\a-1}\|(\cN_+ + 1)^{1/2} \xi \| \left[\int_{\L^2} dxdy \, \chi_\ell(x-y) \|(\cN_+ + 1)^{-1/2}\check{a}_x\check{a}_y\xi\|^2\right]^{1/2}.
	\end{split}\]
	The term in parenthesis can be bounded as (see \cite[Eq. (80)]{CCS} for details) 
	\[
	\int_{\L^2} dxdy \, \chi_\ell(x-y) \|(\cN_+ + 1)^{-1/2}\check{a}_x\check{a}_y\xi\|^2 \leq Cq N^{-2\a/q'}\|\cK^{1/2}\xi\|^2
	\]
	for any $q >2$ and $1 < q' < 2$ with $1/q +1/q' =1$. Choosing $q =\log N$, we get
	\[ \begin{split}
		\Big| \langle \xi,e^{2N} \lambda_\ell &\sum_{p\in \L^*_+} (\widehat{\chi}_\ell * \eta)(p) (b_p^* b_{-p}^* + b_p b_{-p}) \xi \rangle \Big|   \\
		&\leq C N^{-1} (\log N)^{1/2}  \| (\cN_+ + 1)^{1/2} \xi \| \|\cK^{1/2}\xi\| \,,
	\end{split}\]
	Combining the previous bounds with (\ref{eq:proofGNellQ}) and using the definition \eqref{eq:defomegaN} we obtain:
	\begin{equation} \label{eq:2lineG}
		\begin{split}
			\sum_{p \in \L^*_+} &\Big[p^2 \eta_p + \frac N2 \widehat{V} (p/e^N) + \frac 1 {2} \sum_{\substack{r \in \L^* : \\ p+r \in \L^*_+}}  \widehat{V} (r/e^N)  \eta_{p+r} \Big]  \big( b^*_p b^*_{-p} + b_p b_{-p} \big) \\ &= \frac 1 2 \sum_{p \in \L^*_+} \widehat{\o}_N (p)\big( b^*_p b^*_{-p} + b_p b_{-p} \big) + \cE_4\,,
	\end{split}\end{equation}
	with \[ | \langle \xi, \cE_4  \xi \rangle| \leq C  N^{-1} (\log N)^{1/2}  \|(\cN_++1)^{1/2}\xi\| \| \cK^{1/2}\xi \|,\] 
	if $\alpha > 1$. 
	Combining (\ref{eq:proofGNell-1}) with \eqref{eq:1lineG} and (\ref{eq:2lineG}) we conclude the proof of Prop. \ref{prop:GN}.
\end{proof}

We are now ready to complete the proof of Prop. \ref{prop:RN}; to this end, we have to control the action of the cubic conjugation $e^A$. 
\begin{proof}[Proof of Prop. \ref{prop:RN}]  The proof is based on \cite[Sec. 6]{CCS}, 
except for a few changes that we describe below. We decompose 
	\[ \label{eq:GNeff-deco}
	\cG_{N} =  \widetilde{\cO}_{N} + \cK  +\cZ_N+ \cC_{N} + \cV_N + \cE_\cG
       \]
	with $\cK$ and $\cV_N$ as in (\ref{eq:cKVN}), $\cE_\cG$ as in \eqref{eq:GeffE}, and with 
	\[\begin{split}\label{eq:GNeff-deco2}
			\widetilde{\cO}_{N}  =&\;\frac N2 \big(\widehat{V}(\cdot/e^N)*\widehat{f}_{N,\ell}\big)(0) (N-1)\left(1-\frac{\cN_+}{N}\right)  + \frac 12\sum_{p\in \L^*_+} \widehat{\o}_N(p)\eta_p\\
			& + \left[2 N\widehat{V} (0)-\frac N2 \big(\widehat{V}(\cdot/e^N)*\widehat{f}_{N,\ell}\big)(0) \right] \, \cN_+ \, \left(1-\frac{\cN_+}{N}\right) \\
			\cZ_N = &\; \frac 1 2 \sum_{ p\in  \L_+^*}\widehat{\o}_N(p)(b_pb_{-p}+\hc)\\ 
			\cC_{N} =&\;  \sqrt{N} \sum_{p,q \in \L^*_+ : p + q \not = 0} \widehat{V} (p/e^N) \left[ b_{p+q}^* a_{-p}^* a_q  + \hc \right] \, .		
	\end{split}\]
Here, we take advantage of the analysis in \cite[Sec. 6]{CCS} where properties of $e^{-A}(\cO_N+ \cZ_N+\cC_N + \cK+ \cV_N )e^A$ were established, with
\[ 
\cO_{N}  =\;\frac 1 2\widehat \o_N(0) (N-1)\left(1-\frac{\cN_+}{N}\right)   + \left[2 N\widehat{V} (0)-\frac 12 \widehat \o_N(0) \right] \, \cN_+ \, \left(1-\frac{\cN_+}{N}\right)\,.
\]
In fact, since the operators  $\cO_{N}$ and $\widetilde{\cO}_N$ only differs for some constant terms and for the fact that $\widehat \o_N(0)$ in $\cO_{N}$  is replaced by $N \big(\widehat{V}(\cdot/e^N)*\widehat{f}_{N,\ell}\big)(0) $ in $\widetilde \cO_{N}$, one can easily check that the analysis of \cite[Sec. 6]{CCS} also apply here. One conclude (see  \cite[Sec. 6.6]{CCS}):
	\[\label{eq:cReff-step1}
		\begin{split} 
		 e^{-A}&(\widetilde\cO_N+\cZ_N+\cC_N +\cK + \cV_N)e^A\\ =&\;  \frac N2 \big(\widehat{V}(\cdot/e^N)*\widehat{f}_{N,\ell}\big)(0) (N-1)\left(1-\frac{\cN_+}{N}\right)  + \frac 12\sum_{p\in \L^*_+} \widehat{\o}_N(p)\eta_p\\
		& + \frac N 2  \big(\widehat{V}(\cdot/e^N)*\widehat{f}_{N,\ell}\big)(0) \,\cN_+ \left( 1 - \frac{\cN_+}N \right) \\
			& +  \widehat \o_N(0) \sum_{p\in \L^*_+}a^*_pa_p \Big(1-\frac{\cN_+}{N} \Big)+  \frac 12 \sum_{p\in \L^*_+}  \widehat{\o}_N(p)\big[ b^*_p b^*_{-p} + b_p b_{-p} \big]  \\
			& +\frac 1 {\sqrt N} \sum_{\substack{r,v\in \L^*_+:\\ r\neq-v} } \widehat{\o}_N(r)\big[ b^*_{r+v}a^*_{-r} a_v + \text{h.c.}\big]  + \cH_N  + \cE^{(1)}_\cR\,.
		\end{split}
	\] 
with
	\be \label{eq:ReffE-1}
	\pm  \cE_\cR^{(1)} \leq C  N^{-1/2} (\log N)^{1/2} (\cH_N +1)  \, , 
	\ee
for any $\a \geq 5/2$ and $N$ sufficiently large. To conclude, we note that with Lemma \ref{lm:ANgrow},  
\[ \begin{split}
e^{-A}\cE_\cG e^A & \leq C (\log N)^{1/2} e^{-A} \big( N^{-3/2} \cH_N + N^{-1/2}(\cN_++1) \big) e^A \\
&\leq  C (\log N)^{1/2} \big( N^{-1/2} ( \cH_N +1 )  + N^{-1/2} (\cN_++1) \big)
\end{split}\]
which together with \eqref{eq:ReffE-1} leads to \eqref{eq:cReff} and \eqref{eq:ReffE}. 
\end{proof}

Finally, we  show Lemma \ref{GN-localization}, which is used in Sect. \ref{sec:proof} to localize in the number of excitations and to prove lower bounds on the spectrum of the excitation Hamiltonian. 

\begin{proof}[Proof of Lemma \ref{GN-localization}]  For simplicity, we omit the argument of the functions $f_M(\cN_+)$ and $g_M(\cN_+)$. From a direct computation, we find 
\[ \cG_N - E_N^\text{Bog} = f_M (\cG_N - E_N^\text{Bog}) f_M + g_M (\cG_N - E_N^\text{Bog}) g_M + \cE_M \]
with 
\[\cE_M= \frac{1}{2}\Big([f_M,[f_M,\cG_N]]+[g_M,[g_M,\cG_N]]\Big)\,. \]

From (\ref{eq:GNeff}), we find (with $h$ either $f$ or $g$)
\be\label{eq:commhM} \begin{split}[h_M,[h_M,\cG_N]] = &\; \frac 1 2 \sum_{ p\in  \L_+^*}\widehat{\omega}_N(p)[h_M,[h_M,(b_pb_{-p}+\hc)]]   \\
	&+ \sqrt{N} \sum_{\substack{p,q \in \L^*_+ :\\ p + q \not = 0}} \widehat{V} (p/e^N) [h_M,[h_M,(b_{p+q}^* a_{-p}^* a_q  + \hc)]]  \\
	& + [h_M,[h_M,\cE_\cG]],,
	\end{split}\ee
since all the other terms commute with $\cN_+$. For the commutator with the error term $\cE_\cG$, satisfying the bound in Eq. \eqref{eq:GeffE}, one can argue as in \cite[Prop. 4.2]{BBCS3}, and deduce that 
\[
\begin{split}
	\big| \langle \xi, [h_M,[h_M,\cE_{\cG}]]\, \xi \rangle \big| \leq\; & C M^{-2}\big( N^{1 -\a} + N^{-1}(\log N)^{1/2} \big) \|h_M'\|^2_\infty\| \cH_N^{1/2}\xi\| \| (\cN_++1)^{1/2}\xi \|\,, \\
\end{split}
\]
for any $\a >1$. The point here is that the proof of (\ref{eq:GeffE}) is based on an expansion of $\cE_\cG$ in a sum of terms given by products of creation and annihilation operators, whose commutator with $\cN_+$ has exactly the same form, up to a constant (given by the difference between the number of creation and annihilation operators in the term). Thus, each contribution to the commutator can be estimated as the corresponding term in the expansion for $\cE_\cG$ (the only difference is that terms where the number of creation operators match the number of annihilation operators do not contribute to the commutator).   

As for the quadratic off-diagonal term appearing in \eqref{eq:commhM} we have
\be\label{eq:offdiag}\Big|\langle \xi,\frac 1 2 \sum_{ p\in  \L_+^*}\widehat{\omega}_N(p)(b_pb_{-p}+\hc)\xi \rangle\Big| \leq C (\log N)^{1/2} \| (\cN_++1)^{1/2}\xi\|\|\cK^{1/2}\xi\|\,.  \ee
Since the commutators of $\cN_+$ with this term is proportional to the off-diagonal term itself, it follows that the bound  in \eqref{eq:offdiag} also holds for the commutator with $h_M$, multiplied by a factor $M^{-2}\|h_M'\|_\infty^2$. 
The same argument holds for the cubic term appearing in the first line on the r.h.s. of \eqref{eq:commhM}. Indeed, we  can bound it in position space as 
	\[\begin{split}&\Big|\langle \xi, \sqrt{N} \sum_{\substack{p,q \in \L^*_+ :\\ p + q \not = 0}} \widehat{V} (p/e^N) (b_{p+q}^* a_{-p}^* a_q  + \hc)\xi\rangle\Big|\\
	&\hspace{2cm}\leq\sqrt N \int dx dy \, e^{2N} V(e^N(x-y)) \|a_x a_y \xi \| \| a_x \xi \| \leq C \sqrt{N}\, \| \cV_N^{1/2}\xi\| \| \cN_+^{1/2}\xi \| \,.
	\end{split}\]
This implies \eqref{eq:cEM-loc}.
\end{proof}


\section{Properties of the Scattering Function} \label{App:omega}

For a potential $V$ with finite range $R_0>0$ and scattering length $\aa$, and for a fixed $R>R_0$, we establish properties of the ground state $f_R$ of the Neumann problem
\be \label{tlf2}
\Big( -\D + \frac {1}{2} V(x) \Big) f_{R}(x) =  \l_{R}\,  f_{R}(x)
\ee
on the ball $|x|\leq R$, normalized so that $f_{R}(x)=1$ for $|x|=R$.  Lemma \ref{lm:propomega}, parts (i)--(iv) follows by setting $R=e^N N^{-\a}\ell_0$ in the following lemma. 

\begin{lemma} \label{lm:appA}
	Let $V\in L^3(\bR^2)$ be non-negative, compactly supported and spherically symmetric, and denote its scattering length by $\aa$. Fix $R>0$ sufficiently large and denote by $f_{R}$ the Neumann ground state of  \eqref{tlf2}. Set  $w_{R}=1 -  f_{R}$. Then we have 
	\be \label{eq:bounds-fR}
	0\leq f_R(x)\leq 1
	\ee
	Moreover, for $R$ large enough there is a constant $C>0$ independent of $R$ such that
	\be \label{eq:eigenvalue2}
	\left | \,\l_{R} - \frac{2}{R^2 \log(R/\aa)}  \left( 1 + \frac{3}{4}\fra{1}{\log(R/\aa)} \right) \,\right | \leq  \frac{C}{R^2} \frac{1}{\log^3(R/\aa)}  \,,
	\ee
	and
	\be \label{eq:intpotf2}  
	\left| 
	\int \di x\, V(x) f_{R}(x) - \frac{4\pi}{\log(R/\aa)}\left( 1 + \fra{1}{2\log(R/\aa)} \right)  \right| \leq  \frac{C}{\log^3 (R/\aa)}   \,.
	\ee
	Finally, there exists a constant $C>0$ such that 
	\be \begin{split} \label{eq:w-bounds}
		|w_{R}(x)| &\leq   \chi(|x|\leq R_0)+ C\, \frac{\log(|x|/R) }{\log (\aa/R)} \,\chi(R_0 \leq  |x| \leq R  )  \\[0.2cm]
		|\nabla w_{R}(x)| & \leq \frac{C}{\log (R /\aa)} \frac { \chi(|x|\leq R)} {|x| + 1}
	\end{split} \ee
	for $R$ large enough.
\end{lemma}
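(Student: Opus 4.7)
The plan is to exploit radial symmetry to reduce the problem to an ODE on $[0,R]$, solve it explicitly outside $\text{supp}(V)$ in terms of Bessel functions of order $0$, and match with the interior via the zero-energy scattering solution associated to $\aa$.

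\emph{Positivity and upper bound.} By uniqueness of the Neumann ground state and the $O(2)$-invariance of the problem, $f_R$ is radial. Positivity $f_R \geq 0$ follows from standard ground-state arguments (replacing $f_R$ by $|f_R|$ preserves the Rayleigh quotient, and the strong maximum principle then yields strict positivity in the interior). The upper bound $f_R \leq 1$ is obtained by truncation: $\min(f_R,1)$ still satisfies the boundary condition at $r = R$ and strictly decreases the Rayleigh quotient underlying \eqref{eq:defa0} unless $f_R \leq 1$ already.

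\emph{Explicit form outside $\mathrm{supp}(V)$.} For $r \in [R_0, R]$ the equation reduces to $-\Delta f_R = \lambda_R f_R$, whose radial solutions are $J_0(kr)$ and $Y_0(kr)$ with $k = \sqrt{\lambda_R}$. The Neumann conditions $f_R(R) = 1$, $f_R'(R) = 0$, combined with the Wronskian identity $J_0(z) Y_1(z) - J_1(z) Y_0(z) = -2/(\pi z)$, give
\[ f_R(r) = -\frac{\pi k R}{2}\bigl[Y_1(kR) J_0(kr) - J_1(kR) Y_0(kr)\bigr], \qquad R_0 \leq r \leq R. \]
To fix $\lambda_R$ one matches with the interior. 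Let $\phi$ denote the zero-energy scattering solution, $(-\Delta + V/2)\phi = 0$, normalized so that $\phi(x) = \log(|x|/\aa)$ for $|x| \geq R_0$ (this normalization is exactly the content of definition \eqref{eq:defa0}); a perturbative argument shows that for small $\lambda_R$, $f_R = A + B\phi + O(\lambda_R)$ on $|x| \leq R_0$ for appropriate constants $A,B$. Matching the value and derivative at $r = R_0$ with the Bessel expression above, and using the small-argument expansions $J_0(z) = 1 + O(z^2)$, $J_1(z) = z/2 + O(z^3)$, $Y_0(z) = (2/\pi)[\log(z/2)+\gamma] + O(z^2 \log z)$, $Y_1(z) = -2/(\pi z) + O(z \log z)$, produces a transcendental equation for $\lambda_R$. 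Expanding this equation to two orders in the small parameter $1/\log(R/\aa)$ yields \eqref{eq:eigenvalue2}.

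\emph{Integral identity and pointwise bounds.} Integrating the eigenvalue equation over $|x| \leq R$ and using the Neumann condition gives the key identity $\int V f_R = 2\lambda_R \int f_R$. Combined with $\int f_R = \pi R^2 - \int w_R$ and the bound on $w_R$ derived below (which gives $|\int w_R| \leq C R^2/\log(R/\aa)$), inserting the asymptotics of $\lambda_R$ produces \eqref{eq:intpotf2}. For the pointwise bounds \eqref{eq:w-bounds} on $w_R$ and $\nabla w_R$ in the range $R_0 \leq |x| \leq R$, one differentiates the explicit Bessel formula above and uses $k^2 R^2 \sim 1/\log(R/\aa)$ together with the small-argument asymptotics, noting that $|Y_1'(z)| \sim 1/z^2$ is cancelled by the prefactor $kR$ to give the stated $1/(|x|+1)$ decay. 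Inside $\text{supp}(V)$, the bound $|w_R| \leq 1$ is immediate from the first paragraph; for $|\nabla w_R|$ the hypothesis $V \in L^3(\bR^2)$ enters through elliptic regularity, since the right-hand side of $-\Delta f_R = (\lambda_R - V/2) f_R$ lies in $L^3$, so $f_R \in W^{2,3}_{\mathrm{loc}}$, which embeds into $C^{1,\alpha}$ in dimension $2$ by Morrey's inequality.

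The main technical obstacle lies in the matching in the second paragraph: to extract the explicit constant $3/4$ in the subleading correction to $\lambda_R$ and the factor $1/2$ in \eqref{eq:intpotf2}, one must track the Bessel expansions to the appropriate order and quantify the $O(\lambda_R)$ error in approximating the interior of $f_R$ by the scattering solution $\phi$. Once this bookkeeping is done, the integral identity transfers these constants cleanly into \eqref{eq:intpotf2}.
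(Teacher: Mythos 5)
Your overall strategy coincides with the paper's: both rely on the identity $\int V f_R = 2\lambda_R \int_{|x|\le R} f_R$ (obtained by integrating the eigenvalue equation over the ball and discarding the Laplacian via the Neumann condition), and both analyse $f_R$ outside $\mathrm{supp}(V)$ via small-argument expansions of the radial Helmholtz solution. The arguments you sketch for $0 \le f_R \le 1$, for $\eqref{eq:eigenvalue2}$ (matching the Bessel-function exterior with the zero-energy scattering solution), and for $\eqref{eq:w-bounds}$ are also consistent with what the paper cites from the companion work \cite{CCS}.

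However, there is a genuine gap in your derivation of $\eqref{eq:intpotf2}$, and you have in fact put your finger on it yourself in your closing paragraph without resolving it. You write $\int f_R = \pi R^2 - \int w_R$ and then invoke only the crude estimate $|\int w_R| \le C R^2/\log(R/\aa)$. Since $2\lambda_R \cdot R^2/\log(R/\aa) \sim 1/\log^2(R/\aa)$, this error bound is of exactly the same size as the subleading term $\frac{4\pi}{\log(R/\aa)}\cdot\frac{1}{2\log(R/\aa)}$ that $\eqref{eq:intpotf2}$ claims to identify. To extract the explicit constant $\tfrac12$ you need the precise value $\int w_R = \frac{\pi R^2}{4\log(R/\aa)} + \cO\big(R^2/\log^2(R/\aa)\big)$, not just an upper bound of that order. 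In the paper this is obtained by inserting the two-term expansion of $f_R$ outside the range of $V$ (the analogue of your small-argument Bessel expansion, but carried to the order $\e_R^4$ with $\e_R^2 = \lambda_R R^2$) and integrating $\int_0^R r\,h(r)\,dr$ explicitly, then combining with the two-term asymptotics of $\lambda_R$ from $\eqref{eq:eigenvalue2}$; the two subleading corrections ($+\frac{3}{4\log}$ from $\lambda_R$ and $-\frac{1}{4\log}$ from $\int w_R$) combine to give $+\frac{1}{2\log}$. Until this computation is actually carried out, your statement that ``inserting the asymptotics of $\lambda_R$ produces $\eqref{eq:intpotf2}$'' does not follow from what precedes it, so the step you flag as ``the main technical obstacle'' is indeed the missing content of the lemma, not just bookkeeping.
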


\begin{proof} The proof of Eqs. \eqref{eq:bounds-fR},\eqref{eq:eigenvalue2} and \eqref{eq:w-bounds} can be found in \cite[App. B]{CCS}. It remains to show Eq. \eqref{eq:intpotf2}, which needs to be improved with respect to the analogous bound provided in \cite[Lemma 7]{CCS}. The starting point for its proof is the explicit expression for $f_R$, solution to the Neumann problem \eqref{tlf2}, outside the range of the potential $V$. For any $x \in \L$ s.t. $ R_0 \leq |x| \leq R$ one gets (see \cite[Eq. (192)]{CCS})	
	\be \label{eq:exp-tlf-outsideR}
	\begin{split}
		&\left| f_R (x) -1 +\frac {\e_R^2}{4}  \left(2\log( R/|x|) - 1 +\frac{x^2}{R^2} \right)  -  \frac{\e_R^4}{16} \log(R/|x|) \left(1 + \frac{ 2x^2}{R^2} \right)  \right| \\ &\hspace{9cm} \leq C \e_R^4 (\log \e_R)^2 \,,
	\end{split}
	\ee
where $\e_R^2 =\l_RR^2$. With the scattering equation \eqref{tlf2} we write
	\[
	\int \di x\, V(x)  f_{R}(x) =  2 \int_{|x| \leq R} \di x\, \Delta  f_{R}(x)  + 2 \int_{|x| \leq R} \di x  \lambda_{R}  f_{R}(x)\,.
	\]
	Passing to polar coordinates, and using that $\Delta f_R (x) = |x|^{-1} \partial_r |x| \partial_r f_R (x)$, we find that the first term vanishes. Hence 
	\[
	\int \di x\, V(x)  f_{R}(x)= 2 \lambda_R \int \di x \,  f_{R} (x) \,. \label{B.29}
	\]
	With Eq. \eqref{eq:exp-tlf-outsideR}, denoting
	\[h(r) =  2\log( R/r) - 1 +\frac{r^2}{R^2} -  \frac{\e_R^2}{2} \log(R/r) \left(\fra 12 + \frac{ r^2}{R^2} \right)\]
	and noting that 
	\[
	\int_{R_0}^R h(r) r \di r = \int_{R_0/R}^1 h(Rr) R^2r \di r =\left[\fra{r^2}{32}\left(-r^2\e_R^2-2\e_R^2+4(\e_R^2r^2+\e^2_R-8)\log r+8r^2)\right) \right]^1_{R_0/R}
	\]
	we find  
	\[\label{eq:int-lambda} \begin{split}
		\int \di x\, V(x) f_{R}(x)  &= 4\pi\lambda_R \int_{0}^{R}\di r r \Big( 1-  \frac{\e_R^2}4 h(r) + \cO( \e_R^4 |\log \e_R|^2) \Big)  \\
		& = 2\pi\lambda_R  \Big( R^2 - \fra{\e_R^2 }{8}+ \cO( \e^4_R|\log \e_R|^2 )\Big) \\
		&=  \frac{4 \pi}{ \log (R/\aa)} \Big( 1  +  \frac 3 4 \frac{1}{\log(R/\aa)} +\cO\Big(\fra1{\log^{2}(R/\aa)}\Big)\Big)\\
		&\hspace{2cm}\cdot\Big(1-\fra{1}{4}\fra{1}{\log(R/\aa)}+ \,\,\cO\Big(\frac{1}{\log^2 (R/\aa)}\Big) \Big) \,\\
		& = \frac{4 \pi}{ \log (R/\aa)} \Big(1+\fra{1}{2\log(R/\aa)}+ \,\,\cO\Big(\frac{1}{\log^2 (R/\aa)}\Big) \Big)\,,
	\end{split}\]
where in the third line we used (\ref{eq:eigenvalue2}). This concludes the proof of \eqref{eq:intpotf2}.
	
\end{proof}


\begin{thebibliography}{55}

\bibitem{ABS}
A. Adhikari, C. Brennecke, B. Schlein. Bose–Einstein Condensation Beyond the Gross–Pitaevskii Regime.
{\it Ann. Henri Poincar\'e} {\bf 22}, 1163--1233 (2021).

\bibitem{GroundState2d-1}
J.O. Andersen. Ground state pressure and energy density of an interacting homogeneous Bose gas in two dimensions. {\it Eur. Phys. J. B} {\bf 28}, 389 (2002). 
 
\bibitem{BCOPS}
 G. Basti, S. Cenatiempo, A. Olgiati, G. Pasqualetti, B. Schlein. 
A second order upper bound for the ground state energy of a hard-sphere gas.  
{\it Commun. Math. Phys.} online first (2022).

\bibitem{BDS}
N.~{Benedikter}, G.~{de Oliveira} and B.~{Schlein}. {Quantitative derivation of the Gross-Pitaevskii equation}. {\it Comm. Pure Appl. Math.} {\bf 68} (8),  1399--1482  (2014).

\bibitem{Bishop}
D. J. Bishop and J. D. Reppy. Study of the superfluid transition in two-dimensional
$^4$He films. {\it Phys. Rev. Lett.} {\bf 40}(26), 1978.


\bibitem{BBCS2}
C. Boccato, C. Brennecke, S. Cenatiempo, B. Schlein. The excitation spectrum of Bose gases interacting through singular potentials. {\it J. Eur. Math. Soc.} {\bf 22} (7), 2331–2403 (2020).


\bibitem{BBCS3}
C. Boccato, C. Brennecke, S. Cenatiempo, B. Schlein. Optimal rate for Bose-Einstein condensation in the Gross-Pitaevskii regime. {\it Commun. Math. Phys} {\bf 376}, 1311–1395 (2020).

\bibitem{BBCS4}
C. Boccato, C. Brennecke, S. Cenatiempo, B. Schlein. Bogoliubov Theory in the Gross-Pitaevskii limit. {\it Acta Mathematica} {\bf 222}(2), 219-335 (2019).



\bibitem{B}
N. N. Bogoliubov. On the theory of superfluidity.
{\it Izv. Akad. Nauk. USSR} {\bf 11} (1947), 77. Engl. Transl. {\it J. Phys. (USSR)} {\bf 11} (1947), 23. 

\bibitem{Bossmann}
L. Bo{\ss}mann. Derivation of the 2d Gross--Pitaevskii equation for strongly confined 3d bosons. {\it Arch. Ration. Mech. Anal.} {\bf 238}, no. 2, 541--606 (2020).

\bibitem{B-PhD}
L. Bo{\ss}mann, Effective dynamics of interacting bosons: Quasi-low- dimensional gases and higher order corrections to the mean-field description. {\it PhD thesis}, Eberhard Karls Universit{\"a}t T{\"u}bingen, (2019).

\bibitem{BCaS}
C. Brennecke, M. Caporaletti, B. Schlein. Excitation Spectrum for Bose Gases beyond the Gross-Pitaevskii Regime. Preprint arXiv:2104.13003.

\bibitem{BS}
C. Brennecke, B. Schlein. Gross-Pitaevskii dynamics for Bose-Einstein condensates.  {\it Analysis \& PDE} {\bf 12} (6), 1513--1596 (2019).

\bibitem{BSS}
C. Brennecke, B. Schlein, S. Schraven. Bose-Einstein Condensation with Optimal Rate for Trapped Bosons in the Gross-Pitaevskii Regime. Preprint arXiv 2102.11052 (2021).

\bibitem{BSS1}
C. Brennecke, B. Schlein, S. Schraven. Bogoliubov Theory for Trapped Bosons in the Gross-Pitaevskii Regime. Preprint arXiv 2108.11129 (2021).




\bibitem{CCS}
C. Caraci, S. Cenatiempo, B. Schlein. Bose-Einstein condensation for two dimensional bosons in the Gross-Pitaevskii regime.  {\it J. Stat. Phys.} {\bf 183}(39), (2021).


\bibitem{CG}
S. Cenatiempo, A. Giuliani. Renormalization theory of a two dimensional Bose gas: quantum critical point and quasi-condensed state. {\it J. Stat. Phys.} {\bf 157}, 755--829 (2014). 



\bibitem{DMS-2d} 
A. Deuchert, S. Mayer, R. Seiringer. The free energy of the two-dimensional dilute Bose gas. {I}. Lower bound. Preprint arXiv:1910.03372. Published online by Cambridge University Press (2020). 



\bibitem{DS}
A.Deuchert, R.Seiringer. Gross-Pitaevskii Limit of a Homogeneous Bose Gas at Positive Temperature. {\it Archive for Rational Mechanics and Analysis}, {\bf 236}, 1217--1271 (2020). 



\bibitem{FGJMO}
S. Fournais, T. Girardot, L. Junge, L. Morin, M. Olivieri. The Ground State Energy of a Two-Dimensional Bose Gas. { Preprint arXiv:2206.11100}. 

\bibitem{FNRS-2d}
S. Fournais, M. Napi\'orkowski, R. Reuvers, J.P. Solovej. Ground state energy of a dilute two-dimensional Bose gas from the Bogoliubov free energy functional. {\it Jour.
Math. Phys.} {\bf 60}, 071903 (2019).



\bibitem{HST}
C. Hainzl, B. Schlein, A. Triay. Bogoliubov Theory in the Gross-Pitaevskii Limit: a Simplified Approach. {\it Forum. Math. Sigma} {\bf 10}, e90 (2022). 

\bibitem{H}
P.C. Hohenberg. Existence of long-range order in one and two dimensions. {\it Phys. Rev.} {\bf 158}  383–386, (1967).



\bibitem{LNSS} M.~Lewin, P.~T.~{Nam}, S.~{Serfaty}, J.P. {Solovej}. Bogoliubov spectrum of interacting Bose gases.  \newblock {\it Comm. Pure Appl. Math.} \textbf{68} (3), 413 -- 471, (2014). 


 
\bibitem{LSeY2d}  
E.H. Lieb, R. Seiringer, J. Yngvason. A Rigorous Derivation of the
Gross-Pitaevskii Energy Functional for a Two-dimensional Bose Gas. {\it Commun. Math. Phys.} \textbf{224} (1), 17--31, (2001).

 \bibitem{LS}
 E.H. Lieb, J.P. Solovej. Ground State Energy of the One-Component Charged Bose Gas,
{\it Commun. Math. Phys.} {\bf 217}, 127–163 (2001). Errata {\bf 225}, 219–221 (2002).

\bibitem{LSSY}
E.~H.~Lieb, R.~Seiringer, J. P. Solovej, J. Yngvason. 
The mathematics of the Bose gas and its condensation. 
Oberwolfach Seminars {\bf 34}, Birkh\"auser Basel (2005).

\bibitem{LY2d}
 E.H. Lieb and J. Yngvason. The Ground State Energy of a Dilute Two-Dimensional Bose Gas. {\it J. Stat. Phys.} \textbf{103}, 509--526 (2001). 
 

\bibitem{MS-2d} S. Mayer, R. Seiringer. The free energy of the two-dimensional dilute Bose gas. {II}. Upper bound. Preprint arXiv: 2002.08281. 

\bibitem{MW} D. Mermin, H. Wagner. 
	Absence of ferromagnetism or antiferromagnetism in one-or two-dimensional isotropic Heisenberg models. {\it Physical Review Letters} {\bf 17} (22), 1133 (1966).

\bibitem{GroundState2d-2}
C. Mora, Y. Castin. {\it Ground State Energy of the Two-Dimensional Weakly Interacting Bose
Gas: First Correction Beyond Bogoliubov Theory}. {\it Phys. Rev. Lett.} {\bf 102}, 180404 (2009). 

\bibitem{NNRT}
P.T. Nam, M. Napi{\'o}rkowski, J. Ricaud, A. Triay. 
Optimal rate of condensation for trapped bosons in the Gross--Pitaevskii regime. {\it Analysis \& PDE} {\bf 15} (6), (2021).

\bibitem{NT}
P.T. Nam, A. Triay. 
Bogoliubov excitation spectrum of trapped Bose gases in the Gross-Pitaevskii regime. Preprint arXiv:2106.11949 (2021)




\bibitem{GroundState2d-3}
S. Pilati, J. Boronat, J. Casulleras, S. Giorgini. Quantum Monte Carlo simulation of a
two-dimensional Bose gas. {\it Phys. Rev. A} {\bf 71}, 023605 (2005). 


\bibitem{SP}
L. Pitaevskii and S. Stringari, Bose-Einstein Condensation. International Series of Monographs on Physics (Clarendon Press, Oxford, 2003).

\bibitem{SY} 
K. Schnee, J. Yngvason. 	Bosons in disc-shaped traps: from 3D to 2D. {\it Commun. Math. Phys.} {\bf 269}(3), 659--691 (2007).


\bibitem{Sei}
R. Seiringer. The Excitation Spectrum for Weakly Interacting Bosons. {\it Comm. Math. Phys.} {\bf 
306}, 565–-578 (2011). 











\end{thebibliography}
\end{document}